\DeclareMathAlphabet{\pazocal}{OMS}{zplm}{m}{n}
\newcommand\reallywidehat[1]{\savestack{\tmpbox}{\stretchto{\scaleto{\scalerel*[\widthof{\ensuremath{#1}}]{\kern.1pt\mathchar"0362\kern.1pt}{\rule{0ex}{\textheight}}}{\textheight}}{2.4ex}}\stackon[-6.9pt]{#1}{\tmpbox}}
\newcommand{\Bc}{{\mathcal B}}
\newcommand{\Cpc}{{\pazocal C}}
\newcommand{\Sc}{{\mathcal S}}
\newcommand{\C}{\mathbb{C}}
\newcommand{\F}{\mathbb{F}}
\newcommand{\R}{\mathbb{R}}
\newcommand{\Z}{\mathbb{Z}}
\newcommand{\CC}{\ensuremath{\mathscr{C}}}
\renewcommand{\vec}[1]{\mathbf{#1}}
\newcommand{\cv}{\vec{c}}
\renewcommand{\ev}{\vec{e}}
\newcommand{\xv}{\vec{x}}
\newcommand{\yv}{\vec{y}}
\newcommand{\cov}[1]{{\left| #1\right|}}
\newcommand{\ogv}{\omega_{\textup{GV}}}
\newcommand{\dmin}{d_{\textup{min}}}
\DeclareMathOperator*{\argmin}{argmin}
\newcommand*{\transp}{{\mathpalette\@transpose{}}}
\newcommand*{\@transpose}[2]{\raisebox{\depth}{$\m@th#1\intercal$}}
\newcommand*{\eqdef}{\stackrel{\text{def}}{=}}
\newtheorem{theorem}{Theorem}[section]
\newtheorem{corollary}[theorem]{Corollary}
\newtheorem{remark}[theorem]{Remark}
\newtheorem{definition}[theorem]{Definition}
\newtheorem{proposition}[theorem]{Proposition}
\newtheorem{lemma}[theorem]{Lemma}
\newtheorem{notation}[theorem]{Notation}
\newtheorem{fact}[theorem]{Fact}
  \sloppy \setlength{\parindent}{0pt} \setlength{\parskip}{3pt}
\newcommand{\vol}[1]{V_{n}\left(#1\right)}
\newcommand\dual[1]{#1^{*}}
\newcommand{\CKL}{C_{\textup{KL}}}
\newcommand{\Nb}[2]{N_{#1}( #2 )}
\newcommand{\Neq}[2]{\Nb{#1}{#2}}
\newcommand{\Neqs}[1]{N_{#1}}
\newcommand{\crand}[2]{\Cpc_{#1,#2}}
\newcommand{\uniff}{u_{\textup{full}}}
\newcommand{\unifq}{u}
\newcommand{\unifc}{u_{\CC}}
\newcommand{\unifs}[1]{u_{#1}}
\newcommand{\fber}[1]{f_{\textup{ber},#1}}
\newcommand{\fberTrunc}[1]{f_{\textup{truncBer},#1}}
\newcommand{\eball}{\mathcal B}
\newcommand{\gunif}[1]{u_{#1 \eball}}
\author{Thomas Debris-Alazard$^{1}$} \email{thomas.debris@inria.fr}  
\author{L\'eo Ducas$^{2,3}$} \email{L.Ducas@cwi.nl} 
\author{Nicolas Resch$^{4}$} \email{n.a.resch@uva.nl} 
\author{Jean-Pierre Tillich$^{1}$} \email{jean-pierre.tillich@inria.fr}
\address{$^{1}$ Inria}
\address{$^{2}$ CWI, Amsterdam, The Netherlands}
\address{$^{3}$ Mathematical Institute, Leiden University}
\address{$^{4}$ Informatics' Institute, University of Amsterdam}
\thanks{The work of TDA and JPT was funded by the French Agence Nationale de la
	Recherche through ANR JCJC COLA (ANR-21-CE39-0011) for TDA and ANR CBCRYPT  (ANR-17-CE39-0007) for JPT. Part of this work was done while NR was affiliated with the CWI and partially supported by ERC H2020 grant No.74079 \mbox{(ALGSTRONGCRYPTO)}. LD is supported by an ERC starting Grant 947821 (ARTICULATE)} 
\title{Smoothing Codes and Lattices: \\ Systematic Study and New Bounds}
\begin{document}
	
	\maketitle	 
	
	\begin{abstract}
		In this article we revisit smoothing bounds in parallel between lattices {\em and} codes. Initially introduced by Micciancio and Regev, these bounds were instantiated with Gaussian distributions and were crucial for arguing the security of many lattice-based cryptosystems. Unencumbered by direct application concerns, we provide a systematic study of how these bounds are obtained for both lattices {\em and} codes, transferring techniques between both areas. We also consider multiple choices of spherically symmetric noise distribution.

		We found that the best strategy for a worst-case bound combines Parseval's Identity, the Cauchy-Schwarz inequality, and the second linear programming bound, and this holds for both codes and lattices and all noise distributions at hand. For an average-case analysis, the linear programming bound can be replaced by a tight average count.

		This alone gives optimal results for spherically uniform noise over random codes and random lattices. This also improves previous Gaussian smoothing bound for worst-case lattices, but surprisingly this provides even better results with uniform ball noise than for Gaussian (or Bernoulli noise for codes).

		This counter-intuitive situation can be resolved by adequate decomposition and truncation of Gaussian and Bernoulli distributions into a superposition of uniform noise, giving further improvement for those cases, and putting them on par with the uniform cases.
	\end{abstract}

\section{Introduction}

\subsection{Smoothing bounds.} In either a code or a lattice, smoothing refers to fact that, as an error distribution grows wider and wider, the associated syndrome distribution tends towards a uniform distribution. In other words, the error distribution, reduced modulo the code or the lattice, becomes essentially flat. This phenomenon is pivotal in arguing security of cryptosystems~\cite{MR07,GPV08,DST19}. In information theoretic literature, it is also sometimes referred to as flatness~\cite{LLBS14}.
Informally, by a ``smoothing bound'' we are referring to a result which lower bounds the amount of noise which needs to be added so that the smoothed distribution ``looks'' flat. 

To be more concrete, by a ``flat distribution'', we are referring to a uniform distribution over the ambient space modulo the group of interest. For a (linear) code $\CC \subseteq \F_2^n$, this quotient space is $\F_2^n/\CC$; for a lattice $\Lambda \subseteq \R^n$, it is $\R^n/\Lambda$. We then consider some ``noise'' vector $\vec e$ distributed over the ambient space $\F_2^n$ ({\em resp.} $\R^n$), and attempt to prove that $\vec e \mod \CC$ ({\em resp.} $\vec e \mod \Lambda$) is ``close'' to the uniform distribution over the quotient space $\F_2^n/\CC$ ({\em resp.} $\R^n/\Lambda$). To quantify ``closeness'' between distributions, we will use the standard choice of \emph{statistical distance}.

An important question to be addressed is the choice of distribution for the noise vector $\vec e$. 
In lattice-based cryptography (where such smoothing bounds originated~\cite{MR07}), the literature ubiquitously uses Gaussian distributions for errors, and smoothness is guaranteed for an error growing as the inverse of the minimum distance of the dual lattice. The original chain~\cite{MR07} of argument goes as follows: 
\begin{itemize}
	\item Apply the Poisson summation formula (PSF);
	\item Bound variations via the triangle inequality (TI) over all non-zero dual lattice points;
	\item Bound the absolute sum above via the Banaszczyk tail bound~\cite{B93} for discrete Gaussian (BT).
\end{itemize}
An intermediate quantity called the smoothing parameter introduced by~\cite{MR07} before the last step is also often used in the lattice-based cryptographic literature. Each bounding step is potentially non-tight, and indeed more recent works have replaced the last step by the following~\cite{ADRS15}:
\begin{itemize}
	\item Bound the number of lattice points in balls of a given radius via the Linear Programming bound~\cite{L79} (LP) and ``sum over all radii'' (with care).
\end{itemize}
With this LP strategy, it is in principle possible to also compute a smoothing bound for spherically symmetric distributions of errors other than the Gaussian; however, we are not aware of prior work doing this explicitly. A very natural choice would be uniform distributions over Euclidean balls.

For codes, there are also two natural distributions of errors: Bernoulli noise, {\em i.e.} flip each bit independently with some probability $p$ ({\em a.k.a.} the binary symmetric $\mathrm{BSC}_p$ channel), and a uniform noise over a Hamming sphere of a fixed radius. The latter is typically preferred for the design of concrete and practical cryptosystems~\cite{M78,A11,MTSB13,DST19}, while the former appears more convenient in theoretical works \footnote{A third choice of distribution, described as a discrete-time random walk, also made an appearance for a complexity theoretic result~\cite{BLVW19}. The expert reader may note that the Bernoulli distribution can also be treated as a continuous-time random walk, and both can be analysed via the heat kernel formalism ~\cite[Chap. 10]{C97}.}. Cryptographic interest for code smoothing has recently arisen~\cite{BLVW19,YZ20}, but results are so far limited to codes with extreme parameters and specific ``balancedness'' constraints. However we note that the question is not entirely new in the coding literature (see for instance ~\cite{K07}). In particular, an understanding of the smoothing properties of Bernoulli noise is intimately connected to the \emph{undetected error probability} of a code transmitted through the $\mathrm{BSC}_p$. 

In this light, it is interesting to revisit and systematize our understanding of smoothing bounds, unencumbered by direct application concerns. We find it enlightening to do this exploration in parallel between codes and lattices, transferring techniques back and forth between both areas whenever possible. 

Furthermore, we keep our arguments agnostic to the specific choice of error distribution, allowing us to apply them with different error distributions and compare the results. To compare different (symmetric) distributions, we advocate parametrizing them by the expected weight/norm of a vector. That is, we quantify the magnitude of a noise vector $\vec e$ by $t = \mathbb E(|\vec e|)$ (where $|\cdot|$ denotes either the Hamming weight or the Euclidean norm of the vector). Our smoothing bounds will depend on this parameter, and we consider a smoothing bound to be more effective if for the smoothed distribution to be close to uniform we require a smaller lower-bound on $t$. 

\subsection{Contributions.}

In this work, we collect the techniques that have been used for smoothing, both in the code and lattice contexts. We view individual steps as modular components of arguments, and consider all permissible combinations of steps, thereby determining the most effective arguments. In the following, we outline our systematization efforts, describing the various proof frameworks that we tried before settling on the most effective argument. 

\paragraph{\bf Code smoothing bounds.}

Given the relative dearth of results concerning code smoothing, it seems natural to start by adapting the first argument (PSF+TI+BT) to codes following the proof techniques of~\cite{B93,MR07}. And indeed, the whole strategy translates flawlessly, with only one caveat: it leads to a very poor result, barely better than the trivial bound. Namely, smoothness is established only for Bernoulli errors with parameter very close to $p=1/2$.

The adaptation of Banaszczyk tail bound~\cite{B93} to codes (together with replacing the Gaussian by a Bernoulli distribution) is rather na\"ive, and it is therefore not very surprising that it leads to a disappointing result. Instead, we can also follow the improved strategy for lattices from~\cite{ADRS15}, and resort to linear programming bounds for codes~\cite{B65,MRRW77,ABL01}.
Briefly, by an LP bound we are referring to a result that bounds the number of codewords ({\em resp.} lattice vectors) of a certain weight ({\em resp.} norm) in terms of the dual distance ({\em resp.} shortest dual vector) of the code ({\em resp.} lattice). In both cases, the results are obtained by considering a certain LP relaxation of the combinatorial quantities one wishes to bound, hence the name. Even more, the bounds for codes and lattices are obtained via essentially the same arguments \cite{MRRW77,DL98,CE03}. We therefore find it natural to apply LP bounds in our effort to develop proof techniques which apply to both code- and lattice-smoothing. 

The strategy (PSF+TI+LP) turns out to give a significantly better result, but it nevertheless still appears to be far from optimal. We believe that the application of the triangle inequality in the second step to bound the sum of Fourier coefficients given by the Poisson summation formula leads to the unsatisfactory bound. Indeed, a common heuristic when dealing with sums of Fourier coefficients is that, unless there is a good reason otherwise, the sum should have magnitude roughly the square-root of the order of the group (as is the case for random signs): the triangle inequality is far too crude to notice this.

Instead, we turn to another common upper-bound on a sum, namely, the Cauchy-Schwarz (CS) inequality. It is natural to subsequently apply Parseval's Identity (PI). It turns out that this strategy yields very promising results, upon which we now elucidate. The upper-bound is described in terms of the \emph{weight distribution} of a code, {\em i.e.} the number of codewords of weight $w$ for each $w=1,\dots,n$. Unfortunately, it is quite difficult to understand the weight distribution of arbitrary codes, and the bounds that we do have are quite technical. 

\paragraph{\bf Random codes.} For this reason, we first apply our proof template to \emph{random codes}, as it is quite simple to compute the (expected) weight distribution of a random code. 
Quite satisfyingly,  the simple two steps arguments (PI+CS) already yields \emph{optimal} results for this case, but when the error is sampled uniformly at random from a sphere! That is, we can show that the support size of the error distribution matches the obvious lower bound that applies to \emph{any} distribution that successfully smooths a code: namely, for a code $\CC$ the support size must be at least $\sharp(\F_2^n/\CC)$. Using coding-theoretic terminology, the weight of the error vector that we need to smooth is given by the ubiquitous Gilbert-Varshamov bound
\[
	\omega_{\textup{GV}}(R) =  h^{-1}(1-R)
\] 
which characterizes the trade-off between a random code's rate $R$ and its minimum distance. Here, $h^{-1}$ is the inverse of the binary entropy function.

Moreover, as the argument is versatile enough to apply to essentially all spherical error distributions, 
we also tried applying it to the Bernoulli distribution, and the random walk distribution of~\cite{BLVW19}. Comparing them, we were rather surprised that our argument provided better bounds for the uniform distribution over a Hamming sphere than the other two distributions for the same average Hamming weight.

However, while the (PI+CS) sequence of arguments is more effective when the noise is sampled uniformly on the sphere, we can exploit the fact that the Hamming weight of a Bernoulli-distributed vector is tightly concentrated to recover the same smoothing bound for this distribution. In more detail, we use a ``truncated'' argument. First, we decompose the Bernoulli distribution into a convex combination of uniform sphere distributions. But, by Chernoff's bound, a Bernoulli distribution is concentrated on vectors whose weight lies in a width $\varepsilon n$ interval around its expected weight. Therefore, outside of this interval, the contribution of the Bernoulli on the statistical distance is negligible. Then apply the (PI+CS) sequence of arguments to each constituent distribution close to the expected weight. In this way, we are able to demonstrate that Bernoulli distributions also optimally smooth random codes.

\paragraph{\bf Arbitrary codes.} Next, we turn our attention to smoothing worst-case codes. Motivated by our success in smoothing random codes, we again follow the (PI+CS) sequence of arguments and combine this with LP bounds to derive smoothing bounds when the dual distance of the code is sufficiently large. Again, the sequence of arguments is most effective when the error is distributed uniformly over the sphere, with one caveat: we are also required to assume that the dual code is \emph{balanced} in the sense that it also does not contain any vectors of too large weight. While this assumption has appeared in other works~\cite{BLVW19,YZ20}, we find it somewhat unsatisfactory.

Fortunately, this condition is not required if the error is sampled according to the Bernoulli distribution. But then we run into the same issue that we had earlier with random codes: the (PI+CS) argument, followed by LP bounds, natively yields a lesser result when instantiated with Bernoulli noise. Fortunately, we have already seen how to resolve this issue: we pass to the truncated Bernoulli distribution and decompose it into uniform sphere distributions. This yields a best-of-both-worlds result: we obtain the strongest smoothing bound we can in terms of the noise magnitude, while requiring the weakest assumption on the code.

\paragraph{\bf And back to lattices.} Having now uncovered this better strategy for codes, we can return to lattices and apply our new proof template. Indeed, as we outline in Section~\ref{subsec:fourier-analysis}, the (PI+CS) sequence of arguments can be applied in a very broad context; see, in particular, Corollary~\ref{coro:FB}.

\paragraph{\bf Random lattices.} First, just as we set our expectations for code-smoothing by first studying the random case, we analogously start here by considering random lattices. However, defining a random lattice is a non-trivial task. We actually consider two distributions. The first, which is based on the deep Minkowski-Hlwaka-Siegel (MHS) Theorem, we only abstractly describe. Thanks to the MHS Theorem, we can very easily compute the (expected value) of our upper-bound. 

For the MHS distribution of lattices, we consider two natural error distributions: the Gaussian distribution (which is used ubiquitously in the literature), as well as the uniform distribution over the Euclidean ball. And again, perhaps surprisingly (although less so now thanks to our experience with the code case), we obtain a better result with the uniform distribution over the Euclidean ball. And moreover, the Euclidean ball result is \emph{optimal} in the same sense that we had for codes: the support volume of the error distribution is exactly equal to the covolume of the lattice \footnote{That is, for a lattice $\Lambda$, the volume of the torus $\R^n/\Lambda$. We will denote this quantity by 
$\cov{\Lambda}$ from now on. }. We view the value $w$ such that the volume of the $n$-ball of radius $w$ is equal to the covolume of a lattice (which is half the quantity that appears in Minkowski bound) as being the lattice-theoretic analogue of the Gilbert-Varshamov quantity: 
\[
	w_{\textup{M}/2} \eqdef \frac{\sqrt[n]{\cov{\Lambda}
\; \Gamma(n/2+1)}}{\sqrt{\pi}} \ .
\]

However, as Gaussian vectors satisfy many pleasing properties that are often exploited in lattice-theoretic literature, we would like to obtain the same smoothing bound for this error distribution. Fortunately, our experience with codes also tells us how to recover the result for Gaussian noise from the Euclidean ball noise smoothing bound: we decompose the Gaussian distribution appropriately into a convex combination of Euclidean ball distributions. Together with a basic tail bound, we recover the same smoothing bound for Gaussian noise that we had for the uniform ball noise.

We also study random $q$-ary lattices, which are more concretely defined: following the traditional lattice-theoretic terminology, they are obtained by applying Construction A to a random code. This does lead to a slight increase in the technicality of the argument -- in particular, we need to apply a certain ``summing over annuli'' trick -- but the computations are still relatively elementary. Again, we find that the argument naturally works better when the errors are distributed uniformly over a ball, but we can still transfer the bound to the Gaussian noise.

Interestingly, the same optimal bound has been recovered in a concurrent work \cite[Theorem 1.]{LLB22} for Gaussian distributions. Their arguments are quite unlike ours: \cite{LLB22} uses the Kullback–Leibler divergence in combination with other information-theoretic arguments. However, contrary to our bounds obtained via the (PI + CS) sequence of arguments, \cite[Theorem 1]{LLB22} only holds for random $q$-ary lattices. 

\paragraph{\bf Arbitrary lattices.} Next, we address the challenge of smoothing arbitrary lattices. And again, we follow the (PI+CS) sequence of arguments, and subsequently use the Kabatiansky and Levenshtein bound~\cite{KL78} to obtain a smoothing bound in terms of the minimum distance of the dual lattice. The Kabatiansky and Levenshtein bound is the lattice-analogue of the second LP bound from coding theory. We can directly apply the arguments with both of our error distributions of interest, and again, the uniform ball distribution wins. But the decomposition and tail-bound trick again applies to yield the same result for the Gaussian distribution that we had for the uniform ball distribution.

\paragraph{\bf Comparison.} We summarize how our work improves on the state of the art in Table~\ref{tab:lattice_smoothing_bound} for lattices, and in Table~\ref{tab:code_smoothing_bound} and Figure~\ref{figure:figureCompSmoothingRandCode} for codes. For this discussion, we let $U(\R^n/\Lambda)$ ({\em resp.} $U(\F_2^n/\CC$)) denote the uniform distribution over $\R^n/\Lambda$ ({\em resp.} $\F_2^n/\CC)$, and let $\Delta$ denote the statistical distance. 

In the case of lattices (Table~\ref{tab:lattice_smoothing_bound}), we fix the smoothing bound target to exponentially small, that is we state the minimal value of $F>0$ such that the bound over the statistical distance implies $\Delta(\vec{e} \bmod \Lambda, U(\mathbb R^n / \Lambda)) \leq 2^{-\Omega(n)}$ when the error follows the prescribed distribution and of an average Euclidean length of $\mathbb E(|\vec{e}|_{2}) = F \; n / \lambda_1^*(\Lambda)$.\footnote{In fact, the values in this table guarantee exponentially small statistical distance from the uniform distribution.}

\begin{table}
	\begin{tabular}{l|l|c|l}
		Distribution & Proof strategy & smoothing factor $F$ & General statement \\ \hline
		Gaussian & PSF+TI+BT & $1/(2\pi) \approx 0.15915$		& 
Lemma 3.2 ~\cite{MR07} \\
		Gaussian & PSF+TI+LP & $\CKL/(2\pi \sqrt{e}) \approx  0.12746$		& 
Lemma 6.1 ~\cite{ADRS15} \\
		Gaussian & PI+CS+LP & $\CKL/(2\pi \sqrt{2e}) \approx 0.09013$		& Theorem~\ref{theo:bSDEgauss} (this work) \\ \hline
		Unif. Euclidean ball & PI+CS+LP & $\CKL/(2\pi e) \approx 0.07731$		& Theorem~\ref{theo:bSDEuc} (this work) \\ \hline
		Gaussian & via Unif. + Trunc. & $\CKL/(2\pi e) \approx 0.07731$		& Theorem~\ref{theo:bSDEgauss_better} (this work)
	\end{tabular}
	\caption{Comparison of smoothing bounds for various proof strategies and error distributions. The smoothing constant $F$ is the smallest constant $C$ such that the bounds proves exponential smoothness when the average norm (over $n$, the length of the ambient space) of an error is at least $C$ times the inverse of the minimal distance of the dual lattice. Here $\CKL \approx 2^{0.401}$ denotes the constant that is involved in the Kabatiansky and Levenshtein bound \cite{KL78}.\label{tab:lattice_smoothing_bound}}
\end{table}

In the case of codes we also fix the smoothing bound target to negligible,\footnote{Again, it is the same if we insist the statistical distance to uniform is exponentially small.} but we compare two cases: smoothing bounds for random codes (in average) and for a fixed code (worst case). In Figure \ref{figure:figureCompSmoothingRandCode} we compare the minimal value $F>0$ such that $\mathbb{E}_{\CC}\left(\Delta(\vec{e} \bmod \CC, U(\mathbb \F_{2}^n / \CC))\right) \leq 2^{-\Omega(n)}$ when the error $\vec{e}$ follows the prescribed distribution and with an expectation that is taken over codes of rate $R$. In Table \ref{tab:lattice_smoothing_bound} we make the same comparison but to reach $\Delta(\vec{e} \bmod \CC, U(\mathbb \F_{2}^n / \CC)) \leq 2^{-\Omega(n)}$ for a fixed code $\CC$ such that the minimum distance of its dual $\dual{\CC}$ is known.

	\begin{center}
	\begin{figure}
		\includegraphics[height=6cm]{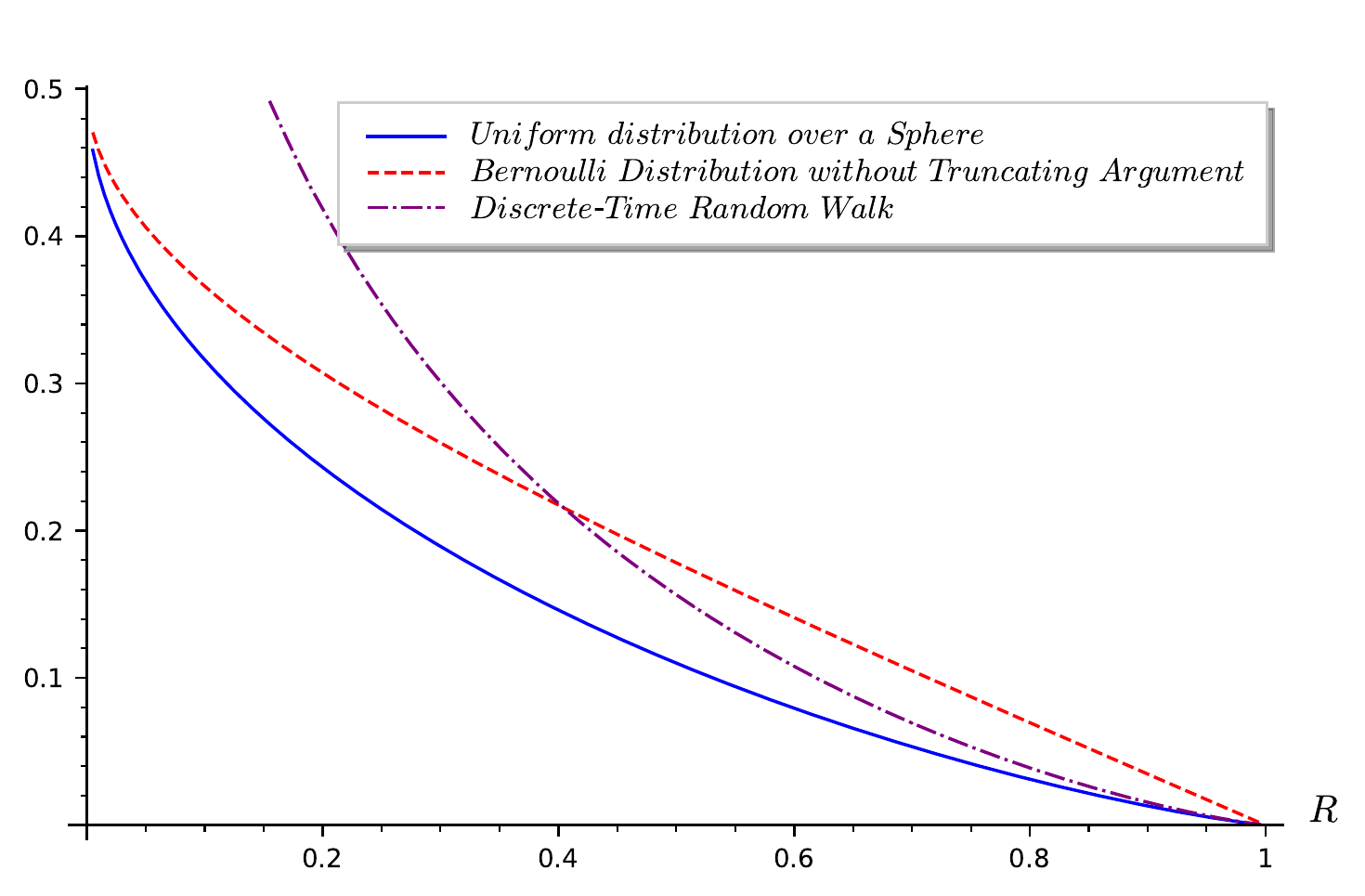}
		\caption{Comparison of smoothing constants for {\em random} codes as a function of their rate $R$ for various error distributions. The smoothing constant is the smallest constant $C$ such that the bounds proves exponential smoothness when the average Hamming weight of an error is at least $C n$.} \label{figure:figureCompSmoothingRandCode}
	\end{figure}
\end{center}

\begin{table}
	\begin{tabular}{l|c|c|l}
		Distribution & smoothing factor $F$ & Balanced-code & General statement \\ \hline
		Bernoulli & $\approx 0.24$		& NO & Eq.~\eqref{eq:BwithBer}, Prop.~\ref{propo:ABL}, \ref{propo:2LPB} \\ \hline
		Discrete Rand. Walk & $\approx 0.27$ & YES & Theorem~\ref{theo:smoothingBoundsUnifRW} \\ \hline
		Unif. Hamming sphere & $\approx 0.17$	& YES & Theorem~\ref{theo:smoothingBoundsUnifRW} \\ \hline
		Bernoulli + Trunc. &	$\approx 0.17$	& NO & Theorem~\ref{theo:finalUBSD}  \\
	\end{tabular}
	\caption{Comparison of smoothing bounds for a code $\CC$ of length $n$ such that its dual $\dual{\CC}$ has minimum distance $0.11n$ (which is the typical case for a code of rate $1/2$) for various error distributions. The smoothing constant $F$ is the smallest constant $C$ such that the bounds proves exponential smoothness when the average Hamming weight of an error is at least $Cn$. Furthermore the balanced-code hypothesis means that we suppose there are no dual codewords $\dual{\vec{c}}\in\dual{\CC}$ of Hamming weight larger than $(1-0.11)n$.\label{tab:code_smoothing_bound}} 
\end{table}

 	\section{Preliminaries: Notations and Fourier Analysis over Locally Compact Abelian Group}\label{sec:prelim}

	\subsection{General Notation.} The notation $x \eqdef y$ means that $x$ is defined as being equal to $y$. Given a set $\mathcal{S}$, its indicator function will be denoted $1_{\mathcal{S}}$. For a finite set $\mathcal{S}$, we will denote by $\sharp\mathcal{S}$ its cardinality. Vectors will be written with bold letters (such as $\vec{x}$). Furthermore, we denotes by $\llbracket a,b \rrbracket$ the set of integers $\{a,a+1,\dots,b\}$.

	The statistical distance between two discrete probability distributions $f$ and $g$ over a same space $\mathcal{S}$ is defined as:
	$$
	\Delta(f,g) \eqdef \frac{1}{2} \sum_{x \in \mathcal{S}} |f(x) - g(x)|.
	$$
	Similarly, for two continuous probability density functions $f$ and $g$ over a same measure space  $\mathcal{E}$, the statistical distance is defined as 
	$$
	\Delta(f,g) \eqdef \frac{1}{2} \int_{\mathcal{E}} |f - g|. 
	$$

	\subsection{Codes and Lattices} We give here some basic definitions and notation about linear codes and lattices.

	{\bf \noindent Linear codes.} In the whole paper, we will deal exclusively with binary linear codes, namely subspaces of $\F_2^n$ 
	for some positive integer $n$.
The space $\F_{2}^{n}$ will be embedded with the Hamming weight $|\cdot|$, namely
	$$
	\forall \vec{x}\in\F_{2}^{n}, \quad |\vec{x}| \eqdef \sharp \left\{ i \in \llbracket 1,n \rrbracket \mbox{ : } x_{i} \neq 0 \right\}. 
	$$
	We will denote by $\mathcal{S}_{w}$ the sphere with center $\mathbf{0}$ and radius $w$; its size is given by $\binom{n}{w}$ and we have $\frac{1}{n}\log_2 \binom{n}{w} = h(w/n) + o(1)$
	where $h$ denotes the binary-entropy, namely $h(x) \eqdef -x \log_2 (x) - (1-x) \log_{2}(1-x)$.

	An $\lbrack n ,k \rbrack$-code $\CC$ is defined as a dimension $k$ subspace of $\F_{2}^{n}$. The rate of $\CC$ is $\frac{k}{n}$. Its minimal distance is given by 
	\begin{align*}
	\dmin(\CC) &\eqdef \min \left\{ |\cv - \cv'| \text{ :  }\cv,\cv'\in \CC \text { and } \cv \neq \cv' \right\} \\
	&= \min \left\{ |\vec{c}| \mbox{ : } \vec{c}\in \CC \mbox{ and } \vec{c} \neq \mathbf{0} \right\}.
	\end{align*}
The number of codewords of $\CC$ of weight $t$ will be denoted by $\Neq{t}{\CC}$,
namely
	$$
	\Neq{t}{\CC} \eqdef \sharp \left\{ \vec{c} \in \CC \mbox{ and } |\vec{c}| = t  \right\}. 
	$$
	The dual of a code $\CC$ is defined as $\dual{\CC} \eqdef \left\{ \dual{\vec{c}} \in \F_{2}^{n} \mbox{ : } \forall \vec{c}\in\CC, \mbox{ } \vec{c}\cdot\dual{\vec{c}} = 0 \right\}$
	where $\cdot$ denotes the standard inner product on $\F_{2}^{n}$.

	{\bf \noindent Lattices.} We will consider lattices of $\mathbb{R}^{n}$ which is embedded with the Euclidean norm $|\cdot|_{2}$, namely
	$$
	\forall \vec{x}\in\mathbb{R}^{n}, \quad |\vec{x}|_{2} \eqdef \sqrt{\sum_{i=1}^{n} x_{i}^{2}}. 
	$$
	We will denote by $\mathcal{B}_{w}$ the ball with center $\mathbf{0}$ and radius $w$; its volume is given by
$$
	\vol{w} \eqdef \frac{\pi^{n/2}w^{n}}{\Gamma(n/2+1)}.
	$$
An $n$-dimension lattice $\Lambda$ is defined as a discrete subgroup of $\mathbb{R}^{n}$. The covolume $|\Lambda| \eqdef \textup{vol}\left( \mathbb{R}^{n}/\Lambda\right)$ of $\Lambda$ is the volume of any fundamental parallelotope. The minimal distance of $\Lambda$ is given by
	$
	\lambda_{1}(\Lambda) \eqdef \min \left\{ |\vec{x}|_{2} \mbox{ : } \vec{x}\in \Lambda \mbox{ and } \vec{x} \neq \mathbf{0} \right\}. 
	$
	The number of lattice points of $\Lambda$ of weight $\leq t$ will be denoted by $\Nb{\leq t}{\Lambda}$, namely
	$$
	\Nb{\leq t}{\Lambda} \eqdef \sharp \left\{ \vec{x} \in \Lambda \mbox{ : } |\vec{x}|_{2} \leq  t  \right\}. 
	$$

	\subsection{Fourier Analysis} \label{subsec:fourier-analysis}

	We give here a brief introduction to Fourier analysis over arbitrary locally compact Abelian groups. Our general treatment will allow us to apply directly some basic results in a code and lattice context, obviating the need in each case to introduce essentially the same definitions and to provide the same proofs.

	Corollary \ref{coro:FB} at the end of this subsection is the starting point of our smoothing bounds: all of our results are obtained by using different facts 
to bound the right hand side of the inequality. 
	\newline

	{\bf \noindent Groups and Their Duals.} In what follows $G$ will denote a locally compact Abelian group. Such a group admits a Haar measure $\mu$.  For instance $G = \mathbb{R}$ with $\mu$ the Lebesgue measure $\lambda$, or $G = \F_{2}^{n}$ with $\mu$ the counting measure $\sharp$.

	The dual group $\widehat{G}$ is given by the continuous group homomorphisms $\chi$ from
	$G$ into the multiplicative group of complex numbers of absolute value $1$, and it is again
	a locally compact Abelian group.
In Figure \ref{fig:groups} we give groups, their duals as well as their associated Haar measures that will be considered in this work.

	\begin{figure}
		\renewcommand{\arraystretch}{1.4}.
		\begin{tabular}{|c|c||c|c|}
			\hline
			$G$ & $\mu$ & $\widehat{G}$ & $\mu$ \\
			\hline\hline
			$\mathbb{F}_{2}^{n}$ & $\frac{1}{2^{n}} \; \sharp$ &  & \\
			$\mathbb{F}_{2}^{n}/\CC$ & $ \frac{\sharp\CC}{2^{n}}\; \sharp$ & $\widehat{\mathbb{F}_{2}^{n}/\CC} \simeq \dual{\CC}$ & $\sharp$ \\
			$\CC$ & $\frac{1}{\sharp\CC}\;\sharp$ &  & \\
			\hline
			$\mathbb{R}^{n}$ & $\lambda$ &  &  \\
			$\mathbb{R}^{n}/\Lambda$ & $\frac{1}{|\Lambda|}\; \lambda$ & $\widehat{\mathbb{R}^{n}/\Lambda} \simeq \dual{\Lambda}$ & $\sharp$ \\
			$\Lambda$ & $\sharp\; |\Lambda|$ &  &  \\
			\hline
		\end{tabular}
		\caption{Some groups $G$, their duals $\widehat{G}$ and their associated Haar measures. Here $\lambda$ denotes the Lebesgue measure and $\sharp$ the counting measure.}
		\label{fig:groups}
	\end{figure}

	It is important to note that if $H \subseteq G$ is a closed subgroup, then $G/H$ and $H$ are also locally compact groups. Furthermore, $G/H$ has a dual group that satisfies the following isomorphism 
	$$
	\widehat{G/H} \simeq H^{\perp} \eqdef \left\{ \chi \in \widehat{G} \mbox{ : } \forall h \in H, \mbox{ } \chi(h) = 1 \right\}.
	$$

	{\bf \noindent Norms and Fourier Transforms.} For any $p \in [1, \infty[$,
	$L_{p}(G)$ will denote the space of measurable functions $f : G \rightarrow \mathbb{C}$ (up to functions which agree almost everywhere) with finite norm $\| f \|_{p}$ which is defined as
	$$
	\| f \|_{p} \eqdef \sqrt[p]{\int_{G} |f|^{p} d\mu}.
	$$
	The Fourier transform of $f \in L_{1}(G)$ is defined as
	$$
	\widehat{f} : \chi \in \widehat{G} \longmapsto \int_{G}f\overline{\chi}d\mu.
	$$
	We omitted here the dependence on $G$. It will be clear from the context.

	\begin{theorem}[Parseval's Identity]\label{theo:Parseval} Let $f\in L_{1}(G) \cap L_{2}(G)$, then with appropriate normalization of the Haar measure
		$$
		\| f \|_{2} = \| \widehat{f} \|_{2}.
		$$
	\end{theorem}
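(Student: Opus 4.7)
The plan is to prove Parseval's identity by first verifying it on a well-chosen dense subspace and then extending by continuity, using the Fourier inversion formula as a black-box input coming from Pontryagin duality. The "appropriate normalization" is determined by pairing each Haar measure $\mu$ on $G$ with the unique Haar measure $\widehat{\mu}$ on $\widehat{G}$ for which the inversion formula
$$f(x) = \int_{\widehat{G}} \widehat{f}(\chi)\, \chi(x)\, d\widehat{\mu}(\chi)$$
holds for sufficiently regular $f$. For the groups in Figure~\ref{fig:groups} this can be read off concretely: for finite $G$ one takes dual counting measures with the renormalizations shown; for $\mathbb{R}^{n}$ one takes Lebesgue on both sides; for $\mathbb{R}^{n}/\Lambda$ and $\Lambda$ the normalizations are matched via the covolume $|\Lambda|$. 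I would fix this pairing up front and then treat the inversion formula as given.

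Next, I would work inside the subspace
$$\mathcal{A} \eqdef \{f \in L_{1}(G) \cap L_{2}(G) : \widehat{f} \in L_{1}(\widehat{G})\},$$
on which all manipulations below are rigorously justified. For $f \in \mathcal{A}$, start from $\|f\|_{2}^{2} = \int_{G} f(x)\, \overline{f(x)}\, d\mu(x)$ and substitute the inversion formula for $\overline{f(x)} = \int_{\widehat{G}} \overline{\widehat{f}(\chi)}\, \overline{\chi(x)}\, d\widehat{\mu}(\chi)$. Swap the order of integration via Fubini--Tonelli (legitimate because $f \in L_{1}(G)$ and $\widehat{f} \in L_{1}(\widehat{G})$, so the double integral is absolutely convergent), and recognize the inner integral as $\widehat{f}(\chi)$:
$$\|f\|_{2}^{2} = \int_{\widehat{G}} \overline{\widehat{f}(\chi)} \left( \int_{G} f(x)\, \overline{\chi(x)}\, d\mu(x) \right) d\widehat{\mu}(\chi) = \int_{\widehat{G}} \widehat{f}(\chi)\, \overline{\widehat{f}(\chi)}\, d\widehat{\mu}(\chi) = \|\widehat{f}\|_{2}^{2}.$$

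Finally, I would upgrade the identity from $\mathcal{A}$ to all of $L_{1}(G) \cap L_{2}(G)$ by a density argument: $\mathcal{A}$ is dense in $L_{2}(G)$, the Fourier transform $\mathcal{A} \to L_{2}(\widehat{G})$ is an $L_{2}$-isometry on this subspace, so it extends uniquely to an isometry on the completion. For the finite/discrete groups of Figure~\ref{fig:groups} ($\mathbb{F}_{2}^{n}$, $\mathbb{F}_{2}^{n}/\CC$, $\CC$, $\Lambda$), $\mathcal{A}$ equals the whole space after restricting to $L_{1} \cap L_{2}$, so density is trivial and the argument collapses to a finite sum computation. For $\mathbb{R}^{n}$ and $\mathbb{R}^{n}/\Lambda$, density is obtained by approximation via Schwartz functions (respectively, trigonometric polynomials), whose Fourier transforms are manifestly in $L_{1}$. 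The only genuine obstacle is invoking the inversion formula in the general LCA setting; once that is taken as a standard input, the remainder is routine.
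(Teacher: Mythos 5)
The paper states Parseval's Identity as a black-box fact from harmonic analysis on locally compact abelian groups and gives no proof of its own; it is invoked only as a standard tool en route to Corollary~\ref{coro:FB}. So there is no ``paper's proof'' against which to measure your attempt, and the question is simply whether your outline is a correct proof of the standard theorem. It is, and it is in fact the classical Plancherel argument: verify the identity on the subalgebra $\mathcal{A}$ of functions whose Fourier transform is also integrable, by substituting the inversion formula for $\overline{f(x)}$ and applying Fubini (justified exactly as you say, since $|\chi(x)|=1$ makes the double integral absolutely convergent with bound $\|f\|_1\|\widehat f\|_1$), and then extend by density.

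The one place where your sketch is slightly too quick is the final extension step. Density of $\mathcal{A}$ in $L_2(G)$ gives you a unique isometric extension $\mathcal{F}\colon L_2(G)\to L_2(\widehat G)$ of the restricted Fourier transform, but that alone does not yet yield $\|f\|_2 = \|\widehat f\|_2$ for $f\in L_1(G)\cap L_2(G)$: you still need to check that the abstract extension $\mathcal{F}f$ coincides (a.e.) with the honest $L_1$-Fourier transform $\widehat f$ on $L_1\cap L_2$. The standard way to close this is to approximate $f$ by $f_n\in\mathcal{A}$ converging to $f$ \emph{simultaneously} in $L_1$ and $L_2$ (e.g.\ via an approximate identity with nice dual), so that $\widehat{f_n}\to\widehat f$ pointwise from the $L_1$ convergence while $\widehat{f_n}$ is $L_2$-Cauchy from the isometry, forcing the $L_2$ limit to equal $\widehat f$ a.e. This is a genuine (if routine) step, and you should say it rather than let ``extends uniquely to the completion'' silently carry it. As you note, the concern evaporates for the discrete and finite groups the paper actually uses, where $\mathcal{A}=L_1\cap L_2$ outright and the argument is a finite or absolutely convergent sum.
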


	{\bf \noindent  Poisson Formula.} Given $H \subseteq G$ and any function $f : G \rightarrow \mathbb{C}$, its restriction over $H$ is defined as $f_{|H}: h \in H \mapsto f(h)\in \mathbb{C}$. We define its periodization as follows.
	
	\begin{definition}[Periodization]\label{def:perio}
		Let $H$ be a closed subgroup of $G$ and $f\in L^{1}(G)$. We define the $H$-periodization of $f$ as
		$$
		f^{|H} : (g+H) \in G/H \longmapsto \int_{H} f(g+h)d\mu_{H}(h) \in \mathbb{C}
		$$
		where $\mu_{H}$ denotes any choice of the Haar measure for $H$. 
	\end{definition}

	There always exists a Haar measure $\mu_{G/H}$ such that for any continuous function with compact support $f : G \rightarrow \mathbb{C}$ the quotient integral formula holds
	\begin{equation}\label{eq:quoIntFormula} 
	\int_{G/H} \left( \int_{H} f(g+h)d\mu_{H}(h)   \right) d\mu_{G/H}(g + H) = \int_{G} f(g)d\mu(g). 
	\end{equation}

	\begin{theorem}[Poisson Formula] Let $H\subseteq G$ be a closed subgroup and $f\in L^{1}(G)$, then with appropriate normalization of the Haar measures,
		$$
		\widehat{\left( f^{|H} \right)} = \left(\widehat{f}\right)_{|\widehat{G/H}}. 
		$$
	\end{theorem}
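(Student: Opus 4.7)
The plan is to prove the identity by a direct computation that unfolds the definitions of the Fourier transform and the periodization, then applies the quotient integral formula \eqref{eq:quoIntFormula}. The key algebraic input is the definition of $H^\perp$, which is precisely what allows the character to be pulled inside the periodization integral.

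Concretely, fix a character $\chi \in \widehat{G/H}$. Via the isomorphism $\widehat{G/H} \simeq H^\perp$ mentioned in the preliminaries, we may view $\chi$ as a character on $G$ satisfying $\chi(h)=1$ for every $h\in H$, and hence $\chi(g+h) = \chi(g)$ for all $g\in G, h\in H$. In particular, $\chi$ descends to a well-defined function on $G/H$. Starting from the definition, I would then write
\begin{align*}
\widehat{\bigl(f^{|H}\bigr)}(\chi)
&= \int_{G/H} f^{|H}(g+H)\,\overline{\chi(g+H)}\, d\mu_{G/H}(g+H) \\
&= \int_{G/H} \left( \int_{H} f(g+h)\, d\mu_H(h) \right) \overline{\chi(g+H)}\, d\mu_{G/H}(g+H).
\end{align*}
Because $\chi$ is constant on every coset $g+H$, we can move the character inside the inner integral, turning the integrand into $f(g+h)\,\overline{\chi(g+h)}$. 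The quotient integral formula \eqref{eq:quoIntFormula} applied to $g\mapsto f(g)\,\overline{\chi(g)}$ then collapses the iterated integral into a single integral over $G$, yielding $\int_G f(g)\,\overline{\chi(g)}\, d\mu(g) = \widehat{f}(\chi)$, which is precisely $(\widehat{f})_{|\widehat{G/H}}(\chi)$.

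Two technical points need care. First, one must know that $f^{|H}$ is itself an element of $L^1(G/H)$, so that its Fourier transform is defined. This follows by applying \eqref{eq:quoIntFormula} to $|f|\in L^1(G)$: it shows that $(|f|)^{|H}$ is a.e. finite and that $\|f^{|H}\|_{L^1(G/H)} \leq \|f\|_{L^1(G)}$. Second, the swap of the character and the inner integral together with the use of \eqref{eq:quoIntFormula} on $f\cdot \overline\chi$ is automatic for continuous compactly supported $f$ (where everything is absolutely convergent), and then extends to general $f\in L^1(G)$ by the density of $C_c(G)$ in $L^1(G)$ together with the contractivity of both operations $f\mapsto f^{|H}$ and $f\mapsto \widehat{f}$. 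The phrase ``appropriate normalization of the Haar measures'' in the statement refers exactly to choosing $\mu_{G/H}$ so that \eqref{eq:quoIntFormula} holds for the given $\mu$ and $\mu_H$; under this choice no extraneous constant appears.

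The main obstacle is therefore not conceptual but book-keeping: making sure the Haar-measure normalizations on $G$, $H$, $G/H$ and on their dual groups are all chosen compatibly (so that both \eqref{eq:quoIntFormula} and Parseval hold with no constants), and rigorously justifying the interchange of integrals for a general $L^1$ function via a density argument. Once these are in place, the identity is essentially a one-line computation built on the defining property of $H^\perp$.
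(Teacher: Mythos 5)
The paper does not actually prove this theorem: it is stated as a known result of abstract harmonic analysis (the Poisson summation formula over locally compact abelian groups), with the phrase ``with appropriate normalization of the Haar measures'' implicitly deferring to the classical theory. Your proof is the standard one and is correct.

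Your computation is exactly the textbook argument: unfold the definition of the Fourier transform on $G/H$ and of the periodization, use that $\chi\in H^\perp$ is constant on cosets to rewrite $\overline{\chi(g+H)} = \overline{\chi(g+h)}$ inside the inner integral, and then apply the quotient integral formula \eqref{eq:quoIntFormula} to the function $f\cdot\overline{\chi}\in L^1(G)$ to collapse the iterated integral to $\int_G f\overline{\chi}\,d\mu = \widehat f(\chi)$. You correctly flag the two technical points that a careful write-up must handle: that $f\mapsto f^{|H}$ maps $L^1(G)$ contractively into $L^1(G/H)$ (by applying \eqref{eq:quoIntFormula} to $|f|$), and that the quotient integral formula, stated in the paper for $C_c(G)$, extends to $L^1(G)$ by density. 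One tiny point you gloss over but which is harmless: the restriction $(\widehat f)_{|\widehat{G/H}}$ tacitly uses the identification $\widehat{G/H}\simeq H^\perp\subseteq\widehat G$, which is exactly what makes the formula typographically sensible; you invoke this identification correctly at the start. Nothing in the argument would fail, and the dependence on Haar-measure normalization is handled exactly as the paper intends.
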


	The following corollary is a simple consequence of the Cauchy-Schwarz inequality, Parseval identity and the Poisson formula. Our results on smoothing bounds are all based on this corollary.

	\begin{corollary}\label{coro:FB} Let $H$ be a closed subgroup of $G$. Let $a : x \in G/H \mapsto 1$ and $f \in L^{1}(G)$ such that $\int_{G} f d\mu = \mu_{G/H}(G/H)$. Then with appropriate normalization of the Haar measure,\footnote{We choose the Haar measures $\mu_{G}$, $\mu_{H}$,$\mu_{G/H}$ and $\widehat{\mu_{G/H}}$ for which both the Poisson formula and Parseval's Identity hold.}
		$$
		\| a - f^{|H} \|_{1} \leq \sqrt{\mu_{G/H}(G/H)} \; \sqrt{\int_{\widehat{G/H} \backslash \{\chi_{\mathbf{0}}\}} |\widehat{f}|^{2} \; d\mu_{\widehat{G/H}}}
		$$ 
	where $\chi_{\mathbf 0}$ denotes the identity element of $\widehat{G/H}$. 
	\end{corollary}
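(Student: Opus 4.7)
The plan is to establish the bound by a short chain of three standard tools, in the order: Cauchy-Schwarz, then Parseval's Identity, then the Poisson formula. First, since in all the applications of interest the quotient $G/H$ has finite total measure (either a finite abelian group for codes, or a compact torus for lattices), I would apply the Cauchy-Schwarz inequality against the constant function $1$ on $G/H$ to pass from the $L^1$ norm to the $L^2$ norm:
$$
\| a - f^{|H} \|_1 \;=\; \int_{G/H} |a - f^{|H}| \cdot 1 \, d\mu_{G/H} \;\leq\; \sqrt{\mu_{G/H}(G/H)} \cdot \| a - f^{|H} \|_2.
$$
This is precisely where the prefactor $\sqrt{\mu_{G/H}(G/H)}$ in the statement arises.

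Next, I would apply Parseval's Identity (Theorem~\ref{theo:Parseval}) over the group $G/H$ to rewrite $\| a - f^{|H} \|_2 = \| \widehat{a} - \widehat{f^{|H}} \|_2$, and then use the Poisson formula to identify $\widehat{f^{|H}}$ with $(\widehat{f})_{|\widehat{G/H}}$. It remains to compute $\widehat{a}$. Since $a \equiv 1$ on $G/H$, one has $\widehat{a}(\chi) = \int_{G/H} \overline{\chi(x)} \, d\mu_{G/H}(x)$, which equals $\mu_{G/H}(G/H)$ at the trivial character $\chi_{\mathbf{0}}$ and vanishes at every non-trivial character by orthogonality. On the other hand, the hypothesis $\int_G f \, d\mu = \mu_{G/H}(G/H)$ says exactly that $\widehat{f}(\chi_{\mathbf{0}}) = \mu_{G/H}(G/H)$. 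The two contributions at $\chi_{\mathbf{0}}$ therefore cancel, so that $\widehat{a} - \widehat{f^{|H}}$ is supported on $\widehat{G/H} \setminus \{\chi_{\mathbf{0}}\}$ where it coincides with $-\widehat{f}$. Squaring and integrating gives
$$
\| \widehat{a} - \widehat{f^{|H}} \|_2^2 \;=\; \int_{\widehat{G/H} \setminus \{\chi_{\mathbf{0}}\}} |\widehat{f}|^2 \, d\mu_{\widehat{G/H}},
$$
and combining with the Cauchy-Schwarz step above yields the claimed inequality.

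The only genuine subtlety, and the reason for the footnote in the statement, is that each of Parseval, Poisson, and the quotient integral formula \eqref{eq:quoIntFormula} fixes a relation between the Haar measures on $G$, $H$, $G/H$ and their duals, and one must check that these relations are mutually compatible so that all three identities hold simultaneously. This is a purely bookkeeping issue about normalizations, and once it is settled (by selecting the Haar measures as specified in the footnote) the three lines above combine without further work. The proof involves no estimation beyond Cauchy-Schwarz: all of the analytic content of the smoothing bounds in the remainder of the paper will come from bounding the Fourier-side quantity $\int_{\widehat{G/H} \setminus \{\chi_{\mathbf{0}}\}} |\widehat{f}|^2 \, d\mu_{\widehat{G/H}}$ for the specific error densities $f$ of interest.
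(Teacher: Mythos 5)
Your proof is correct and follows essentially the same chain of steps as the paper: Cauchy--Schwarz against the constant $1$ on $G/H$, then Parseval, then Poisson, with the observation that the $\chi_{\mathbf 0}$ contributions cancel. The only cosmetic difference is that you apply Poisson to identify $\widehat{f^{|H}}(\chi_{\mathbf 0})$ with $\widehat f(\chi_{\mathbf 0}) = \int_G f\,d\mu$, whereas the paper computes $\widehat{f^{|H}}(\chi_{\mathbf 0})$ directly via the quotient integral formula \eqref{eq:quoIntFormula}; the two routes are equivalent.
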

	
	\begin{proof}
		We have 
		\begin{align}
			\|a-f^{|H}\|_{1} &= \int_{\widehat{G/H}}|a-f^{|H}|d\mu_{G/H} \nonumber \\
&\leq \sqrt{\mu_{G/H}(G/H)} \; \|a-f\|_2  \quad (\mbox{By Cauchy-Schwarz})\nonumber \\
&= \sqrt{\mu_{G/H}(G/H)} \; \|\widehat{a}-\widehat{f}\|_2 \quad (\mbox{By Parseval}) \nonumber\\
&= \sqrt{\mu_{G/H}(G/H)} \; \sqrt{\int_{\widehat{G/H}\setminus\{\chi_{\mathbf{0}}\}}|\widehat{f^{|H}}|^2d\mu_{\widehat{G/H}}} \label{eq:fourier-values}\\
			&= \sqrt{\mu_{G/H}(G/H)} \; \sqrt{\int_{\widehat{G/H}\setminus\{\chi_{\mathbf{0}}\}}|\widehat{f}|^2d\mu_{\widehat{G/H}}} 2 \quad (\mbox{By Poisson})\nonumber
		\end{align}
		where in Equation \eqref{eq:fourier-values} we used the following equalities:
\begin{align*}
				\widehat{f^{|H}}(\chi_{\mathbf 0}) &= \int_{G/H}f^{|H}\overline{\chi_{\mathbf 0}}\; d\mu_{G/H} \\
				&=  \int_{G/H}\left( \int_{H}f(g+h)d\mu_{H}(h)\right)d\mu_{G/H}(g+H) \\
				& = \int_{G} f \quad (\mbox{By Equation \eqref{eq:quoIntFormula}})\\ 
				& =  \mu_{G/H}(G/H) \quad (\mbox{By assumption on $f$})
			\end{align*}
			and
			$$
			\widehat{a}(\chi_{\mathbf{0}}) = \int_{G/H}u\overline{\chi_{\mathbf 0}}d\mu_{G/H}  = \mu_{G/H}(G/H) \quad \mbox{and} \quad \forall  \chi\in \widehat{G/H}\setminus\{\chi_{\mathbf{0}}\}, \mbox{ } \widehat{a}(\chi) =\int_{G/H}\overline{\chi}d\mu_{G/H} = 0 .
			$$
			which concludes the proof. 
\end{proof}

	In this work we will choose $G = \mathbb{R}^{n}$ and $H = \Lambda$ or $G = \mathbb{F}_{2}^{n}$ and $H = \CC$. Haar measures associated to $G, G/H$ and $\widehat{G/H}$ for which the corollary holds are given in Figure \ref{fig:groups}. Furthermore, we will use Fourier transforms over $\widehat{G}$ and $\widehat{G/H}$. We describe in Figure \ref{table:FT} these dual groups that we will consider. 
	
	\begin{figure}[htb]
	\begin{center} 
			\renewcommand{\arraystretch}{1.8}
	\begin{tabular}{|c|c|}
		\hline
			$\R^n$ & $\F_2^n$ \\ \hline
$\widehat{\mathbb{R}^{n}/\Lambda} =  \left\{ \chi_{\vec{x}} \mbox{ : } \vec{x}\in\dual{\Lambda} \right\}$ & $\widehat{\mathbb{F}_{2}^{n}/\CC} =  \left\{ \chi_{\vec{x}}\mbox{ : } \vec{x}\in\dual{\CC} \right\}$ \\  \hline
	$\widehat{f}(\xv) = \int_{\R^n} f(\yv) e^{2i\pi \xv \cdot \yv} d\yv$ & $ \widehat{f}(\xv) = \frac{1}{2^n} \sum_{\yv \in \F_2^n} f(\yv) (-1)^{\xv \cdot \yv} $ \\		
		\hline
\end{tabular}
	
	\end{center} 
	\caption{Dual groups and Fourier transforms that we will consider. We identify $\widehat{f}(\chi_{\vec{x}})$ with $\widehat{f}(\vec{x})$. \label{table:FT}}
\end{figure}

 	\section{Smoothing Bounds: Code Case}\label{sec:SBCode}

Given a binary linear code $\CC$ of length $n$,
the aim of a smoothing bound is to quantify at which condition on the noise $\cv+\ev$ is statistically close to 
the uniform distribution over $\F_2^n$  when 
$\cv$ is uniformly drawn from $\CC$ and $\ev$ sampled according to some noise distribution $f$.
Equivalently, we want to understand when $\left(\vec{e} \mod \CC\right) \in \F_{2}^{n}/\CC$ is close to the uniform distribution.
We will focus on the case where the distribution of $\ev$ is radial, meaning that it only depends on the Hamming weight 
of $\ev$.

\begin{notation}
	We will use throughout this section the following notation.
\begin{itemize}
\item The uniform probability distribution over the quotient space $\F_{2}^{n}/\CC$ will frequently recur and for this reason we just denote it by $\unifq$. The uniform distribution over the whole space $\F_2^n$ is denoted by $\uniff$ and the uniform distribution over the codewords of $\CC$ is denoted by $\unifc$. 
\item We also use the uniform distribution over
 the sphere 
 $\Sc_w$ 
which we denote by $\unifs{w}$.
\item For two probability distributions $f$ and $g$ over $\F_2^n$ we denote by $f \star g$ the convolution over $\F_2^n$: 
$f \star g(\xv) = \sum_{\yv \in \F_2^n} f(\xv-\yv)g(\yv)$.
\end{itemize}
\end{notation}

It will be more convenient to work in the quotient space and for this we use the following proposition. 
\begin{proposition} Let $f$ be a probability distribution over $\F_2^n$ and $\CC$ be an $\lbrack n,k \rbrack$-code. We have
	\begin{equation*}
	\Delta(\uniff,\unifc \star f) = \Delta(u,f^{\CC}),  \quad \mbox{where } f^{\CC}(\xv) \eqdef 2^{k} \; f^{|\CC}(\xv) = \sum_{\cv \in \CC} f(\xv-\cv).
	\end{equation*}
\end{proposition}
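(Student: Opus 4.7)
The plan is to unfold both statistical distances from the definition in Section~\ref{sec:prelim} and then recognize that, thanks to the $\CC$-periodicity of $f^{\CC}$, the two sums coincide term-for-term once cosets are collected. First I would evaluate the convolution pointwise:
$$(\unifc \star f)(\xv) \;=\; \sum_{\yv\in\F_2^n}\unifc(\yv)\, f(\xv-\yv) \;=\; \frac{1}{|\CC|}\sum_{\cv\in\CC}f(\xv-\cv) \;=\; \frac{f^{\CC}(\xv)}{2^k},$$
which is coherent with the stated identity $f^{\CC} = 2^k f^{|\CC}$ once one invokes the normalized counting Haar measure $\tfrac{1}{|\CC|}\sharp$ on $\CC$ from Figure~\ref{fig:groups}.

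Next I would observe, by the reindexing $\cv \mapsto \cv+\cv_0$, that $f^{\CC}(\xv+\cv_0) = f^{\CC}(\xv)$ for every $\cv_0 \in \CC$, so $f^{\CC}$ descends to a function on the quotient $\F_2^n/\CC$. Moreover it is a probability distribution on the quotient since unfolding cosets yields $\sum_{\bar\xv\in\F_2^n/\CC}f^{\CC}(\bar\xv) = \sum_{\xv\in\F_2^n}f(\xv)=1$, making it comparable to the uniform probability distribution $u(\bar\xv) = 1/2^{n-k}$.

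The last step is then purely arithmetic. Expanding
$$\Delta(\uniff,\unifc\star f) \;=\; \frac{1}{2}\sum_{\xv\in\F_2^n}\left|\frac{1}{2^n} - \frac{f^{\CC}(\xv)}{2^k}\right|,$$
I note that the integrand depends only on the coset of $\xv$; grouping the $2^n$ summands into $2^{n-k}$ cosets of cardinality $2^k$ and pulling the factor $2^k$ inside the absolute value yields
$$\frac{1}{2}\sum_{\bar\xv\in\F_2^n/\CC}\left|\frac{1}{2^{n-k}} - f^{\CC}(\bar\xv)\right| \;=\; \Delta(u, f^{\CC}).$$

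The whole argument is routine, and I do not anticipate any real obstacle: the content is simply to translate between the ambient and quotient viewpoints. The only item to keep straight is the bookkeeping of the various normalizations of the Haar measures from Figure~\ref{fig:groups}, which is precisely what the factor $2^k$ in the definition $f^{\CC} = 2^k f^{|\CC}$ exists to absorb.
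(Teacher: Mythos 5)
Your proof is correct and follows essentially the same route as the paper's: you compute $(\unifc\star f)(\xv) = f^{\CC}(\xv)/2^k$, observe that each summand is constant on cosets of $\CC$, and regroup the $2^n$ terms into $2^{n-k}$ cosets of size $2^k$ to pass to the quotient. The only cosmetic difference is that the paper phrases the first step as conditioning on the value of $\vec c$, whereas you evaluate the convolution directly, but these are the same computation.
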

\begin{proof}Let $\vec{c}$ and $\vec{e}$ be distributed according to $u_{\CC}$ and $f$. We have the following computation:
\begin{align}
	\Delta(\uniff,\unifc \star f)& = \frac{1}{2}\; \sum_{\vec{x}\in\F_{2}^{n}} \left| \frac{1}{2^{n}} - \mathbb{P}_{\unifc,f}\left( \vec{c} + \vec{e} = \vec{x} \right) \right| \nonumber \\
	&=  \frac{1}{2}\;\sum_{\vec{x}\in\F_{2}^{n}} \left| \frac{1}{2^{n}} - \sum_{\vec{c}_{0}\in\CC} \mathbb{P}_{f}(\vec{c}+\vec{e} = \vec{x}\mid \vec{c} = \vec{c}_{0})\;\frac{1}{2^{k}} \right| \nonumber \\
	&=  \frac{1}{2}\;\sum_{\vec{x}\in\F_{2}^{n}} \left| \frac{1}{2^{n}} - \frac{1}{2^{k}}\;\sum_{\vec{c}_{0}\in\CC} f(\vec{x} - \vec{c}_{0}) \right|  \nonumber \\
	&=  \frac{1}{2}\; \sum_{\vec{x}\in\F_{2}^{n}/\CC} \left| \frac{1}{2^{n-k}} - \sum_{\vec{c}_{0}\in \CC} f(\vec{x}-\vec{c}_{0}) \right| \label{eq:modC} \\
	& =  \frac{1}{2}\; \sum_{\vec{x}\in\F_{2}^{n}/\CC} \left| \frac{1}{2^{n-k}} - f^{\CC}(\xv) \right| \nonumber
\end{align}
where in Equation \eqref{eq:modC} we used that each term of the sum is constant on $\vec{x}+\CC$.
\end{proof}

As a  rewriting of Corollary \ref{coro:FB} we get the following proposition that upper-bounds $\Delta(u,f^{\CC})$, namely:

\begin{proposition}\label{propo:FBSDCod} Let $\CC$ be an $\lbrack n,k\rbrack$-code and $f$ be a radial distribution on $\F_2^n$.
We have
		$$
		\Delta\left(u, f^{\CC}\right) \leq  2^{n} \; \sqrt{\sum_{t = \dmin(\dual{\CC})}^{n} \Neq{t}{\dual{\CC}}|\widehat{f}(t)|^{2}}
		$$
where by abuse of notation we denote by $\widehat{f}(t)$ the common value of $\widehat{f}$ on vectors of weight $t$. 
\end{proposition}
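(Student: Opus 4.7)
The plan is to invoke Corollary \ref{coro:FB} directly, taking $G = \F_2^n$ and $H = \CC$, and then exploit the radial symmetry of $f$ to collapse the Fourier sum onto weights. The only subtle aspect is bookkeeping for the various Haar-measure normalizations so that the corollary's $\|\cdot\|_1$ bound translates into a bound on the statistical distance $\Delta(u, f^{\CC})$.

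First I would fix the Haar measures from Figure \ref{fig:groups}: $\mu_G = \frac{1}{2^n}\sharp$, $\mu_H = \frac{1}{2^k}\sharp$, $\mu_{G/H} = \frac{1}{2^{n-k}}\sharp$, and counting measure on $\widehat{G/H} \simeq \dual{\CC}$. These are compatible with the Parseval and Poisson identities, so Corollary \ref{coro:FB} applies. The corollary's hypothesis $\int_G f\,d\mu = \mu_{G/H}(G/H) = 1$ is not satisfied by a probability distribution directly (which gives $\int_G f\,d\mu = 2^{-n}$), so I would apply the corollary to the rescaled density $\tilde f \eqdef 2^n f$. Then $\tilde f^{|H}(\xv) = \frac{1}{2^k}\sum_{\cv \in \CC} \tilde f(\xv+\cv) = 2^{n-k} f^{\CC}(\xv)$, and with the Fourier transform convention of Figure \ref{table:FT} one has $\widehat{\tilde f}(\dv) = 2^n \widehat f(\dv)$.

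Next I would identify the LHS of Corollary \ref{coro:FB} with a multiple of the statistical distance. Since $a \equiv 1$ on $G/H$,
\[
\|a - \tilde f^{|H}\|_1 = \frac{1}{2^{n-k}}\sum_{\xv \in \F_2^n/\CC}\left|1 - 2^{n-k}f^{\CC}(\xv)\right| = \sum_{\xv \in \F_2^n/\CC}\left|\frac{1}{2^{n-k}} - f^{\CC}(\xv)\right| = 2\,\Delta(u, f^{\CC}).
\]
Plugging into Corollary \ref{coro:FB}, and using $\mu_{G/H}(G/H)=1$ together with $|\widehat{\tilde f}(\dv)|^2 = 2^{2n}|\widehat f(\dv)|^2$, yields
\[
2\,\Delta(u, f^{\CC}) \leq \sqrt{\sum_{\dv \in \dual{\CC}\setminus\{\vec 0\}} |\widehat{\tilde f}(\dv)|^2} \;=\; 2^n\sqrt{\sum_{\dv \in \dual{\CC}\setminus\{\vec 0\}} |\widehat f(\dv)|^2}.
\]

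Finally I would use radial symmetry of $f$. Because $f(\yv)$ depends only on $\wt{y}$ and the Walsh character is invariant under coordinate permutations, $\widehat f(\dv)$ depends only on $\wt{d}$, so I may write $\widehat f(\dv) = \widehat f(\wt{d})$ unambiguously. Grouping the sum over $\dual{\CC}\setminus\{\vec 0\}$ by Hamming weight, and noting that by definition of $\dmin(\dual\CC)$ no nonzero dual codeword has weight smaller than $\dmin(\dual\CC)$, the sum becomes $\sum_{t = \dmin(\dual{\CC})}^{n}\Neq{t}{\dual{\CC}}\,|\widehat f(t)|^2$, which after absorbing the harmless factor of $2$ gives the stated bound. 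The only real obstacle is the Haar-measure bookkeeping in the previous paragraph; once that is done cleanly, the remainder of the proof is mechanical.
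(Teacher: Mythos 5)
Your proposal is correct and follows essentially the same route as the paper's own proof: rescale $f$ to $\tilde f = 2^n f$ (the paper's $b$) so that Corollary~\ref{coro:FB} applies, observe that $\|a - \tilde f^{|\CC}\|_1 = 2\Delta(u, f^\CC)$ under the stated Haar normalizations, and then collapse the Fourier sum over $\dual{\CC}\setminus\{\vec 0\}$ by Hamming weight using radiality. The Haar-measure bookkeeping you carry out is exactly the care the paper takes (the paper likewise sets $a = 2^{n-k}u$, i.e.\ the constant $1$, and $b = 2^n f$), and your observation that the argument actually yields an extra factor $\tfrac12$ compared to the stated bound is consistent with the paper losing the same harmless factor.
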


\begin{proof}We have that $\CC$ is a closed subgroup of $\F_{2}^{n}$ with associated Haar measures:
	$$
	\mu_{\F_{2}^{n}} = \frac{1}{2^{n}} \; \sharp \quad \mbox{and} \quad \mu_{\F_{2}^{n}/\CC} = \frac{2^{k}}{2^{n}}\;\sharp
	$$
	for which we can apply Corollary \ref{coro:FB}. 
	Let $a \eqdef 2^{n-k} u$ and $b \eqdef 2^{n} f$. First, it is clear that $a : \vec{x}\in\F_{2}^{n}/\CC \mapsto 1$ and that 
	$$
	\int_{\F_{2}^{n}} b \;d\mu_{\F_{2}^{n}} = \frac{1}{2^{n}}\sum_{\vec{x}\in\F_{2}^{n}} 2^{n}f(\vec{x}) = 1 = \mu_{\F_{2}^{n}/\CC}(\F_{2}^{n}/\CC)
	$$
	where we used that $f$ is a distribution. Therefore we can apply Corollary \ref{coro:FB} with functions $a$ and $b$. Furthermore, $b^{|\CC} = 2^{n} f^{|\CC} = 2^{n-k}f^{\CC}$ by definition of $f^{\CC}$. We get the following computation:
	\begin{align}
		\| a - b^{|\CC} \|_{1} &= \| a - 2^{n-k} f^{\CC}\|_{1} \nonumber \\
		&= \sum_{\vec{x}\in\F_{2}^{n}/\CC} \left| 1 - 2^{n-k} f^{\CC}(\vec{x}) \right| \; \frac{1}{2^{n-k}} \nonumber\\
		&= \sum_{\vec{x}\in\F_{2}^{n}/\CC} \left| \frac{1}{2^{n-k}} - f^{\CC}(\vec{x}) \right| \nonumber\\
		&= 2\; \Delta(u,f^{\CC}) \ . \label{eq:toApplyCoro}
	\end{align}
	To conclude the proof it remains to apply Corollary \ref{coro:FB} with Equation \eqref{eq:toApplyCoro} and then to use that $f$ is radial and therefore also $\widehat{f}$. 
\end{proof}

Our upper-bound of Proposition \ref{propo:FBSDCod} involves the weight distribution of the code $\dual{\CC}$, namely $(\Neq{t}{\dual{\CC}})_{t \geq \dmin(\dual{\CC})}$. To understand how our bound behaves for a given distribution $f$, we will start (in the following subsection) with the case of random codes. The expected value for $\Neqs{t}$ is well known in this case. This will lead us to estimate our bound on almost all codes and  gives us some hints about the best distribution to choose for our smoothing bound in the worst case (which is the case that we treat in Subsection \ref{subsec:smoothFCode}).

\subsection{Smoothing Random Codes.} The probabilistic model $\crand{n}{k}$ that we use for our random code of length $n$ 
is defined by sampling uniformly at random a generator matrix $\vec{G} \in \F_{2}^{k\times n}$ for it, \textit{i.e.}
$$
\CC = \left\{ \vec{m}\vec{G} \mbox{ : } \vec{m} \in \F_{2}^{k} \right\}.
$$ 
It is straightforward to check that the expected number of codewords of weight $t$ in the dual $\dual{\CC}$ is given by:

\begin{fact}\label{fact:RCode}For $\CC$ chosen according to $\crand{n}{k}$
	\begin{equation*}
		\mathbb{E}_{\CC}(\Neq{t}{\dual{\CC}}) = \frac{\binom{n}{t}}{2^k}.
	\end{equation*}
\end{fact}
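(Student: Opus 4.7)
The plan is to write $\Neq{t}{\dual{\CC}}$ as a sum of indicator variables over the $\binom{n}{t}$ vectors of weight $t$, apply linearity of expectation, and then compute the probability that a single fixed nonzero vector lies in $\dual{\CC}$ under the random generator matrix model.

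First I would recall that, by the definition of the dual, a vector $\vec{x}\in\F_{2}^{n}$ lies in $\dual{\CC}$ if and only if $\vec{c}\cdot\vec{x}=0$ for every $\vec{c}\in\CC$. Since every codeword is of the form $\vec{m}\vec{G}$, this is equivalent to asking that $\vec{m}\vec{G}\transpose{\vec{x}}=0$ for all $\vec{m}\in\F_{2}^{k}$, which is in turn equivalent to $\vec{G}\transpose{\vec{x}}=\vec{0}$, i.e., each of the $k$ rows of $\vec{G}$ is orthogonal to $\vec{x}$. Crucially, this characterization does not rely on $\vec{G}$ having full rank, so it is valid for every outcome of the probabilistic experiment.

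Next, for a fixed vector $\vec{x}\in\F_{2}^{n}$ with $1\leq |\vec{x}|=t$ (so $\vec{x}\neq\vec{0}$), each row $\vec{g}_{i}$ of $\vec{G}$ is uniform on $\F_{2}^{n}$ and independent of the others. The event $\vec{g}_{i}\cdot\vec{x}=0$ is a single nontrivial linear constraint, occurring with probability exactly $1/2$. Independence across the $k$ rows then gives $\Prob[\vec{x}\in\dual{\CC}]=2^{-k}$. Linearity of expectation over the $\binom{n}{t}$ weight-$t$ vectors yields
\[
\mathbb{E}_{\CC}(\Neq{t}{\dual{\CC}}) \;=\; \sum_{\substack{\vec{x}\in\F_{2}^{n}\\ |\vec{x}|=t}} \Prob[\vec{x}\in\dual{\CC}] \;=\; \binom{n}{t} \, 2^{-k},
\]
which is the claimed formula.

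There is really no obstacle here; the only subtlety worth flagging is that the formula is stated for weights $t$ that will ultimately be indexed from $\dmin(\dual{\CC})\geq 1$ in the smoothing bound of Proposition~\ref{propo:FBSDCod}, so the case $t=0$ (where the true expectation is $1$, not $2^{-k}$) is harmless and need not be treated separately.
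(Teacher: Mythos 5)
Your proof is correct and is precisely the ``straightforward'' argument the paper alludes to without writing out: the paper states Fact~\ref{fact:RCode} with no proof beyond the remark that it is ``straightforward to check.'' Your decomposition into indicators, the observation that $\vec{x}\in\dual{\CC}$ iff $\vec{G}\transpose{\vec{x}}=\vec{0}$ (independent of whether $\vec{G}$ has full rank), the computation $\Prob[\vec{x}\in\dual{\CC}]=2^{-k}$ for $\vec{x}\neq\vec{0}$ via independence of the $k$ rows, and the final linearity step are all exactly what is needed; your note that the stated formula fails at $t=0$ but is only ever applied with $t>0$ (as in Proposition~\ref{propo:Rcode}) is also an accurate and worthwhile observation.
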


This estimation combined with Proposition \ref{propo:FBSDCod} enables us to upper-bound $\mathbb{E}_{\CC}\left(\Delta( u,f^{\CC})\right)$.

\begin{proposition}\label{propo:Rcode}
We have:
	\begin{equation}\label{bound:RandCodes}
	\mathbb{E}_{\CC}\left(\Delta(u,f^{\CC})\right) \leq 2^{n}\; \sqrt{ \sum_{t > 0} \frac{\binom{n}{t}}{2^{k}}\; |\widehat{f}(t)|^{2} } .
	\end{equation} 
\end{proposition}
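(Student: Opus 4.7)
The plan is to combine Proposition~\ref{propo:FBSDCod} with Fact~\ref{fact:RCode} via Jensen's inequality. Specifically, the goal is to take the expectation of the bound in Proposition~\ref{propo:FBSDCod} over the random choice of $\CC \sim \crand{n}{k}$ and use the concavity of $\sqrt{\cdot}$ to push the expectation inside.

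First, I would observe that Proposition~\ref{propo:FBSDCod} yields
$$
\Delta\left(u, f^{\CC}\right) \leq 2^{n} \sqrt{\sum_{t = \dmin(\dual{\CC})}^{n} \Neq{t}{\dual{\CC}}\,|\widehat{f}(t)|^{2}} \leq 2^{n}\sqrt{\sum_{t=1}^{n} \Neq{t}{\dual{\CC}}\,|\widehat{f}(t)|^{2}},
$$
where the second inequality simply uses that $\Neq{t}{\dual{\CC}} = 0$ for $0 < t < \dmin(\dual{\CC})$, which lets me extend the sum to a range that does not depend on the random code $\CC$. This uniform range is important because the expectation will be pushed onto each $\Neq{t}{\dual{\CC}}$ individually.

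Next, I would take $\mathbb{E}_{\CC}$ of both sides. By concavity of the square root (Jensen's inequality),
$$
\mathbb{E}_{\CC}\sqrt{\sum_{t=1}^{n} \Neq{t}{\dual{\CC}}\,|\widehat{f}(t)|^{2}} \leq \sqrt{\mathbb{E}_{\CC}\sum_{t=1}^{n} \Neq{t}{\dual{\CC}}\,|\widehat{f}(t)|^{2}} = \sqrt{\sum_{t=1}^{n} \mathbb{E}_{\CC}\bigl(\Neq{t}{\dual{\CC}}\bigr)\,|\widehat{f}(t)|^{2}},
$$
where I also used linearity of expectation together with the fact that $|\widehat{f}(t)|^{2}$ is a deterministic quantity (it depends only on $f$, not on $\CC$). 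Finally, I would substitute $\mathbb{E}_{\CC}(\Neq{t}{\dual{\CC}}) = \binom{n}{t}/2^{k}$ from Fact~\ref{fact:RCode} to arrive at the claimed bound.

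There is no real obstacle here: the only mildly subtle point is that Jensen's inequality goes in the correct direction because $\sqrt{\cdot}$ is concave (so $\mathbb{E}\sqrt{X} \leq \sqrt{\mathbb{E} X}$), which is exactly what is needed. Everything else is a direct application of Proposition~\ref{propo:FBSDCod} and Fact~\ref{fact:RCode}.
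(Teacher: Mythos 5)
Your proof is correct and follows essentially the same route as the paper: apply Proposition~\ref{propo:FBSDCod}, push the expectation inside the square root via Jensen's inequality, then use linearity of expectation and Fact~\ref{fact:RCode}. The only difference is that you make explicit the (harmless) extension of the summation range from $t = \dmin(\dual{\CC})$ to $t = 1$ before applying linearity, a detail the paper leaves implicit.
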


\begin{proof} 
By using Proposition \ref{propo:FBSDCod}, we obtain:
\begin{align*}
	\mathbb{E}_{\CC}\left(\Delta(u,f^{\CC})\right) &\leq \mathbb{E}_{\CC}\left(2^{n} \; \sqrt{\sum_{t = \dmin(\dual{\CC})}^{n} \Neq{t}{\dual{\CC}}|\widehat{f}(t)|^{2}} \right) \\
	&\leq  2^{n}\; \sqrt{\mathbb{E}_{\CC}\left( \sum_{t = \dmin(\dual{\CC})}^{n} \Neq{t}{\dual{\CC}}|\widehat{f}(t)|^{2} \right)} \quad (\mbox{Jensen's inequality})\\
	&=  2^{n}\; \sqrt{ \sum_{t > 0} \frac{\binom{n}{t}}{2^{k}}\; |\widehat{f}(t)|^{2} }  
\end{align*}
where in the last line we used the linearity of the expectation and Fact \ref{fact:RCode}.  
\end{proof}

It remains now to choose the distribution $f$. A natural choice in code-based cryptography is the uniform distribution $\unifs{w}$ over the sphere $\Sc_w$ of radius $w$ centered around $\mathbf{0}$.

\noindent
{\bf \noindent Uniform Distribution over a Sphere.} 
The Fourier transform of 
$\unifs{w}$ 
is intimately connected to Krawtchouk polynomials. 
The Krawtchouk polynomial of order $n$ and degree $w\in \{ 0,\dots,n\}$ is defined as
$$
K_{w}(X;n) \eqdef \sum_{j=0}^{w} (-1)^j \binom{X}{j} \binom{n-X}{w-j}. 
$$ 
To simplify notation, since $n$ is clear here from context, we will drop the dependency on $n$ and simply write $K_w(X)$. The following fact allows to relate $K_{w}$ with $\widehat{\unifs{w}}$ (see for instance \cite[Lem. 3.5.1, \S 3.5]{L99})

\begin{fact}\label{fact:Kraw}
	For any $\vec{y} \in \mathcal{S}_{t}$,
	\begin{equation}
		\sum_{\vec{e}\in\mathcal{S}_{w}} (-1)^{ \vec{y}\cdot \vec{e}} = K_{w}(t).
	\end{equation}
\end{fact}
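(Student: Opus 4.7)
The plan is to prove Fact \ref{fact:Kraw} by a direct combinatorial computation: partition the sphere $\mathcal{S}_w$ according to how much a vector $\vec e$ overlaps with the support of $\vec y$, then read off the definition of $K_w(t)$.

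More concretely, fix $\vec y\in\mathcal{S}_t$ and let $S \eqdef \{i : y_i = 1\}$, so $\sharp S = t$. For any $\vec e \in \F_2^n$, let $j \eqdef \sharp(\mathrm{supp}(\vec e) \cap S)$. Since $\vec y\cdot\vec e = \sum_{i\in S} e_i \bmod 2 \equiv j \pmod 2$, one immediately has $(-1)^{\vec y \cdot \vec e} = (-1)^j$. Thus the contribution of each $\vec e$ to the sum depends only on $j$, and I will group summands accordingly:
\[
\sum_{\vec e \in \mathcal{S}_w} (-1)^{\vec y \cdot \vec e} = \sum_{j=0}^{w} (-1)^j \; \sharp\left\{ \vec e \in \mathcal{S}_w \mbox{ : } \sharp(\mathrm{supp}(\vec e)\cap S) = j\right\} .
\]

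Next, I count the cardinality of each class. A vector $\vec e \in \mathcal{S}_w$ with exactly $j$ ones inside $S$ is determined by (i) choosing which $j$ positions of $S$ carry a $1$, among $\binom{t}{j}$ possibilities, and (ii) choosing the remaining $w-j$ positions of $\mathrm{supp}(\vec e)$ inside the complementary set $\llbracket 1,n\rrbracket \setminus S$ of size $n-t$, giving $\binom{n-t}{w-j}$ possibilities. The two choices are independent, so the class has size $\binom{t}{j}\binom{n-t}{w-j}$.

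Substituting back, the sum equals $\sum_{j=0}^{w} (-1)^j \binom{t}{j}\binom{n-t}{w-j}$, which is exactly $K_w(t)$ by the definition given just before the fact. Boundary cases (e.g.\ $j>t$ or $w-j>n-t$) cause no issue because the corresponding binomial coefficients vanish, so the upper summation bound $w$ can be kept without change. There is no real obstacle here: the only mildly delicate point is to observe that $\vec y \cdot \vec e$ depends solely on $j$ (not on the finer structure of $\vec e$), which is what makes the sum collapse cleanly onto the Krawtchouk polynomial evaluated at $t$.
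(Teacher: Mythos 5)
Your proof is correct, and it is the standard combinatorial argument for this identity (essentially the one found in the reference the paper cites, van Lint's textbook). The paper itself offers no proof and simply quotes the result, so there is nothing to diverge from: grouping $\vec e$ by the overlap size $j = \sharp(\mathrm{supp}(\vec e)\cap\mathrm{supp}(\vec y))$, noting $(-1)^{\vec y\cdot\vec e}=(-1)^j$, and counting each class as $\binom{t}{j}\binom{n-t}{w-j}$ reproduces the defining sum for $K_w(t)$ exactly.
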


This leads us to 
$$
\widehat{\unifs{w}}(\vec{x}) = \frac{1}{2^{n}} \; K_{w}(|\vec{x}|)\bigg/\binom{n}{w}.$$
By plugging this in Equation \eqref{bound:RandCodes} of Proposition \ref{propo:Rcode} we obtain 
\begin{equation}
	\mathbb{E}_{\CC}\left( \Delta( u,	\unifs{w}^{\CC})\right) \leq \sqrt{\sum_{t > 0} \frac{\binom{n}{t}}{2^{k}} \left( \frac{K_{w}(t)}{\binom{n}{w}}\right)^{2}}.
\end{equation}

The above sum can be upper-bounded by observing that $\left( K_{w}/\sqrt{\binom{n}{w}}\right)_{0\leq w \leq n}$ is an orthonormal basis of functions $f:\{0,1,\cdots,n\}\rightarrow \C$ for the inner product  $\langle f,g \rangle_{\textup{rad}} \eqdef \sum_{t=0}^{n} f(t)\overline{g(t)} \binom{n}{t}/2^{n}$. It can be viewed as the standard inner product between radial functions over $\F_2^n$. In particular, $\sum_{t=0}^{n} \frac{K_{w}(t)^{2}}{\binom{n}{w}} \; \frac{\binom{n}{t}}{2^{n}} = 1$ \cite[Corollary 2.3]{L95}. 
Therefore, for random codes we obtain the following proposition
\begin{proposition}\label{propo:BUnifRandCase} We have for random $\CC$ chosen according to $\crand{n}{k}$
\begin{equation} \label{eq:BUnifRandCase}
	\mathbb{E}_{\CC}\left( \Delta(u,\unifs{w}^{\CC}) \right) \leq \sqrt{2^{n-k}\bigg/\binom{n}{w}}.
\end{equation}  
\end{proposition}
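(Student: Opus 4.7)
The plan is to specialise the general bound of Proposition~\ref{propo:Rcode} to the distribution $f = \unifs{w}$ and then exploit the orthonormality of the normalised Krawtchouk polynomials to collapse the resulting sum in closed form.

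First I would substitute the Fourier transform $\widehat{\unifs{w}}(\xv) = K_w(|\xv|)/\bigl(2^n \binom{n}{w}\bigr)$ (which the excerpt derives from Fact~\ref{fact:Kraw}) into the right-hand side of Equation~\eqref{bound:RandCodes}. The outer factor $2^n$ cancels the $1/2^n$ coming from $|\widehat{\unifs{w}}(\xv)|^2$, leaving
\[
    \mathbb{E}_{\CC}\!\left(\Delta(u,\unifs{w}^{\CC})\right) \;\leq\; \sqrt{\frac{1}{2^{k}\,\binom{n}{w}^{2}} \sum_{t>0} \binom{n}{t}\, K_{w}(t)^{2}}.
\]

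Next, I would invoke the orthonormality statement mentioned just above the proposition: with respect to the inner product $\scpr{\cdot}{\cdot}$, the family $\bigl(K_w/\sqrt{\binom{n}{w}}\bigr)_{0 \leq w \leq n}$ is orthonormal, so taking the squared norm of $K_w/\sqrt{\binom{n}{w}}$ yields
\[
    \sum_{t=0}^{n} \binom{n}{t}\, K_{w}(t)^{2} \;=\; 2^{n}\,\binom{n}{w}.
\]
Since all summands are non-negative, dropping the $t=0$ term only shrinks the sum, giving $\sum_{t>0} \binom{n}{t} K_w(t)^2 \leq 2^n \binom{n}{w}$. Plugging this bound back in produces
\[
    \mathbb{E}_{\CC}\!\left(\Delta(u,\unifs{w}^{\CC})\right) \;\leq\; \sqrt{\frac{2^{n}\,\binom{n}{w}}{2^{k}\,\binom{n}{w}^{2}}} \;=\; \sqrt{\frac{2^{n-k}}{\binom{n}{w}}},
\]
which is the claimed inequality.

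There is essentially no obstacle beyond correctly bookkeeping the $2^n$ normalisation constants in the Fourier transform of $\unifs{w}$; the entire argument is a direct plug-in followed by a Parseval-type identity for radial functions. The only mildly delicate point is that the orthonormality identity naturally gives a sum starting at $t=0$, whereas the bound requires the sum over $t>0$; this is harmless since we use it as an upper bound and the $t=0$ contribution is non-negative.
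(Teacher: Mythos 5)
Your proof is correct and follows exactly the paper's argument: substitute the Krawtchouk-polynomial expression for $\widehat{\unifs{w}}$ into Equation~\eqref{bound:RandCodes}, then apply the orthonormality identity $\sum_{t=0}^{n} \binom{n}{t} K_{w}(t)^{2} = 2^{n}\binom{n}{w}$ and drop the non-negative $t=0$ term. No gaps; the bookkeeping of the $2^n$ factor is handled correctly.
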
 
In other words, if one wants to smooth a random code with target distance $2^{-\Omega(n)}$ via the uniform distribution over a sphere, one has to choose its radius $w\leq n/2$ 
such that $\binom{n}{w} = 2^{\Omega(n)} \; 2^{n-k}$. It is readily seen that for fixed code rate $R \eqdef \frac{k}{n}$, choosing any fixed ratio $\omega \eqdef \frac{w}{n}$ such that $\omega >\ogv(R)$ is enough, where $\ogv(R)$ corresponds to the asymptotic relative Gilbert-Varshamov (GV) bound
\[
	\omega_{\textup{GV}}(R) \eqdef h^{-1}(1-R) \ ,
\]
with $h^{-1}:[0,1] \to [0,1/2]$ being the inverse of the binary entropy function $h(p) = -p\log_2(p)-(1-p)\log_2(1-p)$. The GV bound $\ogv(R) $ appears ubiquitously in the coding-theoretic literature: amongst other contexts, it arises as the (expected) relative minimum distance of a random code of dimension $Rn$, or as the maximum relative minimum error weight for which decoding over the binary symmetric channel can be successful with non-vanishing error probability.

This value of radius $n \omega_{\textup{GV}}(R)$ is optimal: clearly, the support size of an error distribution smoothing a code $\CC$ must exceed $\sharp\F_2^n/\CC$. Thus, we cannot expect to smooth a code $\CC$ with errors in the sphere $\mathcal{S}_{w}$ if its volume is smaller than $2^{n-k} = \sharp\F_{2}^{n}/\CC$.

Therefore the uniform distribution over a sphere is optimal for random codes. By this, we mean that it leads to the smallest amount of possible noise (when it is concentrated on a ball) to smooth a random code. Notice that we obtained this result after applying the chain of arguments Cauchy-Schwarz, Parseval and Poisson to bound the statistical distance.

{\bf \noindent About the original chain of arguments of Micciancio and Regev.} It can be verified that by coming back to the original steps of \cite{MR07,ADRS15}, namely the Poisson summation formula and then the triangle inequality, we would obtain
\begin{equation}\label{eq:PFTI}
\Delta\left(u, f^{\CC} \right) \leq 2^{n} \sum_{t \geq \dmin(\dual{\CC})} \Neq{t}{\dual{\CC}} |\widehat{f}(t)| .
\end{equation} 
By using that $a^{2}+b^{2} \leq (a+b)^{2}$ (when $a,b \geq 0$) we see that our bound (Proposition \ref{propo:FBSDCod}) is sharper. It turns out that our bound is exponentially sharper for random codes (and even in the worst case) when choosing $f$ as the uniform distribution over a sphere of radius $w$, namely $f = \unifs{w}$. In this case the Micciancio-Regev argument yields the following computation
\begin{align}
	\mathbb{E}_{\CC}\left(\Delta\left( u, \unifs{w}^{\CC} \right)  \right) &\leq \mathbb{E}_{\CC}\left(  \sum_{t \geq \dmin(\dual{\CC})} \Neq{t}{\dual{\CC}} \; \frac{|K_{w}(t)|}{\binom{n}{w}} \right)  \nonumber\\ 
	&= \sum_{t > 0} \frac{\binom{n}{t}}{2^{k}} \; \frac{|K_{w}(t)|}{\binom{n}{w}} . \label{eq:BUnifMRRandCase}
\end{align}
To carefully estimate this upper-bound (and to compare with \eqref{eq:BUnifRandCase}) we are going to use the following proposition, which gives the asymptotic behaviour of $K_{w}$ (see for instance \cite{IS98,DT17}).

\begin{proposition}\label{prop:expansion}
	Let $n,t$ and $w$ be three positive integers.
	We set $\tau \eqdef \frac{t}{n}$, $\omega = \frac{w}{n}$ and $\omega^{\perp} \eqdef 1/2 - \sqrt{\omega(1-\omega)}$.
	We assume $w \leq n/2$. Let $z \eqdef \frac{1-2\tau  - \sqrt{D}}{2 (1-\omega)}$ where
	$D \eqdef \left(1- 2 \tau \right)^2-4 \omega(1-\omega)$.
	In the case $\tau \in ( 0,\omega^\perp)$, 
	\begin{equation*}
		K_{w}(t) = O\left( 2^{n(a(\tau,\omega)+o(1))}\right) \quad \mbox{where} \quad a(\tau,\omega) \eqdef \tau \log_{2}(1-z) + (1-\tau)\log_{2}(1+z) - \omega\log_{2}z .
\end{equation*}
	\item In the case $\tau \in (\omega^\perp,1/2)$, $D$ is negative, 
	and 
	\begin{equation*}
		K_{w}(t) = O\left( 2^{n(a(\tau,\omega)+o(1))} \right) \quad \mbox{where} \quad a(\tau,\omega) \eqdef \frac{1}{2}(1+h(\omega)-h(\tau)). 
\end{equation*}
\end{proposition}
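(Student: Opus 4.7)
The plan is to derive both estimates from the classical generating function
\[
\sum_{w=0}^{n} K_w(t)\, z^w \;=\; (1-z)^t (1+z)^{n-t},
\]
which follows by expanding both factors via the binomial theorem and identifying the coefficient of $z^w$ with the defining alternating sum. Cauchy's formula then gives, for every $r>0$,
\[
K_w(t) \;=\; \frac{1}{2\pi i}\oint_{|z|=r} \frac{(1-z)^t (1+z)^{n-t}}{z^{w+1}}\,dz,
\qquad
|K_w(t)| \;\leq\; \frac{1}{r^{w}}\,\max_{|z|=r}\bigl|1-z\bigr|^{t}\bigl|1+z\bigr|^{n-t}.
\]
The strategy is then to choose $r$ optimally (saddle-point method), the two regimes in the statement corresponding to whether the saddle is real or complex.

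For the first regime $\tau \in (0,\omega^\perp)$, I would look for a real critical point of the ``phase function''
\[
\phi(z) \;\eqdef\; \tau\log(1-z) + (1-\tau)\log(1+z) - \omega \log z,
\]
whose derivative vanishes iff $(1-\omega)z^2 - (1-2\tau)z + \omega = 0$. The discriminant of this quadratic equals exactly $D = (1-2\tau)^2 - 4\omega(1-\omega)$, which is non-negative precisely for $\tau \leq \omega^\perp$. Taking the smaller (positive) root $z^* = \tfrac{1-2\tau-\sqrt{D}}{2(1-\omega)}$ and setting $r = z^*$, one checks by a standard argument that $|1-z|^{t}|1+z|^{n-t}/|z|^{w}$ on the circle $|z|=r$ attains its maximum at $z=r$ (because $r<1$ makes the two real factors log-concave in $\cos\theta$ in the right direction). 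Substituting gives $|K_w(t)| \leq 2^{n(a(\tau,\omega)+o(1))}$ with $a(\tau,\omega) = \tau\log_2(1-z^*)+(1-\tau)\log_2(1+z^*)-\omega\log_2 z^*$, matching the stated expression.

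For the second regime $\tau\in(\omega^\perp,1/2)$, there is no real saddle: the critical points of $\phi$ lie on the unit circle of radius $r_\star\eqdef\sqrt{\omega/(1-\omega)}$. I would take exactly $r = r_\star$ and bound the integrand uniformly. Using $|1\pm re^{i\theta}|^2 = 1\pm 2r\cos\theta + r^2$, the product $|1-z|^{2t}|1+z|^{2(n-t)}$ is maximized over $\theta$ by a direct AM/GM-type analysis, and one obtains
\[
\max_{|z|=r_\star}\frac{|1-z|^{t}|1+z|^{n-t}}{r_\star^{w}}
\;=\; 2^{n\left(\tfrac{1}{2}(1+h(\omega)-h(\tau))+o(1)\right)},
\]
after simplifying $(1+r_\star^2)^{n} = (1-\omega)^{-n}$ and $r_\star^{-w} = (\omega/(1-\omega))^{-w/2}$ using Stirling.

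The main technical obstacle is the bookkeeping in the second regime: one has to verify that the optimum of the modulus of the integrand over $|z|=r_\star$ does indeed have the asserted exponential order, and that this radius is the optimal one (any smaller or larger radius strictly increases the bound). This can be done by differentiating the explicit expression for $\max_\theta$ with respect to $r$, or more conceptually by observing that on $|z|=r_\star$ the complex conjugate saddles are both on the contour, so the contour already passes through the dominant critical points and no better radius exists. The remaining $o(1)$ losses come only from Stirling and from the polynomial factors $1/r$, $(w+1)$ versus $w$, etc., all of which are subexponential.
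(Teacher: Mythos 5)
The paper does not prove Proposition~\ref{prop:expansion} itself; it simply cites~\cite{IS98,DT17}, which establish exactly these asymptotics for Krawtchouk polynomials, and the cited arguments are of the same type as yours: the generating function $\sum_w K_w(t)z^w=(1-z)^t(1+z)^{n-t}$, Cauchy's integral formula, and a saddle-point choice of radius. So your proposal is not merely an alternative route, it is essentially a reconstruction of the standard proof of a result the paper takes as imported.

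Two points you wave at need to be made precise, and they are precisely where the regime split $\tau\lessgtr\omega^\perp$ comes from, so they are worth spelling out. On the circle $|z|=r$, writing $u=\cos\theta$, the function $u\mapsto \tau\log(1-2ru+r^2)+(1-\tau)\log(1+2ru+r^2)$ is concave, so its maximum is at $u=1$ iff its derivative at $u=1$ is nonnegative, which after simplification is the condition $(1-2\tau)(1+r^2)\geq 2r$. For $r=z^*$ the smaller real root of $(1-\omega)z^2-(1-2\tau)z+\omega=0$, one verifies that this inequality holds exactly when $\tau\leq\omega^\perp$ (with equality on the boundary, where $z^*=\sqrt{\omega/(1-\omega)}$). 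Conversely, in the second regime the interior critical point $u^*=(1-2\tau)(1+r_\star^2)/(2r_\star)$ of that concave function lies in $[-1,1]$ iff $(1-2\tau)\leq 2\sqrt{\omega(1-\omega)}$, i.e.\ iff $\tau\geq\omega^\perp$; only then is your evaluation of the maximum correct, and substituting $r_\star^2=\omega/(1-\omega)$ into it does give $\tfrac12(1+h(\omega)-h(\tau))$ as you claim. Finally, for an $O(2^{n(a+o(1))})$ upper bound you do not need to argue that $r_\star$ is the optimal radius (the last paragraph of your sketch); any admissible radius gives an upper bound, so that discussion can be dropped.
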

We let,
\begin{align*}
\omega_0 &\eqdef \varlimsup_{n \rightarrow \infty}\left\{\frac{w}{n}: \;\sqrt{2^{n(1-R)}\bigg/\binom{n}{w}} \geq 1\right\},\\
\omega_1 &\eqdef \varlimsup_{n \rightarrow \infty}\left\{\frac{w}{n}: \;\sum_{t > 0} \frac{\binom{n}{t}}{2^{Rn}} \;\frac{|K_{w}(t)|}{\binom{n}{w}} \geq 1 \right\}.
\end{align*}
In Figure \ref{figure:compareBothApproach} we compare the asymptotic values of $\omega_0$ and $\omega_1$ as functions of $R$.
 Notice that $\omega_0 = \ogv(R)$. We see that $\omega_1$ is undefined for a rate $R < 1/2$. In other words, it is impossible to show that $\mathbb{E}_{\CC}\left( \Delta(u, \unifs{w}^{\CC}) \right) \leq 2^{-\Omega(n)}$ with the standard approach of \cite{MR07,ADRS15} when $R<1/2$. Furthermore, for larger rates (and sufficiently large $n$),  $\omega_0$ is much smaller than $\omega_1$. 
\begin{center}
	\begin{figure}[htb]
		\includegraphics[height=6cm]{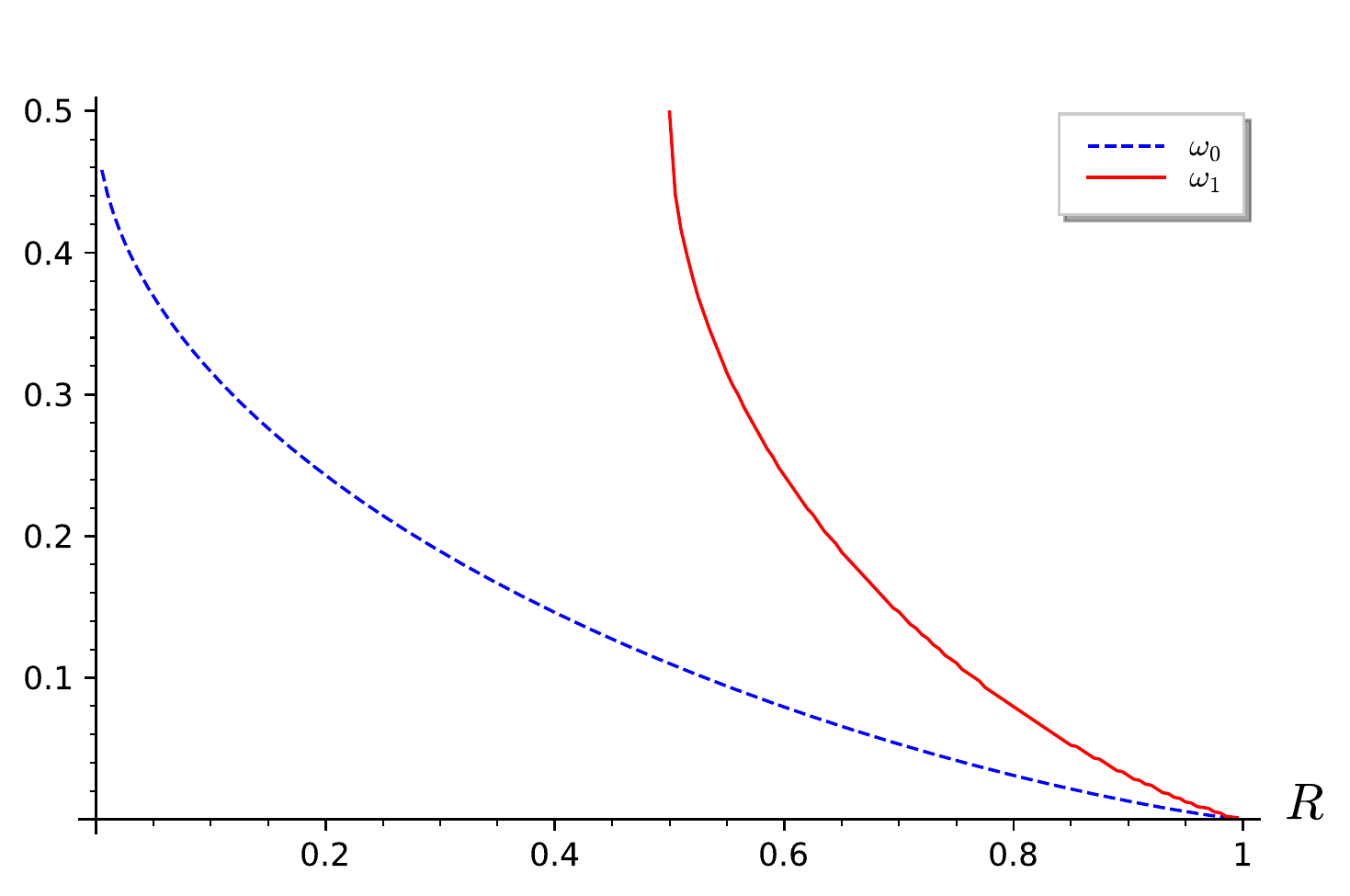}
		\caption{$\omega_{0}$  and  $\omega_{1}$ as functions of $R \eqdef \frac{k}{n}$.
\label{figure:compareBothApproach}}
	\end{figure}
\end{center}

\medskip
{\noindent \bf Bernoulli Distribution.} Another natural distribution to consider when dealing with codes is the so-called ``Bernoulli'' distribution $\fber{p}$, which is defined for $p\in[0,1/2]$ as
$$
\forall \vec{x}\in\F_{2}^{n}, \quad \fber{p}(\vec{x}) \eqdef p^{|\vec{x}|} (1-p)^{n-|\vec{x}|} .
$$
This choice leads to simpler computations compared to the uniform distribution over a sphere. For instance we have $\widehat{\fber{p}}(\vec{x}) = \frac{1}{2^{n}}(1-2p)^{|\vec{x}|}$. By plugging this in Equation \eqref{bound:RandCodes} of Proposition \ref{propo:Rcode} we obtain 
\begin{align} 
	\mathbb{E}_{\CC}\left( \Delta(u,\fber{p}^{\CC}) \right) &\leq \sqrt{\sum_{t > 0}\frac{\binom{n}{t}}{2^{k}}(1-2p)^{2t}} \nonumber \\ 
	&\leq \sqrt{\frac{1}{2^{k}} \; (1+(1-2p)^{2})^{n} } \label{eq:BUnifBerCase}
\end{align}

Thus, if one wants to smooth a random code at target distance $2^{-\Omega(n)}$ with the Bernoulli distribution, the above argument says that one has to choose $p > p_{0} \eqdef \frac{1}{2}\left(  1-\sqrt{2^{R}-1} \right)$ where $R = k/n$. 
As $\mathbb E_{\fber{p}}(|\vec x|) = pn$, it is meaningful to compare $p_{0}$ and $\omega_0$.
It is readily seen that $\omega_0=\ogv(R)=h^{-1}(1-R) < \frac{1}{2}\left(  1-\sqrt{2^{R}-1}\right) =p_0$.
In other words, this time the upper-bound given by Proposition  \ref{propo:Rcode} does not give what would be optimal, namely the Gilbert-Varshamov relative distance $\ogv(R)$, but a quantity which is bigger. However, it is expected that the average amount of noise to smooth a random code is the same in both cases, since a Bernoulli distribution of parameter $p$ is extremely concentrated over words of Hamming weight $pn$ and that therefore  $\Delta(u, \fber{p}^{\CC}) \approx \Delta(u, \unifs{pn}^{\CC})$. This suggests that Proposition 
 \ref{propo:Rcode} is not tight in this case. This is indeed the case, we can prove that we can smooth a random code with the Bernoulli noise as soon as $p > \ogv(R)$. This follows from the following proposition.

 \begin{restatable}{proposition}{BerVSUnif}\label{propo:BerVSUnif}  Let $\varepsilon > 0$ and $p\in [0,1/2]$. Then,
 	$$
 	\Delta(u,f^{\CC}_{\textup{ber},p}) \leq \sum_{r=(1-\varepsilon)np}^{(1+\varepsilon)np} \Delta(u,\unifs{r}^{\CC})  + 2^{-\Omega(n)} .
 	$$

 \end{restatable}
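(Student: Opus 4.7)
The plan is to decompose the Bernoulli distribution $\fber{p}$ as a convex combination of uniform sphere distributions, and then combine linearity of the $\CC$-periodization with a Chernoff tail bound. Concretely, since $\fber{p}(\vec x) = p^{\wt{x}}(1-p)^{n-\wt{x}}$ depends only on the Hamming weight, one may write
\[
\fber{p} \;=\; \sum_{r=0}^{n} q_{r}\, \unifs{r}, \qquad q_{r} \eqdef \binom{n}{r}p^{r}(1-p)^{n-r},
\]
simply by grouping the mass of $\fber{p}$ according to weight. Because $f \mapsto f^{\CC}$ is linear, we get $\fber{p}^{\CC} = \sum_{r} q_{r}\, \unifs{r}^{\CC}$.

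Next, I would use convexity of statistical distance: if $(q_{r})_{r}$ is a probability distribution on $\{0,\dots,n\}$ and each $g_{r}$ is a probability distribution over $\F_{2}^{n}/\CC$, then
\[
\Delta\!\left(u,\sum_{r} q_{r} g_{r}\right)
\;\leq\; \sum_{r} q_{r}\,\Delta\!\left(u, g_{r}\right),
\]
which follows directly from $|\sum_{r} q_{r}(u - g_{r})|\leq \sum_{r} q_{r}|u-g_{r}|$ and the definition of $\Delta$. Applied here, this yields $\Delta(u,\fber{p}^{\CC}) \leq \sum_{r=0}^{n} q_{r}\,\Delta(u,\unifs{r}^{\CC})$.

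Finally, I would split this sum according to whether $r$ lies in the Chernoff window $I_{\varepsilon} \eqdef [(1-\varepsilon)np,(1+\varepsilon)np]$. For $r \in I_{\varepsilon}$ simply bound $q_{r} \leq 1$ to recover the claimed partial sum $\sum_{r\in I_{\varepsilon}} \Delta(u,\unifs{r}^{\CC})$. For $r \notin I_{\varepsilon}$, bound $\Delta(u,\unifs{r}^{\CC})\leq 1$ and observe that $\sum_{r \notin I_{\varepsilon}} q_{r}$ is exactly the probability that a Binomial$(n,p)$ variable deviates from $np$ by a multiplicative factor $\varepsilon$, which the classical Chernoff bound controls by $2\exp(-\Omega(\varepsilon^{2} np)) = 2^{-\Omega(n)}$ (for any fixed $\varepsilon>0$ and $p$ bounded away from $0$).

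There is no serious obstacle here: the only subtlety is to make sure that the two components of the decomposition commute, namely that $(\sum_r q_r \unifs{r})^{\CC} = \sum_r q_r \unifs{r}^{\CC}$, which is immediate from the definition of the $\CC$-periodization, and that the convex combination step for $\Delta$ is applied in the quotient space $\F_{2}^{n}/\CC$ rather than in $\F_{2}^{n}$. Putting the three steps together yields exactly the stated inequality.
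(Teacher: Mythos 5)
Your proof is correct. The underlying ideas are the same as the paper's — decompose the Bernoulli into a convex combination of uniform-sphere distributions, commute with periodization, apply convexity of the statistical distance, and discard the Chernoff tail — but your bookkeeping is more direct. The paper first introduces an explicit \emph{truncated} Bernoulli distribution $\fberTrunc{p}$ (supported only on the window $[(1-\varepsilon)np,(1+\varepsilon)np]$), proves via Chernoff that $\Delta(\fber{p},\fberTrunc{p})=2^{-\Omega(n)}$, shows this transfers through periodization so that $\Delta(u,\fber{p}^{\CC})\leq\Delta(u,\fberTrunc{p}^{\CC})+2^{-\Omega(n)}$, and only then decomposes $\fberTrunc{p}$ over spheres. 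You instead decompose the untruncated $\fber{p}$ immediately, apply convexity once, and split the resulting sum by hand, bounding $q_r\leq 1$ inside the window and $\Delta\leq 1$ outside, with Chernoff handling $\sum_{r\notin I_\varepsilon}q_r$. This yields a shorter proof of the proposition as stated. What the paper's detour through $\fberTrunc{p}$ buys is reusability: the truncated distribution reappears in the proof of Theorem~\ref{theo:finalUBSD}, where one needs pointwise control of the Fourier transform $\widehat{\fberTrunc{p}}$ and the observation that it vanishes (up to $2^{-\Omega(n)}$) on large-weight vectors, so defining the object explicitly once is convenient. One minor shared caveat, which you correctly flag: the Chernoff tail $\sum_{r\notin I_\varepsilon}q_r$ is $2^{-\Omega(n)}$ only if $p$ is bounded away from $0$ (the multiplicative bound is $\exp(-\Omega(\varepsilon^2 pn))$, not $\exp(-\Omega(\varepsilon^2 n))$); the paper's statement of Lemma~\ref{lemma:gepsFber} glosses over this too, so you are no worse off.
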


\begin{proof} 
	See Appendix \ref{app:BerVSUnif}.
\end{proof}

This proposition shows that if one wants $\Delta(u,f^{\CC}_{\textup{ber},p}) \leq 2^{-\Omega(n)}$ it is enough to have $\Delta(u,f^{\CC}_{\textup{unif},r}) \leq 2^{-\Omega(n)}$ for any $r \in \left[(1-\varepsilon)np,(1+\varepsilon)np\right]$. This can be achieved by choosing 
$\varepsilon$ and $p$ such that $(1-\varepsilon)p > \ogv(R)$.

To summarize this subsection we have the following theorem

\begin{theorem}Let $\CC$ be a random code chosen according to $\crand{n}{k}$, $R \eqdef \frac{k}{n}$.
Let $u$ (resp. $\unifs{\lceil pn \rceil}$) be the uniform distribution over $\F_{2}^{n}/\CC$ (resp. $\mathcal{S}_{w}$) and $\fber{p}$ be the Bernoulli distribution over $\F_{2}^{n}$ of parameter $p$. We have,
	$$
	\mathbb{E}_{\CC}\left( \Delta( u,	\unifs{\lceil pn \rceil}^{\CC})\right) \leq \; 2^{\frac{n}{2}\left( 1-R - h(p) + o(1)  \right)} \quad \mbox{and} \quad \mathbb{E}_{\CC}\left( \Delta( u,	\fber{p}^{\CC})\right) \leq 2^{\frac{n}{2}\left( 1-R - h(p) + o(1)  \right)}.
	$$ 
	In particular, $\mathbb{E}_{\CC}\left( \Delta( u,	\unifs{\lceil pn \rceil}^{\CC})\right) \leq 2^{-\Omega(n)}$ and $\mathbb{E}_{\CC}\left( \Delta( u,	\fber{p}^{\CC})\right) \leq 2^{-\Omega(n)}$ for any fixed $p > \ogv(R)$. 
\end{theorem}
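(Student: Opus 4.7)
The proof splits naturally into two parts, one for each error distribution, and the Bernoulli bound reduces to the uniform one via Proposition~\ref{propo:BerVSUnif}.

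For the uniform sphere case, the plan is to directly plug $w = \lceil pn \rceil$ into Proposition~\ref{propo:BUnifRandCase}, which gives $\mathbb{E}_{\CC}\left(\Delta(u,\unifs{\lceil pn \rceil}^{\CC})\right) \leq \sqrt{2^{n-k}/\binom{n}{\lceil pn\rceil}}$. The standard Stirling estimate $\binom{n}{\lceil pn \rceil} = 2^{n(h(p)+o(1))}$ then yields the claimed bound $2^{\frac{n}{2}(1-R-h(p)+o(1))}$. No real obstacle here, just book-keeping of asymptotics.

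For the Bernoulli case, I would take expectations over $\CC \sim \crand{n}{k}$ on both sides of Proposition~\ref{propo:BerVSUnif}, which (by Fubini/linearity of expectation) yields
\begin{equation*}
	\mathbb{E}_{\CC}\left(\Delta(u,\fber{p}^{\CC})\right) \;\leq\; \sum_{r=\lceil(1-\varepsilon)np\rceil}^{\lfloor(1+\varepsilon)np\rfloor} \mathbb{E}_{\CC}\left(\Delta(u,\unifs{r}^{\CC})\right) \;+\; 2^{-\Omega(n)}.
\end{equation*}
Each term on the right is controlled by the first half of the theorem, giving an upper bound of $2^{\frac{n}{2}(1-R-h(r/n)+o(1))}$. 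Assuming $p < 1/2$ and $\varepsilon$ small enough, the function $h$ is strictly increasing on the window $[(1-\varepsilon)p,(1+\varepsilon)p]$, so the maximum of the summand is attained at the left endpoint $r = \lceil (1-\varepsilon)np \rceil$. The sum has at most $O(n)$ terms, which is absorbed into the $o(1)$ in the exponent. Hence
\begin{equation*}
	\mathbb{E}_{\CC}\left(\Delta(u,\fber{p}^{\CC})\right) \;\leq\; 2^{\frac{n}{2}(1-R-h((1-\varepsilon)p)+o(1))} + 2^{-\Omega(n)}.
\end{equation*}
Since $\varepsilon > 0$ is an arbitrary (free) parameter and $h$ is continuous at $p$, I would conclude by letting $\varepsilon \to 0$ (formally: choosing a sufficiently small but fixed $\varepsilon$ depending on how close we want to approach $h(p)$) to recover the advertised bound $2^{\frac{n}{2}(1-R-h(p)+o(1))}$.

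For the ``in particular'' clause, observe that $\ogv(R) = h^{-1}(1-R)$ and $h$ is strictly increasing on $[0,1/2]$, so $p > \ogv(R)$ is equivalent to $h(p) > 1-R$, i.e.\ $1 - R - h(p) < 0$. Both exponents are then bounded by $-\delta n$ for some constant $\delta > 0$, yielding the $2^{-\Omega(n)}$ conclusion. The main subtlety in the whole argument is simply ensuring that the window $[(1-\varepsilon)p,(1+\varepsilon)p]$ stays inside $[0,1/2]$ and that the number of lattice weights in it grows only polynomially, both of which are immediate.
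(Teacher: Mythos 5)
Your approach matches the paper's implicit argument (the theorem is introduced with ``to summarize this subsection'' and carries no separate proof): Proposition~\ref{propo:BUnifRandCase} plus Stirling for the uniform-sphere case, and Proposition~\ref{propo:BerVSUnif} to reduce the Bernoulli case to a sum of sphere bounds. The sphere bound, the reduction, and the ``in particular'' conclusion are all sound.

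The one step that does not hold up is the concluding ``let $\varepsilon\to 0$'' for the Bernoulli case. The additive $2^{-\Omega(n)}$ term in Proposition~\ref{propo:BerVSUnif} is a Chernoff tail (see the proof of Lemma~\ref{lemma:gepsFber}) of the form $2e^{-\varepsilon^{2}n}$, so its exponential rate \emph{degrades} as $\varepsilon$ shrinks. If $p>\omega_{\textup{GV}}(R)$, the target decay rate is $\tfrac{1}{2}\bigl(h(p)-(1-R)\bigr)$, a fixed positive constant. Making $h((1-\varepsilon)p)$ approach $h(p)$ forces $\varepsilon$ small, at which point $\varepsilon^{2}/\ln 2$ drops below $\tfrac{1}{2}(h(p)-(1-R))$ and the Chernoff term dominates. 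So for any choice of $\varepsilon$ (fixed or $n$-dependent) the bound obtained is $\max\bigl(2^{\frac{n}{2}(1-R-h((1-\varepsilon)p)+o(1))},\,2^{-c\varepsilon^{2}n}\bigr)$, and one of the two terms always falls short of the advertised $2^{\frac{n}{2}(1-R-h(p)+o(1))}$. The literal $o(1)$ form of the Bernoulli bound is therefore not established by this chain of inequalities --- a looseness inherited from the theorem statement itself. What the argument \emph{does} deliver, and what the ``in particular'' clause actually requires, is: for any fixed $p>\omega_{\textup{GV}}(R)$, fix $\varepsilon>0$ so that $(1-\varepsilon)p>\omega_{\textup{GV}}(R)$ still holds; then both terms are $2^{-\Omega(n)}$ with constants depending on $p$, $R$, $\varepsilon$. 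You should either state the Bernoulli half in this weaker form, or flag explicitly that the $o(1)$ exponent there is not to be read literally.
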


\subsection{Smoothing a Fixed Code\label{subsec:smoothFCode}.} Our upper-bound on $\Delta(u,f^{\CC})$ given in Proposition \ref{propo:FBSDCod} involves the weight distribution of the dual of $\CC$, namely the $\Neq{t}{\dual{\CC}}$'s. To derive smoothing bounds on a fixed code our strategy will simply consist in using the best known upper bounds on the $\Neq{t}{\dual{\CC}}$'s. Roughly speaking, these bounds show that $\Neq{t}{\dual{\CC}} \leq \binom{n}{t}2^{-Kn}$ for some constant $K$ which is function of $\dmin(\dual{\CC})$.  
\newline

{\bf \noindent Notation.}
	Let $\delta \in (0,1/2)$ and $\delta \leq \tau \leq 1$,
	\begin{equation}\label{eq:bDeltaTau}
		b(\delta,\tau) \eqdef \mathop{\overline{\lim}}\limits_{n \to \infty} \mathop{\max}\limits_{\CC} \left\{ \frac{1}{n}\log_{2} \Neq{\lfloor\tau n\rfloor}{\CC} \right\} 
	\end{equation} 
	where the maximum is taken over all codes $\CC$ of length $n$ and minimum distance $\geq \delta n$.

We recall (or slightly extend) results taken from \cite{ABL01}:
\begin{restatable}{proposition}{PropoABL}\label{propo:ABL}
		Let $\delta \in (0,1/2)$ and $\delta^{\perp} \eqdef 1/2 - \sqrt{\delta(1-\delta)}$. For any $\delta \leq \tau \leq 1$
	\begin{equation}
		b(\delta,\tau) \leq c(\delta,\tau) \eqdef   \left\{
		\begin{array}{ll}
				h(\tau) + h\left( \delta^{\perp} \right) - 1 & \mbox{if } \tau \in [\delta,1-\delta], \\
			 	2\left( h(\delta^{\perp}) - a(\tau,\delta^{\perp}) \right) & \mbox{otherwise,}
		\end{array}
		\right.
	\end{equation}
	where $a(\cdot,\cdot)$ is defined in Proposition \ref{prop:expansion}.

\end{restatable}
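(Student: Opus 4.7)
The plan is to follow the polynomial / linear programming method, as developed in~\cite{ABL01} for bounding distance distributions of codes. For any code $\CC$ of length $n$ and minimum distance at least $\delta n$, its distance distribution $(A_t)_{0 \leq t \leq n}$ (with $A_t \eqdef \Neq{t}{\CC}/\sharp\CC$) satisfies $A_0 = 1$, $A_t = 0$ for $0 < t < \delta n$, $A_t \geq 0$ everywhere, and has a non-negative MacWilliams transform. Standard LP duality then says: any polynomial $f(x) = \sum_{j=0}^n f_j K_j(x)$ with $f_j \geq 0$ and $f(j) \leq 0$ on $\{\lceil\delta n\rceil,\dots,n\} \setminus \{\lfloor\tau n\rfloor\}$ yields, upon pairing $f$ against $(A_t)$, the inequality
$$
\Neq{\lfloor\tau n\rfloor}{\CC} \cdot f(\lfloor\tau n\rfloor) \leq f(0).
$$
The whole proof thus reduces to exhibiting an appropriate witness polynomial $f$ and estimating the ratio $f(0)/f(\lfloor\tau n\rfloor)$ asymptotically.

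For $\tau \in [\delta, 1-\delta]$, this is precisely the setting of~\cite{ABL01}: one takes the MRRW / Kabatiansky--Levenshtein polynomial that certifies the second LP bound $R \leq h(\delta^{\perp})$ for a code of relative minimum distance $\delta$, and modifies it by dividing out a simple factor vanishing at $\lfloor\tau n\rfloor$ so that it becomes a valid witness for bounding $\Neq{\lfloor\tau n\rfloor}{\CC}$ rather than $\sharp\CC$. One then verifies $f(0) \approx 2^{n h(\delta^\perp) + o(n)}$ and $f(\lfloor\tau n\rfloor) \approx 2^{n(1-h(\tau)) + o(n)}$, which gives the first claimed bound $b(\delta,\tau) \leq h(\tau) + h(\delta^{\perp}) - 1$.

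For $\tau \in (1-\delta, 1]$, which is the ``slight extension'' announced in the statement, the argument proceeds with the same polynomial template, but the evaluation at $\lfloor\tau n\rfloor$ is no longer controlled by simple entropy estimates; instead one applies the Krawtchouk asymptotics recorded in Proposition~\ref{prop:expansion}. Concretely, using a squared variant of the MRRW witness centered at weight $\delta^{\perp} n$, one obtains $f(0) \approx 2^{2 n h(\delta^{\perp}) + o(n)}$ and $f(\lfloor\tau n\rfloor) \approx 2^{2 n a(\tau,\delta^{\perp}) + o(n)}$ (this is where the factor $2$ in the formula arises, and where $a(\tau,\delta^\perp)$ replaces the ``trivial'' entropic evaluation valid inside $[\delta,1-\delta]$). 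The resulting ratio gives $b(\delta,\tau) \leq 2\bigl(h(\delta^{\perp}) - a(\tau,\delta^{\perp})\bigr)$.

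The main obstacle in writing the full proof is the bookkeeping of the sign conditions on $f$ over the entire range $\{\lceil \delta n \rceil, \dots, n\}$ while simultaneously keeping its Krawtchouk coefficients non-negative; the existence and basic analytic properties of the MRRW polynomial are classical and can be quoted. A secondary but routine check is continuity of the two bounds at the boundary $\tau = 1-\delta$, ensuring the piecewise definition of $c(\delta,\tau)$ is consistent.
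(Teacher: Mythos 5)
Your high-level plan — LP/polynomial method, then Krawtchouk asymptotics from Proposition~\ref{prop:expansion} — is in the right spirit, and your identification of where the factor of $2$ comes from in the second regime is correct. But the execution you sketch diverges from the paper's argument in a way that leaves a real gap, and the paper's route is both simpler and more robust.

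The paper does not construct a fresh LP witness polynomial at all. It quotes \cite[Prop.~2 with $d'=0$]{ABL01} (Proposition~\ref{propo:BoundBarg} here), which bounds the \emph{entire sum} $\sum_{\vec c \in \CC\setminus\{\vec 0\}} K_w(|\vec c|)^2$ in terms of $\binom{n}{w}$, $\binom{n}{t}$ and $t \approx \delta^\perp n$. Because every term in this sum is a nonnegative square, one immediately gets $\Neq{\ell}{\CC} \leq \sum_{\vec c} K_w(|\vec c|)^2 / K_w(\ell)^2$, and the rest is pure optimization over $w$ plus Krawtchouk asymptotics. Crucially, the \emph{same} inequality serves for both regimes of $\tau$; only the choice of $\omega = w/n$ changes: $\omega \approx \tau^\perp$ when $\tau \in [\delta,1-\delta]$ (oscillating regime, giving $h(\tau)+h(\delta^\perp)-1$) and $\omega \approx \delta^\perp$ when $\tau \in (1-\delta,1]$ (here the paper invokes the symmetry $K_w(\ell)^2 = K_w(n-\ell)^2$ to fall back into the non-oscillating regime at $1-\tau < \omega^\perp$, which is what produces $2(h(\delta^\perp)-a(\tau,\delta^\perp))$). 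Your proposal instead describes two genuinely different witness polynomials for the two regimes, and never invokes this Krawtchouk reflection, without which Proposition~\ref{prop:expansion} does not even apply when $\tau > 1/2$.

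The concrete gap in your Case~1 step: you cannot ``divide out a simple factor vanishing at $\lfloor\tau n\rfloor$'' from the MRRW polynomial, because that polynomial has no root at a generic integer $\lfloor\tau n\rfloor$, so the quotient is a rational function, not a polynomial, and the LP dual requires a polynomial with nonnegative Krawtchouk coefficients that is non-positive on $\{d,\dots,n\}\setminus\{\lfloor\tau n\rfloor\}$. Building such a witness from scratch (rather than going through a bound on $\sum K_w^2$ and picking out one nonnegative term) requires delicate sign and positive-definiteness bookkeeping that you acknowledge but do not resolve; the paper sidesteps it entirely. Also, a small terminological slip: the bound $R \leq h(\delta^\perp)$ certified by the MRRW polynomial is the \emph{first} LP bound, not the second (the second is Proposition~\ref{propo:2LPB}).
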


\begin{proof}See Appendix \ref{app:proofPropoABL}.
\end{proof}

\begin{proposition}[{\cite[Proposition 4]{ABL01}}]\label{propo:2LPB} Let $\delta_{\textup{JSB}} \eqdef \left( 1 - \sqrt{1-2\delta}\right)/2$ and 
	$$
	\tau_{0} \eqdef \mathop{\argmin}\limits_{\delta_{\textup{JSB}} \leq \alpha \leq 1/2} 1- h(\alpha) + R_{1}(\alpha,\delta)
	$$
	where 
	$$
	R_{1}(\tau,\delta) \eqdef h\left(\frac{1}{2}\left( 1 - \sqrt{1- \left(\sqrt{4\tau(1-\tau)-\delta(2-\delta)} - \delta\right)^{2}}  \right)   \right).
	$$  
	 For any $\delta \leq \tau \leq 1$ 
	\begin{equation} 
		b(\delta,\tau) \leq  d(\delta,\tau) \eqdef \left\{
	\begin{array}{ll}
h(\tau) - h(\tau_{0}) + R_{1}(\tau_{0},\delta)  & \mbox{if } \tau \in (\delta_{\textup{JSB}},1-\delta_{\textup{JSB}}) \mbox{ and } \tau_{0} \leq \tau,  \\
			R_{1}(\tau,\delta)  & \mbox{if } \tau \in (\delta_{\textup{JSB}},1-\delta_{\textup{JSB}}) \mbox{ and } \tau_{0} > \tau,  \\
		 0 & \mbox{otherwise.}
	\end{array}
	\right.
	\end{equation}  

\end{proposition}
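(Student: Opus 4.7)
The strategy is to follow the approach of Ashikhmin--Barg--Litsyn \cite{ABL01}, combining the second linear programming bound for binary constant-weight codes with an averaging/shortening argument to handle weights exceeding a critical threshold. The first step is a standard reduction: for any code $\CC$ of length $n$ and minimum distance $\geq \delta n$, the set of codewords of weight exactly $\lfloor \tau n \rfloor$ forms a binary constant-weight code whose pairwise Hamming distances are still $\geq \delta n$. Thus $\Neq{\lfloor \tau n \rfloor}{\CC}$ is bounded by the maximum size $A(n,\lceil \delta n \rceil, \lfloor \tau n \rfloor)$ of such a constant-weight code, and the three regimes in the definition of $d(\delta,\tau)$ correspond to three ways of bounding this quantity.

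For $\tau \leq \delta_{\textup{JSB}}$ or $\tau \geq 1 - \delta_{\textup{JSB}}$, the classical Johnson bound (with a symmetric variant near $\tau = 1$) immediately gives $A(n,\lceil \delta n \rceil, \lfloor \tau n \rfloor) = \mathrm{poly}(n)$, whence $b(\delta,\tau) = 0$ in this regime. For $\tau \in (\delta_{\textup{JSB}}, 1 - \delta_{\textup{JSB}})$ with $\tau < \tau_0$, the second MRRW linear programming bound applied directly at parameters $(n, \delta n, \tau n)$ exhibits a dual feasible solution built from Krawtchouk polynomials whose asymptotic exponential rate is $R_1(\tau,\delta)$, yielding the claim.

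The main case is $\tau \in (\delta_{\textup{JSB}}, 1 - \delta_{\textup{JSB}})$ with $\tau \geq \tau_0$, where a direct LP call at relative weight $\tau$ is demonstrably weaker than the claim. The improvement relies on a random-shortening argument: one restricts the candidate constant-weight codewords to a uniformly random coordinate subset of size $n'$, chosen so that a codeword of initial weight $\tau n$ has restricted weight close to $\tau_0 n'$ with constant probability (by concentration of the hypergeometric distribution). Two distinct codewords remain distinct with overwhelming probability on such a restriction, and the restricted set satisfies a minimum-distance lower bound strong enough to feed the second LP bound at the target relative weight $\tau_0$. Counting the surviving codewords yields a loss factor of $2^{n(h(\tau)-h(\tau_0))+o(n)}$ coming from the hypergeometric entropy, while applying the LP bound at $\tau_0$ contributes the additive exponent $R_1(\tau_0,\delta)$. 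The choice of $\tau_0$ as the minimizer of $\alpha \mapsto 1-h(\alpha)+R_1(\alpha,\delta)$ is exactly what optimally balances these two contributions.

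The principal obstacle I anticipate is the execution of the random-restriction step: one must carefully track how the effective length and minimum distance of the restricted constant-weight code change, verify that the restricted object still qualifies for the LP bound at relative weight $\tau_0$, and argue that the entropy loss $h(\tau)-h(\tau_0)$ is actually attained (and not worse) for the optimal choice of $n'$. The remaining computations are a routine packaging of the standard Krawtchouk/LP machinery of \cite{ABL01}.
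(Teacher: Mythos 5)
The paper does not prove this proposition at all: it is stated verbatim as a citation to \cite[Proposition 4]{ABL01}, and (unlike the neighbouring Proposition~\ref{propo:ABL}, for which the authors supply an extension and a proof in Appendix~\ref{app:proofPropoABL}) no proof is given or intended here. So there is no ``paper's own proof'' to compare against; your sketch is a blind reconstruction of the argument in \cite{ABL01}.

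On the merits of the sketch: the overall scaffolding is right --- reduce to a constant-weight code, dispatch the extreme ranges $\tau \le \delta_{\textup{JSB}}$ or $\tau \ge 1-\delta_{\textup{JSB}}$ via the Johnson bound, and invoke a Krawtchouk-polynomial (second LP) bound on the middle range. But the mechanism you propose for the regime $\tau \ge \tau_0$ is not sound as stated. Restricting a weight-$\tau n$ vector to a uniformly random coordinate subset of size $n'$ gives a hypergeometrically distributed restricted weight concentrated around $\tau n'$, not around $\tau_0 n'$; landing near $\tau_0 n'$ is an exponentially rare tail event, not a constant-probability one, so the ``loss factor'' is not controlled the way you claim. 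Even if one conditions on this rare event via an averaging argument, two further issues arise: for exponentially many codewords, ``two distinct codewords remain distinct after restriction'' fails to hold simultaneously for all pairs (the union bound is far from sufficient), and more importantly the minimum distance of the restricted code need not be $\approx \delta n'$ --- random puncturing can collapse distances, and the resulting object has no a priori distance guarantee strong enough to feed back into the LP bound at relative weight $\tau_0$. The form $h(\tau)-h(\tau_0)+R_1(\tau_0,\delta)$ with $\tau_0$ the minimizer of $\alpha \mapsto 1-h(\alpha)+R_1(\alpha,\delta)$ in fact comes out of optimizing the \emph{choice of Krawtchouk dual witness polynomial} inside the LP framework of \cite{ABL01}, not from a combinatorial shortening/puncturing step. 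Your proposal correctly identifies that the exponent is split into an entropy term and an LP term balanced at $\tau_0$, but attributes that split to the wrong mechanism.
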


Both of these bounds are derived from ``linear programming arguments'' which were initially used to upper-bound the size of a code given its minimum distance. Proposition \ref{propo:ABL} is an extension of \cite[Theorem 3]{ABL01} in the case of linear codes, in particular we give an upper-bound for any $\tau \in [\delta,1]$ (and not for only $\tau \in [\delta,1/2]$). The proof is in the appendix. The second bound is usually called the {\em the second linear programming bound}. 
In terms of $\delta$ and $\tau$, Proposition \ref{propo:ABL} and \ref{propo:2LPB} are among the best (known) upper-bounds on $b(\delta,\tau)$. In the case where $0 \leq \delta \leq 0.273$, Proposition \ref{propo:2LPB} leads to better smoothing bounds compared to Proposition \ref{propo:ABL}.

\begin{remark}
	There exist many other bounds on $b(\delta,\tau)$, like \cite[Theorem 8]{ACKL05} which holds only for linear codes or \cite[Theorem 7]{ACKL05}. However for our smoothing bounds, Propositions \ref{propo:ABL} and \ref{propo:2LPB} lead to the best results, partly because these are the best bounds on the number of codewords of Hamming weight close to the minimum distance of the code.  
\end{remark}

We draw in Figures \ref{figure:Bounds01} and \ref{figure:Bounds035} the bounds of Propositions \ref{propo:ABL} and \ref{propo:2LPB} as function of $\tau \in [\delta,1]$ for a couple values of $\delta$.

\begin{center}
	\begin{figure}
		\includegraphics[height=6cm]{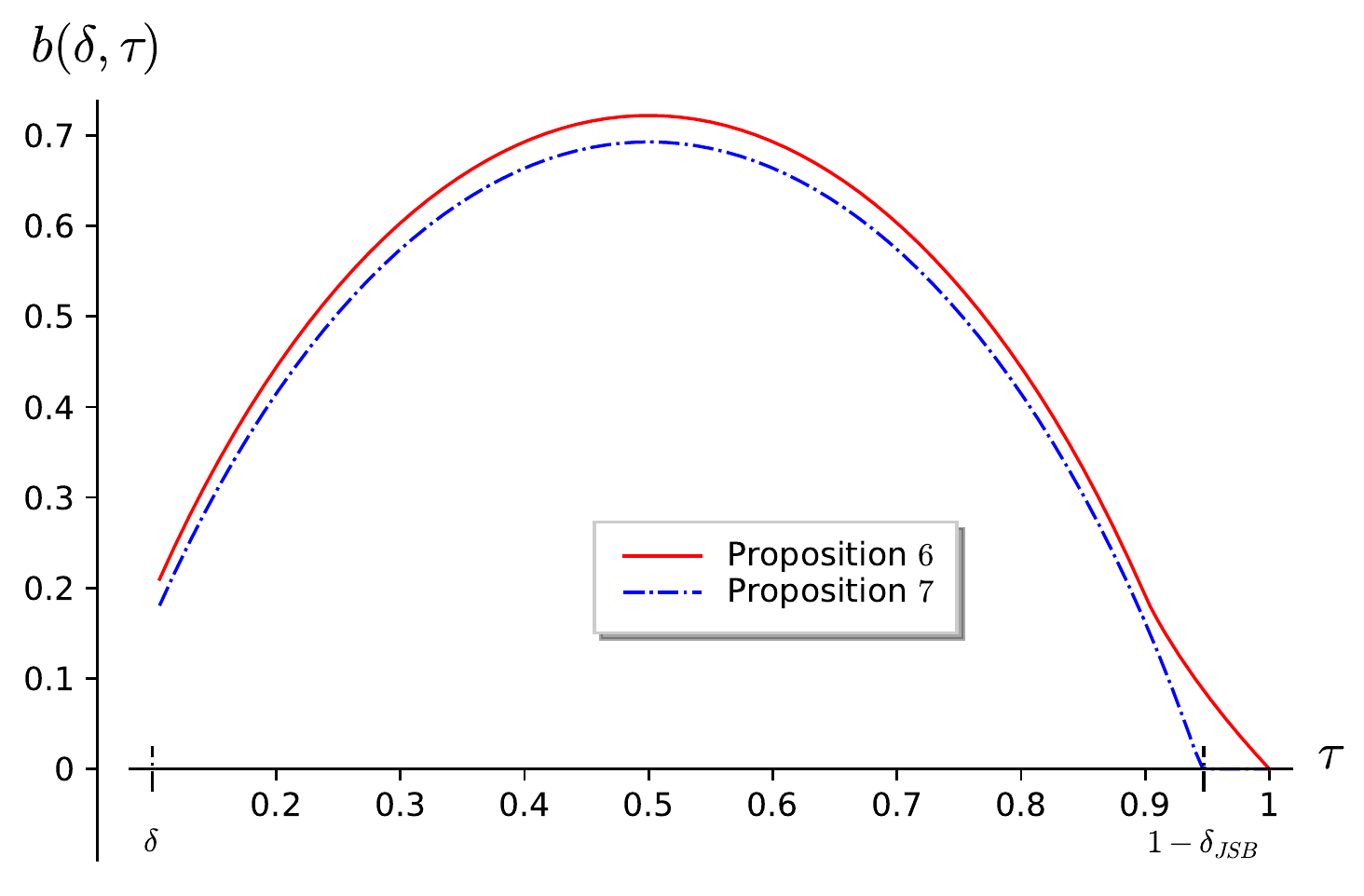}
		\caption{Bounds of Propositions \ref{propo:ABL} and \ref{propo:2LPB} on $b(\delta,\tau)$ as function of $\tau \in [\delta,1]$ for $\delta = 0.1$.
			\label{figure:Bounds01}}
	\end{figure}
\end{center}

\begin{center}
	\begin{figure}
		\includegraphics[height=6cm]{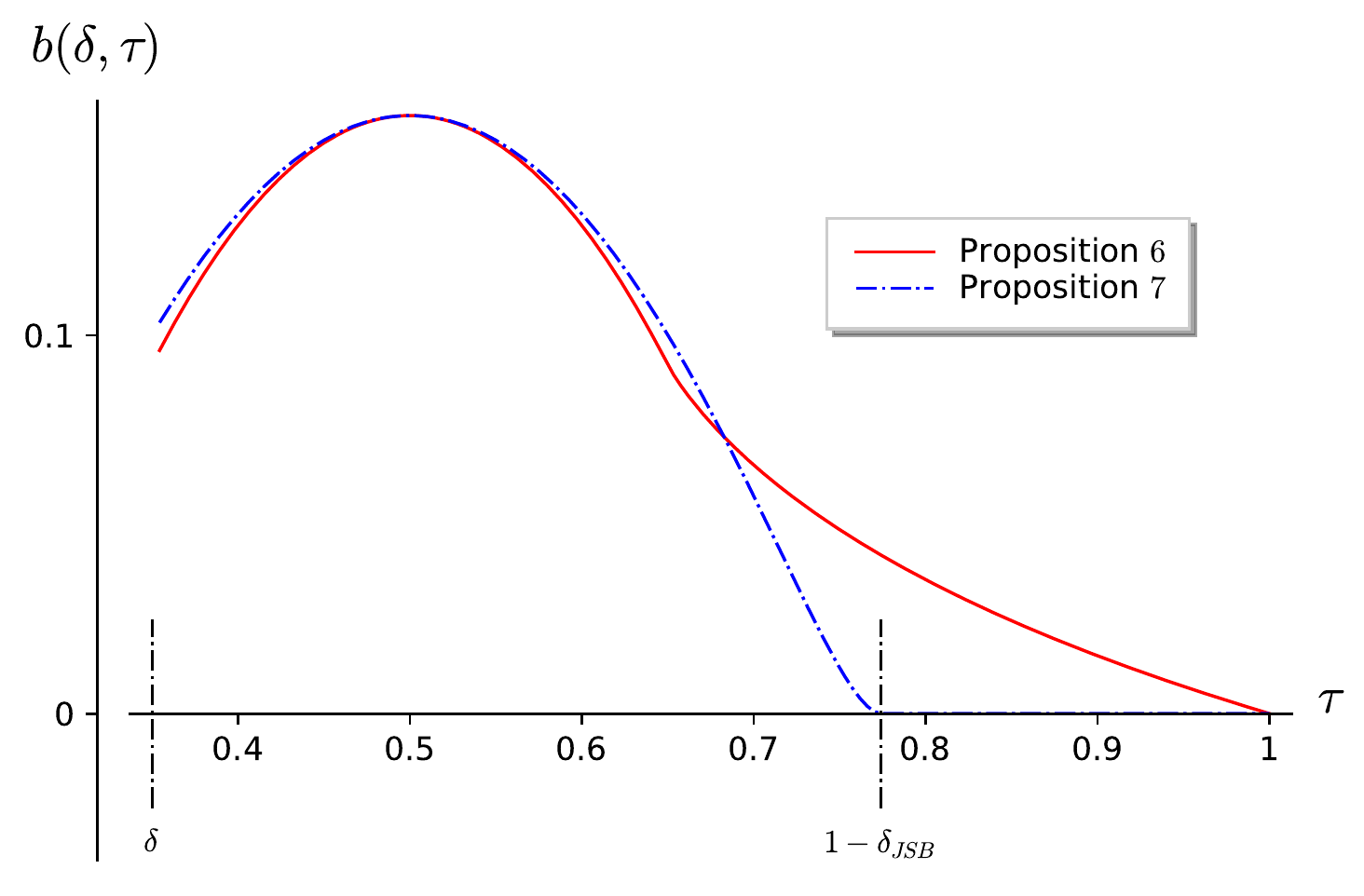}
		\caption{Bounds of Propositions \ref{propo:ABL} and \ref{propo:2LPB} on $b(\delta,\tau)$ as function of $\tau \in [\delta,1]$ for $\delta = 0.35$. 
			\label{figure:Bounds035}}
	\end{figure}
\end{center}

Equipped with these bounds we are ready to give our smoothing bounds for codes in the worst case, namely for a fixed code. Our study with random codes gave a hint that the choice of the uniform distribution over a sphere could give better results than the Bernoulli distribution. However, as we will show now, the distribution on a sphere forces us to assume that no codewords of large weight belong to the dual $\dual{\CC}$ when we want to smooth $\CC$. It corresponds to the hypothesis of balanced-codes made in \cite{BLVW19} to obtain a worst-to-average case reduction. We would like to avoid making this assumption as nothing forbids large weight vectors from belonging to a fixed code. Fortunately, as we will later show, we can avoid making this hypothesis while still keeping the advantages of the uniform distribution over a sphere. 
\newline

{\bf \noindent Impossibility to smooth a code whose dual is not balanced with the uniform distribution over a sphere.} 
It is readily seen that in the case where the dual code $\dual{\CC}$ is not balanced, meaning that it contains the all-one vector (and therefore that the dual weight distribution is symmetric: $\Neq{w}{\dual{\CC}}=\Neq{n-w}{\dual{\CC}}$ for any $w \in \{0,\cdots,n\}$ when the codelength is $n$), then it is impossible to smooth it with the uniform distribution $\unifs{w}$ over a sphere. Indeed, this implies that all codewords of $\CC$ have an even Hamming weight (they have to be orthogonal to the all-one vector). The parity of the Hamming weights of vectors in a coset ({\em i.e.} in the class of representatives of some element in $\F_{2}^{n}/\CC$) will be the same. Therefore, half of the cosets cannot be reached when periodizing $\unifs{w}$ over $\CC$.
\newline

{\bf Difficulty of using  Proposition \ref{propo:FBSDCod} for proving smoothness of the uniform distribution if the dual has large weight codewords.} Even in the case where the dual is balanced, difficulties can arise if we want to use Proposition \ref{propo:FBSDCod} for proving smoothness of the uniform distribution over a sphere when the dual has large weight codewords. First of all, the fact that it contains 
the all-one codeword also reflects in the upper-bound of Proposition \ref{propo:FBSDCod}. Recall that $\widehat{\unifs{w}}(\vec{x}) = \frac{1}{2^{n}}K_{w}(|\vec{x}|)/\binom{n}{w}$ and that we have $K_{w}(n) = (-1)^{w}\binom{n}{w}$ (see Fact \ref{fact:Kraw}). Therefore, when the full weight vector belongs to $\dual{\CC}$, our upper-bound on $\Delta(u,\unifs{w}^{\CC})$ of Proposition \ref{propo:FBSDCod} cannot be smaller than $1$. Furthermore, even if the dual does not contain the all-one codeword, codewords of weight say $t=n - O(\log n)$ also give a non-negligible contribution to the upper-bound of Proposition \ref{propo:FBSDCod}: the contribution is a polynomial $n^{-O(1)}$.
\newline

{\bf Difficulty of using  Proposition \ref{propo:FBSDCod} for proving smoothness of  the ``discrete walk distribution'' if the dual has large weight codewords.} 
Other  meaningful distributions in the cryptographic context display the same problem as the uniform distribution concerning the difficulty of applying Proposition \ref{propo:FBSDCod} to them if the dual contains large weight codewords. This applies to the 
discrete time random walk distribution $f_{\textup{RW},t}$ introduced in \cite{BLVW19} for worst-to-average case reductions. 
The authors were only able to prove smoothness of this distribution if the dual code has no small {\em and no large} weight codewords.
This distribution is given by
$$
\forall \vec{x}\in\F_{2}^{n},\quad f_{\textup{RW},w}(\vec{x}) \eqdef \mathbb{P}\left( \sum_{i=1}^{w} \vec{e}_{u_{i}} = \vec{x} \right) 
$$
where the $u_{i}$'s are independently and uniformly drawn at random in $\{1,\dots,n\}$ and $\vec{e}_{j}$ denotes the $j$-th canonical basis vector. Recall that \cite{BLVW19}
$$
\widehat{f_{\textup{RW},w}(\vec{y})} = \frac{1}{2^{n}} \; \left( 1-2\frac{|\vec{y}|}{n}\right)^{w}.
$$
Therefore, $\widehat{f_{\textup{RW},w}}(\vec{y}) = \frac{1}{2^{n}} (-1)^{w}$ when $|\vec{y}| = n$, as for the Fourier transform of the uniform distribution over a sphere, showing that $f_{\textup{RW},w}$ cannot smooth a code when the full weight vector belongs to its dual. 
In summary, {\em a direct application} of Proposition \ref{propo:FBSDCod} is quite  unsatisfactory for these distributions $\unifs{w}$ and $f_{\textup{RW},w}$. If we are willing to also make an assumption on the largest weight of a codeword, then certainly a direct application of Proposition \ref{propo:FBSDCod} is able to provide meaningful smoothing bounds for them. Indeed, the following theorem is obtained by just combining Propositions \ref{propo:FBSDCod},  \ref{propo:ABL} and \ref{propo:2LPB}.

\begin{theorem}\label{theo:smoothingBoundsUnifRW}
	Let $\CC$ be a binary linear code of length $n$ and $\omega\in(0,1)$. Suppose that $\dmin(\dual{\CC}) = \dual{\delta} n$ and that $\dual{\CC}$ has no element of Hamming weight $\geq \beta n$ for some $\beta \in (\dual{\delta},1)$. We have
		$$
		\frac{1}{n} \log_{2} \Delta\left(u,\unifs{\omega n}^{\CC}\right)  \leq  \mathop{\max}\limits_{\dual{\delta} \leq \tau \leq \beta} \left\{ \frac{1}{2} \min\left\{ c(\dual{\delta},\tau),d(\dual{\delta},\tau) \right\} + 
		 a(\omega,\tau)   \right\} - h(\omega)
	$$
	$$
	\frac{1}{n} \log_{2} \Delta\left(u,f_{\textup{RW},\omega n}^{\CC}\right)  \leq  \mathop{\max}\limits_{\dual{\delta} \leq \tau \leq \beta} \left\{ \frac{1}{2} \min\left\{ c(\dual{\delta},\tau),d(\dual{\delta},\tau) \right\}  + \omega\log_{2}\left(1-2\tau\right) \right\}
	$$
	
		where $a(\cdot,\cdot)$, $c(\cdot,\cdot)$ and $d(\cdot,\cdot)$ are defined respectively in Propositions \ref{prop:expansion}, \ref{propo:ABL} and \ref{propo:2LPB}. 
\end{theorem}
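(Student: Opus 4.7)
The plan is simply to combine the Fourier-analytic smoothing bound of Proposition~\ref{propo:FBSDCod} with the dual weight-enumerator bounds of Propositions~\ref{propo:ABL} and~\ref{propo:2LPB}, using the Krawtchouk asymptotics of Proposition~\ref{prop:expansion} on the sphere-uniform side. No new inequality is needed; the whole proof is a bookkeeping exercise on exponents.

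\textbf{Step 1: reduce to a weight-enumerator sum.} First I invoke Proposition~\ref{propo:FBSDCod} with $f=\unifs{\omega n}$ and with $f=f_{\textup{RW},\omega n}$ respectively. Using the balanced hypothesis $\Neq{t}{\dual\CC}=0$ for $t\geq \beta n$, the summation range shrinks to $\dual{\delta} n \leq t \leq \beta n$. I then plug in the explicit radial Fourier values already recorded in the excerpt, namely $\widehat{\unifs{\omega n}}(t)=\frac{1}{2^{n}}K_{\omega n}(t)/\binom{n}{\omega n}$ and $\widehat{f_{\textup{RW},\omega n}}(t)=\frac{1}{2^{n}}(1-2t/n)^{\omega n}$, which produces
$$
\Delta(u,\unifs{\omega n}^{\CC})\leq \sqrt{\sum_{t=\dual{\delta} n}^{\beta n} \Neq{t}{\dual\CC}\,\frac{K_{\omega n}(t)^{2}}{\binom{n}{\omega n}^{2}}},\qquad \Delta(u,f_{\textup{RW},\omega n}^{\CC})\leq \sqrt{\sum_{t=\dual{\delta} n}^{\beta n} \Neq{t}{\dual\CC}\,(1-2t/n)^{2\omega n}}.
$$

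\textbf{Step 2: insert asymptotic exponents.} For the enumerator I use $\Neq{t}{\dual\CC}\leq 2^{n(\min\{c(\dual{\delta},t/n),\,d(\dual{\delta},t/n)\}+o(1))}$ directly from Propositions~\ref{propo:ABL} and~\ref{propo:2LPB}. For the Krawtchouk factor, Proposition~\ref{prop:expansion} yields $|K_{\omega n}(t)|/\binom{n}{\omega n}=2^{n(a(\omega,t/n)-h(\omega)+o(1))}$ once one rewrites the ratio via the classical symmetry $\binom{n}{t}K_{\omega n}(t)=\binom{n}{\omega n}K_{t}(\omega n)$ so that the arguments of $a(\cdot,\cdot)$ appear in the order used in the theorem statement, and $\binom{n}{\omega n}=2^{n(h(\omega)+o(1))}$. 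For the random-walk case the logarithmic exponent is simply $\omega \log_{2}(1-2t/n)$ (read with absolute value when $t/n>1/2$, a region where the contribution is anyway dominated).

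\textbf{Step 3: Laplace-type reduction and conclusion.} Each truncated sum has at most $n+1$ terms, so it is bounded by $(n+1)$ times its maximum term, which only contributes an $o(1)$ in the $\tfrac{1}{n}\log_{2}$ scale. Taking $\tfrac{1}{n}\log_{2}$ of both sides, commuting $\log$ with the square root (which introduces the factor $\tfrac{1}{2}$ on the enumerator exponent), and replacing the sum by the maximum over $\tau=t/n\in[\dual{\delta},\beta]$ immediately yields the two stated inequalities. The main subtlety to watch is lining up the arguments of $a(\cdot,\cdot)$ between the convention of Proposition~\ref{prop:expansion} and the convention of the theorem statement (handled by the Krawtchouk symmetry above), together with the mild regime distinction $\tau\lessgtr \omega^{\perp}$ built into Proposition~\ref{prop:expansion}; beyond these formal checks there is no genuine analytic difficulty.
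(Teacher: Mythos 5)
Your overall strategy is exactly what the paper has in mind: the text states that Theorem~\ref{theo:smoothingBoundsUnifRW} is ``obtained by just combining Propositions~\ref{propo:FBSDCod}, \ref{propo:ABL} and \ref{propo:2LPB}'' and supplies no further detail, and Steps~1 and~3 of your write-up are the correct fill-in.

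Step~2, however, has a bookkeeping slip that you should not paper over by appeal to the reciprocity. Applying Proposition~\ref{prop:expansion} directly to $K_{\omega n}(t)$ (degree $\omega n\leq n/2$, evaluation point $t=\tau n$) gives
\[
\frac{|K_{\omega n}(t)|}{\binom{n}{\omega n}} = 2^{\,n\left(a(\tau,\omega)-h(\omega)+o(1)\right)},
\]
i.e.\ $a$ with arguments $(\tau,\omega)$ and the $-h(\omega)$ you want. The reciprocity $\binom{n}{t}K_{\omega n}(t)=\binom{n}{\omega n}K_t(\omega n)$, followed by Proposition~\ref{prop:expansion} applied to $K_t(\omega n)$, instead gives
\[
\frac{|K_{\omega n}(t)|}{\binom{n}{\omega n}} = 2^{\,n\left(a(\omega,\tau)-h(\tau)+o(1)\right)},
\]
i.e.\ $a(\omega,\tau)$ but paired with $-h(\tau)$, not $-h(\omega)$. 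Your claimed $2^{n(a(\omega,\tau)-h(\omega)+o(1))}$ takes the argument order from the reciprocity route and the $h$-term from the direct route; these are not interchangeable, and the consistency identity relating the two routes is precisely $a(\tau,\omega)-h(\omega)=a(\omega,\tau)-h(\tau)$, which shows $a$ is \emph{not} symmetric in its arguments. Note also that invoking Proposition~\ref{prop:expansion} on $K_t(\omega n)$ requires $t\leq n/2$, i.e.\ $\tau\leq 1/2$, while $\tau$ ranges up to $\beta$; to cover $\tau>1/2$ one must first use the Krawtchouk symmetry $\tau\mapsto 1-\tau$ (as Appendix~B does in the proof of Proposition~\ref{propo:ABL}). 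The cleanest route is therefore the direct application, which yields the exponent $\max_{\tau}\{\tfrac{1}{2}\min\{c,d\}+a(\tau,\omega)\}-h(\omega)$; the theorem's printed $a(\omega,\tau)$ appears to be a notational inconsistency with the convention of Proposition~\ref{prop:expansion}, and you should not attempt to rederive it via a symmetry of $a$ that does not hold. One further small point: in the random-walk line the Fourier transform enters squared, so the exponent is $\omega\log_2|1-2\tau|$ for all $\tau$; the sign issue for $\tau>1/2$ is absorbed by the square rather than being ``dominated,'' as you suggest.
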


{\bf \noindent Avoiding making an assumption on the largest dual codeword: the case of the Bernoulli distribution.}
Even if the Bernoulli distribution has some drawbacks compared to the uniform distribution over a sphere, when applying Proposition \ref{propo:FBSDCod} with random codes, it has however a nice property concerning the large weight codewords: the large weight dual codewords have a negligible contribution  in the upper-bound of Proposition \ref{propo:FBSDCod}.
To see this let us first recall that
\begin{equation} 
\widehat{\fber{p}}(\vec{x}) = \frac{1}{2^{n}} \; (1-2p)^{|\vec{x}|}.
\end{equation}
Therefore, by Proposition \ref{propo:FBSDCod} we have
\begin{equation}\label{eq:BerToBound}
	\Delta(u,\fber{p}^{\CC}) \leq \sqrt{\sum_{t = \dmin(\dual{\CC})}^{n} \Neq{t}{\dual{\CC}} (1-2p)^{2t}}.
\end{equation}
On the other hand, we have the following lemma which shows that large weight codewords can only have an exponentially small contribution to the above upper-bound.
\begin{lemma}\label{lemma:LCodewords}
	Let $\CC$ be a linear code of length $n$ and let $t > n-\dmin(\CC)/2$. There is at most one codeword $\vec{c}$ of weight $t$. 
\end{lemma}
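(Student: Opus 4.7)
The plan is a short combinatorial argument by contradiction using the definition of minimum distance combined with an inclusion-exclusion relation on supports.

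First I would suppose for contradiction that there exist two distinct codewords $\vec{c}_1, \vec{c}_2 \in \CC$ both of Hamming weight $t$. Since $\CC$ is linear, $\vec{c}_1 + \vec{c}_2$ is a nonzero codeword and hence has Hamming weight at least $\dmin(\CC)$.

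Next I would relate this weight to the overlap of the supports of $\vec{c}_1$ and $\vec{c}_2$. Denoting by $s \eqdef \sharp(\mathrm{supp}(\vec{c}_1) \cap \mathrm{supp}(\vec{c}_2))$ the size of the common support, we have in $\F_2^n$ the identity
\[
|\vec{c}_1 + \vec{c}_2| = |\vec{c}_1| + |\vec{c}_2| - 2s = 2t - 2s .
\]
The minimum distance condition then forces $2t - 2s \geq \dmin(\CC)$, i.e.\ $s \leq t - \dmin(\CC)/2$. On the other hand, the union of the supports has size at most $n$, so by inclusion-exclusion
\[
2t - s = |\vec{c}_1| + |\vec{c}_2| - s = \sharp(\mathrm{supp}(\vec{c}_1) \cup \mathrm{supp}(\vec{c}_2)) \leq n,
\]
giving $s \geq 2t - n$.

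Combining these two inequalities yields $2t - n \leq t - \dmin(\CC)/2$, i.e.\ $t \leq n - \dmin(\CC)/2$, which contradicts the hypothesis $t > n - \dmin(\CC)/2$. Hence at most one codeword of weight $t$ can exist. There is no real obstacle here; the only thing to be careful about is the binary identity $|\vec{c}_1 + \vec{c}_2| = |\vec{c}_1| + |\vec{c}_2| - 2s$ (which crucially uses that we are over $\F_2$), and the fact that the union bound on the supports gives the matching lower bound on $s$.
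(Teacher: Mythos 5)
Your proof is correct. The paper's own argument is slightly slicker: it observes that any weight-$t$ codeword $\vec{c}$ satisfies $|\vec{c}-\vec{1}| = n-t$ (where $\vec{1}$ is the all-ones vector), so by the triangle inequality any two weight-$t$ codewords $\vec{c},\vec{c}'$ satisfy $|\vec{c}-\vec{c}'| \le |\vec{c}-\vec{1}| + |\vec{1}-\vec{c}'| = 2(n-t) < \dmin(\CC)$, forcing $\vec{c}=\vec{c}'$. You arrive at the same bound $|\vec{c}_1+\vec{c}_2| \le 2(n-t)$ via inclusion-exclusion on the supports: the support-overlap $s$ satisfies $s \ge 2t-n$, hence $|\vec{c}_1+\vec{c}_2| = 2t-2s \le 2n-2t$. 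The two routes are logically equivalent (your lower bound on $s$ is exactly what makes the triangle inequality through $\vec{1}$ tight), but the paper's version avoids introducing $s$ and reads in one line; your version makes the support geometry explicit, which some readers may find more transparent. Both rely on the $\F_2$-specific identities ($|\vec{c}-\vec{1}|=n-|\vec{c}|$ in one case, $|\vec{c}_1+\vec{c}_2|=|\vec{c}_1|+|\vec{c}_2|-2s$ in the other), so neither is strictly more general.
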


\begin{proof}
	Suppose by contradiction that there exists two distinct codewords $\vec{c},\vec{c}'\in\CC$ of Hamming weight $t$. By using the triangle inequality we obtain (where $\vec{1}$ denotes the all-one vector)
	\begin{align*}
		|\vec{c} - \vec{c}'| & \leq |\vec{c} - \vec{1}| + |\vec{1} - \vec{c}'| \\
		&= 2\left( n-t \right) \\
		&< \dmin(\CC)
	\end{align*}
	which contradicts the fact that $\CC$ has minimum distance $\dmin(\CC)$. 
\end{proof}

Therefore, using Lemma \ref{lemma:LCodewords} in Equation \eqref{eq:BerToBound} gives for $p\in(0,1/2]$,
\begin{equation}\label{eq:BwithBer}
	\Delta(u,\fber{p}^{\CC}) \leq \sqrt{\sum_{t = \dmin(\dual{\CC})}^{n-\dmin(\dual{\CC})/2} \Neq{t}{\dual{\CC}} (1-2p)^{2t}} + 2^{-\Omega(n)}.
\end{equation}

In other words, large weight dual codewords (if they exist) have only an exponentially small contribution to our smoothing bound with the Bernoulli distribution. In principle, we could plug in Equation \eqref{eq:BwithBer} bounds on the $\Neq{t}{\dual{\CC}}$'s given in Propositions \ref{propo:ABL} and \ref{propo:2LPB}. 
We will improve on the bounds obtained in this way by truncating the Bernoulli distribution, then \\
 \begin{itemize}
 	\item[$(i)$] prove that by appropriately truncating both distributions have the same smoothness property,
 	\item[$(ii)$] show that the truncated distribution has the same nice properties with respect to large weights,
 	\item[$(iii)$] show that we can apply Proposition \ref{propo:FBSDCod} to the truncated distribution and get appropriate smoothness properties.
 \end{itemize}
We obtain in this way:

	\begin{restatable}{theorem}{thFinalCode}\label{theo:finalUBSD} 
	Let $\CC$ be a binary linear code of length $n$ and $p \in (0,1/2]$ such that  $\dmin(\dual{\CC}) \geq \dual{\delta} n$ for some $\dual{\delta}\in[0,1]$. We have asymptotically,
\begin{multline*}
			\frac{1}{n} \log_{2} \Delta\left(u,\fber{p}^{\CC}\right)  \leq  \mathop{\max}\limits_{\dual{\delta} \leq \tau \leq 1 - \dual{\delta}/2} \{ \frac{1}{2} \min \left\{c(\dual{\delta},\tau),d(\dual{\delta},\tau) \right\} + \\
			 \mathop{\max}\limits_{(1-\varepsilon)p \leq \lambda \leq (1+\varepsilon)p} \left\{ \lambda\log_{2}p + (1-\lambda)\log_{2}(1-p) + a(\lambda,\tau) \right\}  \}  + O\left(\frac{1}{n}\right)
	\end{multline*}
	where $a(\cdot,\cdot)$, $c(\cdot,\cdot)$ and $d(\cdot,\cdot)$ are defined respectively in Propositions \ref{prop:expansion}, \ref{propo:ABL} and \ref{propo:2LPB}. 
	\end{restatable}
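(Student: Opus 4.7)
The plan is to pass from $\fber{p}$ to a truncated variant supported on weights close to $pn$, apply Proposition~\ref{propo:FBSDCod} to the truncation, and control the resulting Fourier coefficients by two different bounds depending on whether the dual weight is ``moderate'' or ``heavy''. Fix a small $\varepsilon>0$, set $I_\varepsilon \eqdef \llbracket(1-\varepsilon)pn,(1+\varepsilon)pn\rrbracket$, and define $\fberTrunc{p} \eqdef \fber{p}\cdot \und_{|\cdot|\in I_\varepsilon}/Z$ with $Z \eqdef \Prob_{\fber{p}}(|\vec{e}|\in I_\varepsilon)$. By Chernoff's bound $Z = 1-2^{-\Omega(n)}$ and $\Delta(\fber{p},\fberTrunc{p})\leq 2^{-\Omega(n)}$, so the data processing inequality reduces the task to upper bounding $\Delta(u,\fberTrunc{p}^{\CC})$.

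I would then compute $\widehat{\fberTrunc{p}}(t)$ in two equivalent ways. On the one hand, expanding $Z\cdot\fberTrunc{p} = \sum_{r\in I_\varepsilon}\binom{n}{r}p^r(1-p)^{n-r}\unifs{r}$ and using $\widehat{\unifs{r}}(t) = K_r(t)/(2^n\binom{n}{r})$ (Fact~\ref{fact:Kraw}) gives
\[
\widehat{\fberTrunc{p}}(t) = \frac{1}{Z\cdot 2^n}\sum_{r\in I_\varepsilon} p^r(1-p)^{n-r}K_r(t),
\]
which will be useful for moderate $t$. On the other hand, writing $\fber{p} = Z\fberTrunc{p} + \fber{p}\cdot\und_{|\cdot|\notin I_\varepsilon}$ and using $\widehat{\fber{p}}(t) = (1-2p)^t/2^n$, we get $|\widehat{\fberTrunc{p}}(t)| \leq Z^{-1}\cdot 2^{-n}(|1-2p|^t + 2^{-\Omega(n)})$, which preserves the exponential decay in $t$ enjoyed by the plain Bernoulli.

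Next, apply Proposition~\ref{propo:FBSDCod} and split the sum over $t$ at $t^\star \eqdef n - \dmin(\dual{\CC})/2$. For $t>t^\star$, Lemma~\ref{lemma:LCodewords} gives at most one dual codeword, and the second Fourier expression above bounds its contribution to $4^n\sum_t \Neq{t}{\dual{\CC}}|\widehat{\fberTrunc{p}}(t)|^2$ by $O(|1-2p|^{2t^\star} + 2^{-\Omega(n)}) = 2^{-\Omega(n)}$. For $\dmin(\dual{\CC}) \leq t \leq t^\star$, applying the first Fourier expression, the triangle inequality on $K_r(t)$, and the Cauchy--Schwarz inequality over $r\in I_\varepsilon$ yields
\[
\sum_{t\leq t^\star}\Neq{t}{\dual{\CC}}|\widehat{\fberTrunc{p}}(t)|^2 \leq \frac{|I_\varepsilon|}{Z^2\cdot 4^n}\sum_{r\in I_\varepsilon}\bigl(p^r(1-p)^{n-r}\bigr)^2 \sum_{t\leq t^\star}\Neq{t}{\dual{\CC}}\, K_r(t)^2.
\]
Substituting $\Neq{t}{\dual{\CC}} \leq 2^{n\min\{c(\dual{\delta},\tau),d(\dual{\delta},\tau)\}+o(n)}$ from Propositions~\ref{propo:ABL} and~\ref{propo:2LPB}, the Krawtchouk estimate $|K_r(t)| \leq 2^{n\,a(\lambda,\tau)+o(n)}$ from Proposition~\ref{prop:expansion} (in the subscript-then-argument convention used by the theorem), and collapsing the sums to their exponential maxima via Laplace's method gives, after the outer square root, the announced rate; the polynomial corrections $|I_\varepsilon|$ and $1/Z$ are absorbed into the $O(1/n)$ term.

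The main obstacle is the heavy-weight case $t>t^\star$. A direct triangle inequality applied to the first Fourier expression is blind to the cancellation among the $K_r(t)$ across $r\in I_\varepsilon$ and, since $|K_r(n)| = \binom{n}{r}$, would yield a non-negligible contribution from any single heavy dual codeword -- precisely the obstruction that prevents smoothing unbalanced codes directly with $\unifs{w}$. The remedy is the second Fourier expression, which recasts $\widehat{\fberTrunc{p}}$ as $\widehat{\fber{p}}$ plus a tail correction and thereby inherits the $|1-2p|^t$ decay of the plain Bernoulli on heavy weights, at the modest cost of a $2^{-\Omega(n)}$ uniform correction controlled by the truncation tail mass.
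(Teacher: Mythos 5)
Your proposal is correct and follows essentially the same route as the paper's proof: truncate $\fber{p}$ to $I_\varepsilon$ via Chernoff, apply Proposition~\ref{propo:FBSDCod}, exploit Lemma~\ref{lemma:LCodewords} together with the pointwise closeness of $\widehat{\fberTrunc{p}}$ to $\widehat{\fber{p}}$ (Lemma~\ref{lemma:lemmfBerVSTrunc}) to kill the contribution of dual weights $t > n - \dmin(\dual{\CC})/2$, and bound the moderate-weight contribution by expanding over uniform-sphere components and invoking Propositions~\ref{propo:ABL}, \ref{propo:2LPB}, and \ref{prop:expansion}. The only minor stylistic deviation is that you pass through Cauchy--Schwarz over $r\in I_\varepsilon$ to separate the Krawtchouk squares, whereas the paper leaves the inner sum as a square and maxes it out; both lose only a polynomial factor and produce the identical exponential rate.
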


	\begin{proof} 
		See Appendix \ref{app:proofThFinalCode}.
	\end{proof}

	Let $i\in\{0,1\}$ and $p_{i}$ be the smallest $p\in(0,1/2]$ that enables to reach $\Delta\left(u,\fber{p}^{\CC}\right) \leq 2^{-\Omega(n)}$ with
	\begin{itemize}
		\item Theorem \ref{theo:finalUBSD} when $i=0$,

		\item  Equation \eqref{eq:BwithBer} and Propositions \ref{propo:ABL}, \ref{propo:2LPB} when $i =1$. 
	\end{itemize}
	
	In Figure \ref{figure:compBerTrunc} we compare the smallest $p$ that enables one to reach $\Delta\left(u,\fber{p}^{\CC}\right) \leq 2^{-\Omega(n)}$ with Equation \eqref{eq:BwithBer} and with Theorem \ref{theo:finalUBSD}. 
	As we can see  Theorem \ref{theo:finalUBSD} leads to significantly better bounds. Furthermore,  it turns out that $p_{0}n$ is roughly equal to the smallest radius $w$ such that $\Delta(u,\unifs{w}^{\CC}) \leq 2^{-\Omega(n)}$ if we had supposed that no codewords of weight 
$> n - \dmin(\dual{\CC})$ belong to $\dual{\CC}$. 
In other words, our proof using the tweak of truncating the Bernoulli enables us to obtain a smoothing bound without the hypothesis of no dual codewords of large Hamming weight which is as good as with the uniform distribution over a sphere if we had made this assumption.

	\begin{center}
		\begin{figure}
			\includegraphics[height=6cm]{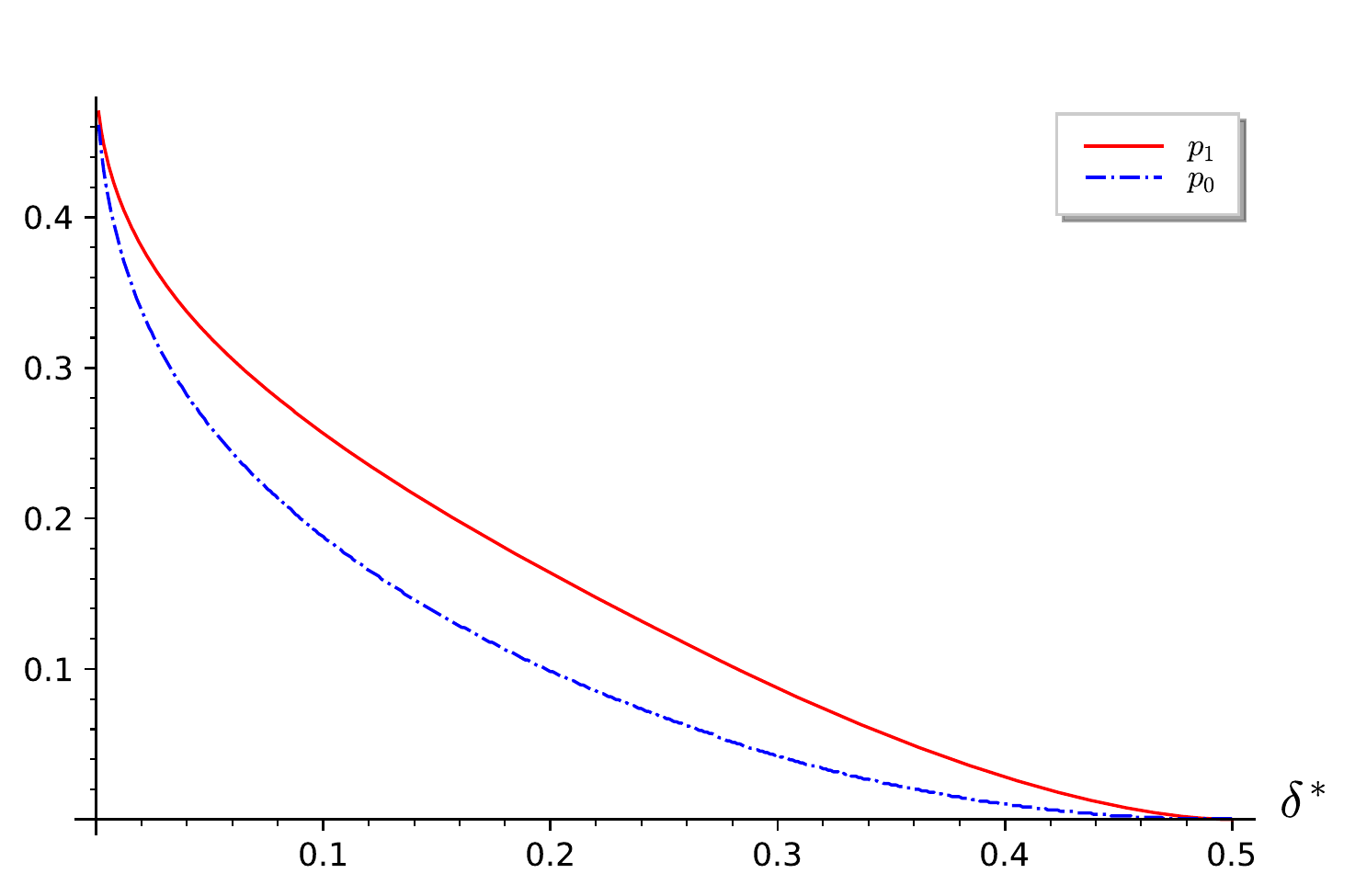}
			\caption{Smoothing bounds for a code $\CC$ as function of $\dual{\delta} \eqdef \dmin(\dual{\CC})/n$ via Theorem \ref{theo:finalUBSD} (for $\varepsilon = 10^{-2}$) and Equation \eqref{eq:BwithBer}  \label{figure:compBerTrunc}.}
		\end{figure}
	\end{center}

\section{Smoothing Bounds: Lattice Case}
\label{sec:SBLat} 

Given an $n$-dimensional lattice $\Lambda$ the aim of smoothing bounds is to give a non-trivial model of noise $\vec{e} \in \mathbb{R}^{n}$ for $(\vec{e}\mod \Lambda)\in\mathbb{R}^{n}/\Lambda$ (namely the reduction of $\vec{e}$ modulo $\Lambda$) to be uniformly distributed. Following Micciancio and Regev \cite{MR07}, the standard choice of noise is given by the Gaussian distribution, defined via
$$
\forall \vec{x}\in \mathbb{R}^{n},\quad D_{s}(\vec{x}) \eqdef \frac{1}{s^{n}}\;\rho_{s}(\vec{x}) \quad \mbox{where} \quad \rho_{s}(\vec{x}) \eqdef e^{-\pi(|\vec{x}|_{2}/s)^{2}} \ .
$$
The parametrization is chosen such that $s\sqrt{n/2\pi}$ is the standard deviation of $D_s$. Micciancio and Regev showed that when $\vec{e}$ is distributed according to $D_{s}$, choosing $s$ large enough enables $\vec{e} \mod \Lambda$ to be statistically close to the uniform distribution. 

However, following the intuition from the case of codes we will first analyze the case where $\vec e$ is sampled uniformly from a  Euclidean ball. Interestingly, just as with codes where our methodology led to stronger bounds when the uniform distribution over a sphere was used to smooth rather than the Bernoulli distribution, we will obtain better results when we work with the uniform distribution over a ball. Fortunately, using concentration of the Gaussian measure one can translate results from the case where $\vec e$ is uniformly distributed over a ball to the case that it is sampled according to $D_s$; see Proposition~\ref{propo:gauss-to-unif}. This is analogous to the translation from results for the uniform distribution over a sphere to the Bernoulli distribution for codes elucidated in Proposition~\ref{propo:BerVSUnif}.

For either choice of noise, to obtain a smoothing bound we are required to bound the statistical distance between the distribution of $\vec{e} \mod \Lambda$ if $\vec{e}$ has density $g$, and the uniform distribution over $\mathbb{R}^{n}/\Lambda$. It is readily seen that $\vec{e} \mod \Lambda$ has density $|\Lambda| g^{|\Lambda}$ which is defined as (see Definition \ref{def:perio} with the choice of Haar measures given in Table \ref{fig:groups})
$$
g^{|\Lambda}(\vec{x}) = \frac{1}{|\Lambda|}\; \sum_{\vec{y}\in\Lambda} g(\vec{x} + \vec{y}).
$$
{\noindent \bf Notation.} For any $g : \mathbb{R}^{n} \rightarrow \mathbb{C}$,
$$
g^{\Lambda} \eqdef |\Lambda| \; g^{|\Lambda}.
$$
In the following proposition we specialize Corollary~\ref{coro:FB} to the case of lattices.

	\begin{proposition}\label{propo:FBSDLat} Let $\Lambda$ be an $n$-dimensional lattice. Let $g$ be some density function on $\mathbb{R}^{n}$ and $v$ be the density of the uniform distribution over $\mathbb{R}^{n}/\Lambda$. We have
		$$
			\Delta\left(v,g^{\Lambda}\right) \leq  \frac{1}{2}\;\sqrt{\sum_{\vec x \in \dual{\Lambda}\setminus \{\vec 0\}} |\widehat{g}(\vec x)|^{2}} \ .
		$$
	\end{proposition}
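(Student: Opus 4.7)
The statement is the direct lattice analogue of Proposition \ref{propo:FBSDCod}, and I would prove it by specialising Corollary \ref{coro:FB} to $G = \mathbb{R}^{n}$ and $H = \Lambda$ with the Haar measures prescribed by Figure \ref{fig:groups}, following the same template as for codes. Since $g$ is a probability density on $\mathbb{R}^{n}$, one has $\int_{\mathbb{R}^{n}} g\, d\lambda = 1 = \mu_{\mathbb{R}^{n}/\Lambda}(\mathbb{R}^{n}/\Lambda)$, so the hypothesis of the corollary is met by taking $a \equiv 1$ on $\mathbb{R}^{n}/\Lambda$ and $f = g$. The corollary then yields
\begin{equation*}
    \|a - g^{|\Lambda}\|_{1} \;\leq\; \sqrt{\mu_{\mathbb{R}^{n}/\Lambda}(\mathbb{R}^{n}/\Lambda)} \cdot \sqrt{\sum_{\vec{x} \in \dual{\Lambda}\setminus\{\vec{0}\}} |\widehat{g}(\vec{x})|^{2}} \;=\; \sqrt{\sum_{\vec{x} \in \dual{\Lambda}\setminus\{\vec{0}\}} |\widehat{g}(\vec{x})|^{2}},
\end{equation*}
where I use the identification $\widehat{\mathbb{R}^{n}/\Lambda} \simeq \dual{\Lambda}$ from Figure \ref{fig:groups}, the Fourier-transform convention of Figure \ref{table:FT}, and the Poisson identity $\widehat{g^{|\Lambda}} = \widehat{g}_{|\dual{\Lambda}}$ already built into the corollary's proof.

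It then remains to relate $\|a - g^{|\Lambda}\|_{1}$ to $2\,\Delta(v, g^{\Lambda})$. With the Haar normalisation $\mu_{\mathbb{R}^{n}/\Lambda} = \lambda/|\Lambda|$, both $a \equiv 1$ and $g^{|\Lambda}$ are probability densities on $\mathbb{R}^{n}/\Lambda$ with respect to $\mu_{\mathbb{R}^{n}/\Lambda}$, representing respectively the uniform distribution $v$ and the push-forward $\vec{e} \bmod \Lambda$. Because the statistical distance between two probability measures equals one half of the $L^{1}$-distance of any pair of their densities with respect to a common dominating measure, one gets $\|a - g^{|\Lambda}\|_{1} = 2\,\Delta(v, g^{\Lambda})$. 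Substituting this identity into the previous display and dividing by $2$ gives the claimed bound.

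The only subtle point is the bookkeeping of the Haar measure normalisations: one must check that the choices of Figure \ref{fig:groups} simultaneously make $\mu_{\mathbb{R}^{n}/\Lambda}(\mathbb{R}^{n}/\Lambda) = 1$, make $g^{|\Lambda}$ the correct $\mu_{\mathbb{R}^{n}/\Lambda}$-density of $\vec{e} \bmod \Lambda$, and make the Poisson identity relate the Fourier coefficients on $\widehat{\mathbb{R}^{n}/\Lambda}$ to values of the $\mathbb{R}^{n}$-Fourier transform $\widehat{g}$ at dual-lattice points as displayed in Figure \ref{table:FT}. Once these compatibilities are verified, no further calculation is required, and the proof mirrors step-by-step that of Proposition \ref{propo:FBSDCod}.
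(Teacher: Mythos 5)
Your proof is correct and follows exactly the route the paper intends: the paper states only that Proposition~\ref{propo:FBSDLat} is a specialization of Corollary~\ref{coro:FB} to $G=\R^n$, $H=\Lambda$, and you carry this out in direct parallel with the written-out proof of the code analogue, Proposition~\ref{propo:FBSDCod}. Your bookkeeping of the normalizations is sound: with $\mu_{\R^n/\Lambda}=\lambda/\cov{\Lambda}$ one indeed has $\mu_{\R^n/\Lambda}(\R^n/\Lambda)=1$, the probability density $g$ satisfies the hypothesis $\int_{\R^n}g\,d\lambda=1$ of the corollary without the rescaling needed in the code case, and $\|a-g^{|\Lambda}\|_1=2\,\Delta(v,g^{\Lambda})$ since $a\equiv 1$ and $g^{|\Lambda}$ are precisely the $\mu_{\R^n/\Lambda}$-densities of the uniform distribution and of $\vec{e}\bmod\Lambda$.
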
 

	We will restrict our instantiations to functions $g$ whose Fourier transforms are radial, that is, $\widehat{g}(\vec{x})$ depends only on the Euclidean norm of $\vec{x}$, namely $|\vec{x}|_{2}$.

	\subsection{Smoothing Random Lattices}
	As with codes, we begin our investigation of smoothing lattices by considering the random case. However, defining a ``random lattice'' is much more involved than the analogous notion of random codes. Fortunately for us, we can apply the Siegel version of the Minkowski-Hlawka theorem to conclude that there exists a random lattice model which behaves very nicely from the perspective of ``test functions''. We first state the technical theorem that we require.

	\begin{theorem} [Minkowski-Hlawka-Siegel] \label{theo:Mink-Hl-Si}
		On the set of all the lattices of covolume $M$ in $\R^n$ there exists a probability measure $\mu$ such that, for any Riemann integrable function $g(\vec{x})$ which vanishes outside some bounded region,\footnote{This statement holds for a larger class of functions. In particular it holds for our instantiation with the Gaussian distribution.}
		\[
			\mathop{\mathbb E}_{\Lambda \sim \mu}\left(\sum_{\vec x \in \Lambda\setminus\{\vec 0\}}g(\vec x)\right) = \frac{1}{M}\int_{\R^n}g(\vec x)d\vec x \ .
		\]
	\end{theorem}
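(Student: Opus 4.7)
The plan is to realize the space of lattices of covolume $M$ as a homogeneous space and transfer the Haar measure from the ambient group. After rescaling, it is enough to treat the case $M=1$ (unimodular lattices), since any lattice of covolume $M$ can be written $M^{1/n} \Lambda_0$ with $\Lambda_0$ unimodular, and summing $g$ on $M^{1/n} \Lambda_0$ is the same as summing $g(M^{1/n} \cdot)$ on $\Lambda_0$, which produces exactly the factor $1/M$ in the right-hand side after a change of variables. I would then identify unimodular lattices with the homogeneous space $X \eqdef SL_n(\R)/SL_n(\Z)$ via $A\, SL_n(\Z) \longmapsto A\Z^n$. The Haar measure on $SL_n(\R)$ descends to an $SL_n(\R)$-invariant measure on $X$; the crucial classical input, due to Minkowski and made quantitative by Siegel, is that this measure assigns \emph{finite} volume to $X$, so one can normalize it into a probability measure $\mu$.

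The heart of the proof is the Siegel mean value identity for primitive vectors. Let $\Z^n_{\textup{prim}}$ denote the set of primitive vectors in $\Z^n$, and for $\Lambda = A\Z^n$ split the sum according to the largest integer $k$ dividing a given lattice point:
\[
\sum_{\vec x \in \Lambda\setminus\{\vec 0\}} g(\vec x) \;=\; \sum_{k\geq 1} \sum_{\vec m \in \Z^n_{\textup{prim}}} g(k A \vec m) .
\]
Because $SL_n(\Z)$ acts transitively on $\Z^n_{\textup{prim}}$, the inner sum $\sum_{\vec m \in \Z^n_{\textup{prim}}} f(A \vec m)$ can be unfolded: denoting by $H$ the stabilizer of $\vec e_1$ in $SL_n(\R)$, orbit-stabilizer together with a Fubini computation on $SL_n(\R)/H \simeq \R^n \setminus \{\vec 0\}$ yields Siegel's formula
\[
\int_{X} \sum_{\vec m \in \Z^n_{\textup{prim}}} f(A\vec m)\, d\mu(A) \;=\; \frac{1}{\zeta(n)} \int_{\R^n} f(\vec x)\, d\vec x .
\]
Applying this to $f(\vec x) = g(k\vec x)$ and changing variables $\vec y = k\vec x$ contributes $\tfrac{1}{k^n \zeta(n)}\int g$ per integer $k\geq 1$; summing over $k$ and using $\sum_{k\geq 1} k^{-n} = \zeta(n)$, the two factors cancel and one is left with $\int_{\R^n} g$. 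Reintroducing the scaling factor from the first paragraph gives the announced $1/M$ prefactor.

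The main technical obstacles are genuinely classical and lie in the two ingredients invoked above. First, proving that $X$ has finite Haar volume typically goes through the explicit construction of a Siegel fundamental domain for $SL_n(\Z)$ acting on $SL_n(\R)$ and a careful estimation of its measure. Second, the unfolding step underlying Siegel's formula has to be justified on a non-compact homogeneous space; here the hypothesis that $g$ vanishes outside a bounded region is essential, as it guarantees that all inner sums are finite and that Fubini applies without further integrability checks. Since both points are established in Siegel's original work, I would simply invoke them and focus the exposition on the splitting by primitivity and the $\zeta(n)$ cancellation, which is the content actually used in the rest of the paper.
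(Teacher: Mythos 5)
The paper does not prove Theorem~\ref{theo:Mink-Hl-Si}: it is cited as a classical result and used as a black box (with a footnote remarking that it extends beyond compactly supported test functions). So there is no ``paper's own proof'' to compare against, and the appropriate thing is simply to assess your sketch on its own terms.

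Your sketch is the standard Siegel argument, and it is correct. The scaling reduction to $M=1$ is right: writing $\Lambda = M^{1/n}\Lambda_0$ with $\Lambda_0$ unimodular, the expectation becomes $\int_{\R^n} g(M^{1/n}\vec y)\, d\vec y = \frac{1}{M}\int_{\R^n} g$, producing the prefactor. Realizing unimodular lattices as $X=SL_n(\R)/SL_n(\Z)$, invoking finiteness of the invariant volume (Siegel domain), decomposing each nonzero lattice point uniquely as $k$ times a primitive point, unfolding the primitive sum via transitivity of $SL_n(\Z)$ on $\Z^n_{\textup{prim}}$ and the identification $SL_n(\R)/H \simeq \R^n\setminus\{\vec 0\}$ to obtain the $1/\zeta(n)$ mean-value identity, and finally cancelling against $\sum_{k\ge 1}k^{-n}=\zeta(n)$ is precisely how the theorem is classically established. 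Your observation that the bounded-support hypothesis is what licenses Fubini and the term-by-term manipulations on a non-compact quotient is also the right place to put the technical weight. In short: the proof is valid, and it supplies an argument where the paper only supplies a citation.
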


	As intuition for the above theorem, consider the case that $g$ is the indicator function for a bounded, measurable subset $S \subseteq \R^n$. Then, Theorem~\ref{theo:Mink-Hl-Si} promises that the expected number of lattice points (other than the origin\footnote{Note that as $\vec 0 \in \Lambda$ with certainty, there is really no ``randomness'' for this event.}) in $S$ is equal to the volume of $S$ over the covolume of the lattice.
	
	{\bf \noindent Uniform Distribution over a Ball.} Let
	$$
	\gunif{w} \eqdef \frac{1_{\Bc_{w}}}{\vol w}
	$$ be the density of the uniform distribution over the Euclidean ball of radius $w$. Let us recall that $\vol w$ denotes the volume of any ball of radius $w$. From Theorem~\ref{theo:Mink-Hl-Si}, we may obtain the following proposition. This should be compared with Proposition~\ref{propo:BUnifRandCase}.
	
	\begin{proposition} \label{propo:Mink-Hl-Si-unif}
		On the set of all lattices of covolume $M$ in $\R^n$ there exists a probability measure $\nu$ such that, for any $w>0$
		\[
			\mathop{\mathbb E}_{\Lambda \sim \nu} \left(\Delta(u,\gunif{w}^\Lambda)\right) \leq \frac{1}{2}\;\sqrt{\frac{M}{\vol{w}}} .
		\]
		In particular, defining
		\[
			w_0 \eqdef \sqrt{n/2\pi e} \; M^{1/n},
		\]
		if $w > w_0$ we have 
		\[
			\mathop{\mathbb E}_{\Lambda \sim \nu} \left(\Delta(u,\gunif{w}^\Lambda)\right) \leq O(1)\;\left(\frac{w_0}{w}\right)^{n/2}.
		\]
	\end{proposition}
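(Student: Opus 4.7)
The plan is to combine Proposition \ref{propo:FBSDLat} with the Minkowski--Hlawka--Siegel theorem (Theorem \ref{theo:Mink-Hl-Si}) and Parseval's identity to evaluate the expected square norm of the Fourier side. First, Proposition \ref{propo:FBSDLat} immediately yields the pointwise bound
\[
\Delta(v,\gunif{w}^{\Lambda}) \;\leq\; \tfrac{1}{2}\sqrt{\sum_{\vec x \in \dual{\Lambda}\setminus\{\vec 0\}} |\widehat{\gunif{w}}(\vec x)|^{2}}\,.
\]
To turn this into an average-case statement, I will define $\nu$ as the push-forward, under the dualisation map $\Lambda \mapsto \dual{\Lambda}$, of the MHS measure on lattices of covolume $1/M$. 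Since the dual of a covolume-$M$ lattice has covolume $1/M$, the test-function identity of Theorem \ref{theo:Mink-Hl-Si} gives, for any suitable $g$,
\[
\mathop{\mathbb E}_{\Lambda\sim\nu}\Big(\sum_{\vec x \in \dual{\Lambda}\setminus\{\vec 0\}} g(\vec x)\Big) \;=\; M\int_{\R^n} g(\vec x)\,d\vec x\,.
\]

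Applying Jensen's inequality to pass the expectation through the square root, then using the above with $g = |\widehat{\gunif{w}}|^{2}$, and finally Parseval's identity together with the computation
\[
\int_{\R^n}|\gunif{w}(\vec x)|^{2}\,d\vec x \;=\; \frac{1}{\vol{w}^{2}}\int_{\R^n} 1_{\Bc_w}(\vec x)\,d\vec x \;=\; \frac{1}{\vol{w}}\,,
\]
yields
\[
\mathop{\mathbb E}_{\Lambda\sim\nu}\Delta(v,\gunif{w}^{\Lambda}) \;\leq\; \tfrac{1}{2}\sqrt{M\cdot \tfrac{1}{\vol{w}}} \;=\; \tfrac{1}{2}\sqrt{\tfrac{M}{\vol{w}}}\,,
\]
which is the first bound.

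For the ``in particular'' clause, I expand $\vol{w_0}$ with $w_0 = \sqrt{n/2\pi e}\,M^{1/n}$ using Stirling's formula $\Gamma(n/2+1) = (1+o(1))\sqrt{\pi n}\,(n/2e)^{n/2}$. The factors of $(n/2e)^{n/2}$ cancel, giving $\vol{w_0} = (1+o(1))\,M/\sqrt{\pi n}$, so $M/\vol{w_0} = (1+o(1))\sqrt{\pi n}$. Since $\vol{w}/\vol{w_0} = (w/w_0)^{n}$, we obtain
\[
\sqrt{\tfrac{M}{\vol{w}}} \;=\; (1+o(1))(\pi n)^{1/4}\Big(\tfrac{w_0}{w}\Big)^{n/2}\,,
\]
and for $w > w_0$ the polynomial factor $(\pi n)^{1/4}$ can be absorbed into the $O(1)$ constant (any loss is dominated once $w/w_0$ exceeds $1$).

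The main obstacle is justifying the use of Theorem \ref{theo:Mink-Hl-Si} with $g = |\widehat{\gunif{w}}|^{2}$, since this function is not compactly supported. However, $\widehat{\gunif{w}}$ is expressible via Bessel functions with known polynomial decay, so $|\widehat{\gunif{w}}|^{2}$ is bounded, continuous, and integrable on $\R^n$, placing it within the extended class of functions to which MHS applies (as noted in the footnote to Theorem~\ref{theo:Mink-Hl-Si}). All remaining steps are routine.
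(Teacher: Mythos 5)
Your proof is correct and follows essentially the same route as the paper: define $\nu$ by dualising the Minkowski--Hlawka--Siegel measure on covolume-$1/M$ lattices, then chain Proposition~\ref{propo:FBSDLat}, Jensen's inequality, the MHS test-function identity, and Parseval to reduce to $\int |\gunif{w}|^2 = 1/\vol{w}$, and finish with Stirling. Your two side remarks are both sound and slightly more careful than the paper's exposition: the applicability of Theorem~\ref{theo:Mink-Hl-Si} to $|\widehat{\gunif{w}}|^2$ does need the polynomial (Bessel) decay to make the function integrable, and the Stirling step genuinely produces a $(\pi n)^{1/4}$ factor that is only ``$O(1)$'' in the loose sense that it is dominated by the exponential decay $(w_0/w)^{n/2}$ for any fixed $w/w_0>1$.
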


	\begin{proof}
		We define $\nu$ to be the procedure that samples a lattice according to $\mu$ of covolume $M^{-1}$, then outputs its dual. In the following chain, we first apply Proposition~\ref{propo:FBSDLat}; then, Jensen's inequality; then, the Minkowski-Hlawka-Siegel (MHS)~Theorem (Theorem~\ref{theo:Mink-Hl-Si}) to the function $|\widehat{\gunif{w}^{\Lambda}}|^2$; and, lastly, Parseval's Identity (Theorem~\ref{theo:Parseval}). This yields:
		\begin{align*}
			\mathop{\mathbb E}_{\Lambda \sim \nu} \left(2\Delta(u,\gunif{w}^\Lambda)\right) &\leq \mathop{\mathbb E}_{\Lambda^* \sim \mu} \left(\sqrt{\sum_{\vec x \in \Lambda^*\setminus\{\vec 0\}}|\widehat{\gunif{w}}(\vec x)|^2}\right) \quad \text{(Proposition~\ref{propo:FBSDLat})} \\ 
			&\leq \sqrt{\mathop{\mathbb E}_{\Lambda^* \sim \mu} \left(\sum_{\vec x \in \Lambda^*\setminus\{\vec 0\}}|\widehat{\gunif{w}}(\vec x)|^2\right)} \quad \text{(Jensen's Inequality)} \\
			&=\sqrt{\frac{1}{M^{-1}} \; \left(\int_{\R^n}|\widehat{\gunif{w}}(\vec x)|^2d\vec x\right)} \quad \text{(MHS Theorem)} \\
			&= \sqrt{M \int_{\R^n}|\gunif{w}(\vec x)|^2d\vec x} \quad \text{(Parseval's Identity)} \\
			&= \sqrt{\frac{M}{V_n(w)^2}\int_{\R^n} 1_{\Bc_{w}}(\vec x)d\vec x}\\
			& = \sqrt{\frac{M}{V_n(w)}} . 
		\end{align*}
		For the ``in particular'' part of the proposition, we use Stirling's estimate to derive
		\[
			\vol{w} = \frac{\pi^{n/2}\; w^n}{\Gamma(n/2+1)} = \frac{\pi^{n/2}\;w^n}{\left(\frac{n}{2e}\right)^{n/2}}\;(1+o(1))^n
		\]
		from which it follows that if 
		\[
			w > w_0 = \sqrt{n/2\pi e}\; M^{1/n} ,
		\]
		we have 
		\[
			\sqrt{\frac{M}{V_n(w)}} \leq O(1) \left(\frac{w}{w_0}\right)^{n/2}
		\]
		which concludes the proof. 
	\end{proof}

It is easily verified that the value of $w_{0}$ defined in Proposition~\ref{propo:Mink-Hl-Si-gauss} corresponds to the so-called Gaussian heuristic.
We view this condition on $w>w_{0}$ as the equivalent of the Gilbert-Varshamov bound for codes as we discussed just below Proposition~\ref{propo:BUnifRandCase}. In particular, as we need the support of the noise to have volume at least $M$ if we hope to smooth a lattice of covolume $M$, we see that the uniform distribution over a ball is optimal for smoothing random lattices, just as the uniform distribution over a sphere was optimal for smoothing random codes. 
	
	{\bf \noindent Gaussian Noise.} We now turn to the case of Gaussian noise. 
Following the proof of Proposition~\ref{propo:Mink-Hl-Si-unif} to the point where we apply Parseval's identity, but replacing $\gunif{w}$ by $D_s$, we obtain that 
	\[
		\mathbb E \left(\Delta(u,D_s^\Lambda)\right) \leq \sqrt{M \int_{\R^n}|D_s(\vec x)|^2d\vec x} \ .
	\]
	To conclude, one uses the following routine computation
	\begin{align*}
		\int_{\R^n}|D_s(\vec x)|^2d\vec x = \frac{1}{s^{2n}}\int_{\R^n}e^{-2\pi\left(\frac{|\vec x|_2}{s}\right)^2} d\vec x = 
\frac{1}{s^{2n}}\int_{\R^n}\rho_{s/\sqrt 2}(\vec x)d\vec x = \left(\frac{1}{s\sqrt{2}}\right)^n.
	\end{align*}
	Thus, we obtain:
	\begin{proposition} \label{propo:Mink-Hl-Si-gauss}
		On the set of all the lattices of covolume $M$ in $\R^n$ there exists a probability measure $\nu$ such that, for any $s>0$,
		\[
			\mathop{\mathbb E}_{\Lambda \sim \nu} \left(\Delta(u,D_s^\Lambda)\right) \leq \frac{1}{2}\; \sqrt{\frac{M}{\left(s\sqrt{2}\right)^n}} \ .
		\]
		In particular, if $s>s_0\eqdef M^{1/n}/\sqrt 2$, we have 
		\[
			\mathop{\mathbb E}_{\Lambda \sim \nu} \left(\Delta(u,D_s^\Lambda)\right) \leq \left(\frac{s_0}{s}\right)^{n/2}.
		\]
	\end{proposition}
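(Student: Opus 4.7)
The plan is to mirror the proof of Proposition~\ref{propo:Mink-Hl-Si-unif} almost verbatim, with $\gunif{w}$ replaced by $D_s$. Define $\nu$ as the pushforward of $\mu$ (of covolume $M^{-1}$) under lattice dualization, so that when Proposition~\ref{propo:FBSDLat} is applied to $\Lambda \sim \nu$, the sum on the right-hand side runs over the non-zero points of a lattice $\Lambda^* \sim \mu$ of covolume $M^{-1}$, which is the setting in which the Minkowski--Hlawka--Siegel theorem (Theorem~\ref{theo:Mink-Hl-Si}) applies.

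Starting from Proposition~\ref{propo:FBSDLat} and applying Jensen's inequality to pull the expectation inside the square root, I would obtain
\[
	2\mathop{\mathbb E}_{\Lambda \sim \nu} \left(\Delta(u,D_s^\Lambda)\right) \leq \sqrt{\mathop{\mathbb E}_{\Lambda^* \sim \mu}\left(\sum_{\vec x \in \Lambda^* \setminus \{\vec 0\}} |\widehat{D_s}(\vec x)|^2\right)} .
\]
Next, apply the MHS theorem (Theorem~\ref{theo:Mink-Hl-Si}) to the test function $|\widehat{D_s}|^2$, noting that $\widehat{D_s}$ is itself (up to constants) a Gaussian, and hence square-integrable; this is the extension alluded to in the footnote of Theorem~\ref{theo:Mink-Hl-Si}. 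Then Parseval's identity (Theorem~\ref{theo:Parseval}) converts the right-hand side to $\sqrt{M \int_{\R^n} |D_s(\vec x)|^2 d\vec x}$.

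The last ingredient is the elementary Gaussian computation already carried out in the text:
\[
	\int_{\R^n} |D_s(\vec x)|^2 d\vec x = \frac{1}{s^{2n}} \int_{\R^n} \rho_{s/\sqrt 2}(\vec x) d\vec x = \frac{1}{(s\sqrt 2)^n},
\]
yielding the first inequality $\mathbb E_{\Lambda \sim \nu}(\Delta(u,D_s^\Lambda)) \leq \frac{1}{2}\sqrt{M/(s\sqrt 2)^n}$. For the ``in particular'' part, the definition $s_0 \eqdef M^{1/n}/\sqrt 2$ gives $M = (s_0\sqrt 2)^n$, so $M/(s\sqrt 2)^n = (s_0/s)^n$ and the bound simplifies to $\frac{1}{2}(s_0/s)^{n/2} \leq (s_0/s)^{n/2}$.

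There is no genuine obstacle here; the only mildly delicate point is justifying the use of MHS with a non-compactly supported Gaussian test function, but this is standard (and explicitly sanctioned by the footnote). The entire argument is a direct transposition of the uniform-ball case, differing only in the final norm computation.
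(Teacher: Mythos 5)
Your proposal is correct and follows exactly the route the paper takes: it explicitly says to repeat the proof of Proposition~\ref{propo:Mink-Hl-Si-unif} up to Parseval with $\gunif{w}$ replaced by $D_s$, and then finishes with the same Gaussian integral $\int_{\R^n}|D_s|^2 = (s\sqrt 2)^{-n}$. Your remark about extending MHS beyond compactly supported test functions matches the paper's own footnote, and the algebra in the ``in particular'' step is exactly the paper's.
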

	To compare Propositions \ref{propo:Mink-Hl-Si-unif} and \ref{propo:Mink-Hl-Si-gauss}, we note that a random vector sampled according to $D_{s}$ has an expected Euclidean norm given by $s\frac{\Gamma\left(\frac{n+1}{2}\right)}{\sqrt{\pi}\Gamma\left(\frac{n}{2}\right)} \sim s\sqrt{\frac{n}{2\pi}}$. So, it is fair to compare the effectiveness of smoothing with a parameter $s$ Gaussian distribution and the uniform distribution over a ball of radius $s\sqrt{\frac{n}{2\pi}}$. We note that, if $s_0$ is as in Proposition~\ref{propo:Mink-Hl-Si-gauss} and $w_0$ is the radius of the so-called Gaussian heuristic, then
	\[
		s_0\sqrt{\frac{n}{2\pi}} = \frac{M^{1/n}}{\sqrt 2} \sqrt{\frac{n}{2\pi}} = w_0 \; \sqrt{e/2} .
	\]
	Thus, we conclude that the parameter $s_0$ from Proposition~\ref{propo:Mink-Hl-Si-gauss} is larger than what we could hope 
by a factor 
$\sqrt{e/2}$.

	\subsection{Connecting Uniform Ball Distribution to Gaussian} 
	
	However, recall that in the code-case we argued that, as the Hamming weight of a vector sampled according to the Bernoulli distribution is tightly concentrated, we could obtain the same smoothing bound for the Bernoulli distribution as we did for the uniform sphere distribution, essentially by showing that we can approximate a Bernoulli distribution by a convex combination of uniform sphere distributions. Similarly, we can relate the Gaussian distribution to the uniform distribution over a ball, and thereby remove this additional $\sqrt{e/2}$ factor.

	We state a general proposition that allows us to translate smoothing bounds for the uniform ball distribution to the Gaussian distribution. It guarantees that if the uniform ball distribution smooths whenever $w>w_0$, the Gaussian distribution smooths whenever $s > w_0 \; \sqrt{\frac{2\pi}{n}}$. While the intuition for the argument is the same as that which we used in the code-case, the argument is itself a bit more sophisticated. 
	
	\begin{restatable}{proposition}{GaussianVSUnif} \label{propo:gauss-to-unif}
	Let $\Lambda$ be a random lattice of covolume $M$ and let $u \eqdef u_{\R^n/\Lambda}$ be the uniform distribution over its cosets. Suppose that for all $w>w_0$ there is a function $f(n)$ such that 
	\[
		\mathbb{E}_{\Lambda}\left(\Delta(u,\gunif{w}^{\Lambda})\right) \leq f(n) \left(\frac{w_0}{w}\right)^{n/2}.
	\]	
	Let $s_0 \eqdef w_0 \sqrt{\frac{2\pi}{n}}$. Then, for all $s>s_0$, defining $\eta \eqdef 1-\frac{s_0}{s} \in (0,1)$, we have
	\[
		\mathbb{E}_{\Lambda}\left(\Delta(u,D_s^{\Lambda})\right) \leq \exp(-\frac{\eta^2}{8} \; n) + f(n) \left(\frac{s_0}{s}\right)^{n/4}.
	\]
	\end{restatable}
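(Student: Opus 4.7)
The plan is to mirror the Bernoulli-to-uniform-sphere transfer from the code case (Proposition~\ref{propo:BerVSUnif}): decompose the Gaussian density $D_s$ as a convex mixture of uniform ball densities $\gunif{R}$, and apply the hypothesis term-by-term after a suitable truncation in the radial parameter.

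First, I would establish the mixture identity. The radial profile $\phi(r) \eqdef s^{-n}e^{-\pi r^2/s^2}$ of $D_s$ is decreasing in $r$, so writing $\phi(|\vec x|_2) = \int_{|\vec x|_2}^{\infty}(-\phi'(R))\,dR$ and normalizing each ``layer'' against the uniform density on the ball of radius $R$ gives
$$D_s = \int_0^{\infty} \kappa(R)\; \gunif{R} \; dR, \qquad \kappa(R) \eqdef -\phi'(R)\, \vol{R}.$$
The function $\kappa$ is nonnegative; a direct change of variables $\xi = \pi R^2/s^2$ shows it integrates to $1$, and in fact $\xi \sim \mathrm{Gamma}(n/2+1,1)$ when $R$ is drawn according to $\kappa$.

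Second, convexity of $g\mapsto \Delta(u, g^\Lambda)$ (for each fixed $\Lambda$) combined with linearity of periodization yields
$$\mathbb{E}_{\Lambda}\left(\Delta(u, D_s^{\Lambda})\right) \leq \int_0^{\infty} \kappa(R) \; \mathbb{E}_{\Lambda}\left(\Delta(u, \gunif{R}^{\Lambda})\right) dR.$$
I would then split this integral at the threshold $T \eqdef w_0\sqrt{s/s_0}$, chosen precisely so that $(w_0/T)^{n/2} = (s_0/s)^{n/4}$. Because $s > s_0$ we have $T > w_0$, so the hypothesis applies throughout $(T,\infty)$; bounding the statistical distance trivially by $1$ on $[0,T]$, this gives
$$\mathbb{E}_{\Lambda}\left(\Delta(u, D_s^{\Lambda})\right) \leq \Prob_{R\sim \kappa}\left(R \leq T\right) + f(n)\left(\frac{s_0}{s}\right)^{n/4}.$$

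The remaining, and most technical, step is to verify that $\Prob_{R\sim \kappa}(R \leq T) \leq \exp(-\eta^2 n/8)$. Via the change of variables above this is equivalent to $\Prob(\xi \leq \pi T^2/s^2) = \Prob(\xi \leq (1-\eta) n/2)$ with $\xi \sim \mathrm{Gamma}(n/2+1,1)$ of mean $n/2+1$. This is a standard lower-tail Chernoff bound for a Gamma random variable: the estimate $e^t(1-t)\leq e^{-t^2/2}$ applied to the optimized Laplace transform of $\xi$, combined with $\sqrt{1-\eta} \leq 1-\eta/2$ to calibrate between deviations in $R$-space and $\xi$-space, yields the claimed bound (in fact with exponent $\eta^2 n/4$, so $\eta^2 n/8$ holds with room to spare), completing the proof.
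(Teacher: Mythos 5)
Your proposal is correct and follows essentially the same route as the paper: the mixture kernel $\kappa(R)$ is exactly the paper's $\tfrac{1}{s}G_n(w/s)$, your threshold $T=w_0\sqrt{s/s_0}$ equals the paper's $\overline w=\sqrt{1-\eta}\,s\sqrt{n/(2\pi)}$, and the split plus trivial bounds are identical. The only minor difference is that you derive the tail bound from a direct Chernoff argument for the $\Gamma(n/2+1,1)$ radial law rather than citing the chi-squared concentration inequality as the paper does (and the aside about ``calibrating between $R$-space and $\xi$-space'' is unnecessary once you are working in $\xi$-space); both yield the required $\exp(-\eta^2 n/8)$.
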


	\begin{proof}
		See Appendix \ref{app:GaussianUnif}. 
	\end{proof}

	Combining the above proposition with Theorem~\ref{theo:Mink-Hl-Si}, setting $f(n)=O(1)$, we obtain the following theorem. 

	\begin{theorem} \label{theorem:random-lattice-gauss-to-unif}
	Let $\Lambda$ be a random lattice of covolume $M$ sampled according to $\nu$, let $u \eqdef u_{\R^n/\Lambda}$ be the uniform distribution over its cosets, and let \[s_0 \eqdef M^{1/n}/\sqrt{e}.\] 
	Then, for any $s>s_0$, setting $\eta \eqdef 1-\frac{s_0}{s} \in (0,1)$, we have 
	\[
		\mathbb{E}_{\Lambda}\left(\Delta(u,D_s^{\Lambda})\right) \leq \exp(-\frac{\eta^2}{8} \; n) + O(1) \left(\frac{s_0}{s}\right)^{n/4}.
	\]
	\end{theorem}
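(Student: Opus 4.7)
The plan is to obtain Theorem~\ref{theorem:random-lattice-gauss-to-unif} as a direct combination of the average-case uniform-ball bound (Proposition~\ref{propo:Mink-Hl-Si-unif}) with the generic transfer result from uniform-ball smoothing to Gaussian smoothing (Proposition~\ref{propo:gauss-to-unif}). Essentially all the work has already been done; what remains is to check that the parameters line up.

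First, I would let $\nu$ be the probability measure on lattices of covolume $M$ provided by Proposition~\ref{propo:Mink-Hl-Si-unif}. That proposition guarantees that, setting $w_0 \eqdef \sqrt{n/2\pi e}\; M^{1/n}$, one has for every $w>w_0$
\[
\mathop{\mathbb E}_{\Lambda \sim \nu}\bigl(\Delta(u,\gunif{w}^\Lambda)\bigr) \le O(1)\left(\frac{w_0}{w}\right)^{n/2}.
\]
Hence the hypothesis of Proposition~\ref{propo:gauss-to-unif} is satisfied with $f(n)=O(1)$, and with this exact value of $w_0$.

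Next I would compute the Gaussian threshold $s_0$ produced by Proposition~\ref{propo:gauss-to-unif}. By definition, $s_0 = w_0 \sqrt{2\pi/n}$, so
\[
s_0 \;=\; \sqrt{\frac{n}{2\pi e}}\; M^{1/n}\cdot \sqrt{\frac{2\pi}{n}} \;=\; \frac{M^{1/n}}{\sqrt{e}},
\]
which is precisely the value of $s_0$ declared in the statement of the theorem. This verifies that the two propositions compose without any mismatch in parameters, and this small parameter computation is the only real check in the argument.

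Finally, I would apply Proposition~\ref{propo:gauss-to-unif} verbatim: for every $s>s_0$, setting $\eta \eqdef 1 - s_0/s \in (0,1)$, one obtains
\[
\mathop{\mathbb E}_{\Lambda \sim \nu}\bigl(\Delta(u,D_s^\Lambda)\bigr) \;\le\; \exp\!\left(-\frac{\eta^2}{8}\, n\right) + O(1)\left(\frac{s_0}{s}\right)^{n/4},
\]
which is the claimed bound. The only conceivable subtlety is making sure that Proposition~\ref{propo:gauss-to-unif} is applicable in expectation over $\Lambda$, but the latter is proved as an expectation bound and the constant $f(n)=O(1)$ absorbs any multiplicative factors, so no additional work is required.
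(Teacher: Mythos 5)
Your argument is exactly the paper's: it instantiates Proposition~\ref{propo:gauss-to-unif} with the average-case uniform-ball bound of Proposition~\ref{propo:Mink-Hl-Si-unif} (taking $f(n)=O(1)$) and verifies that $s_0 = w_0\sqrt{2\pi/n} = M^{1/n}/\sqrt{e}$. The parameter check is the only nontrivial step, and you carry it out correctly.
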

	
	\subsection{Smoothing Random $q$-ary Lattices} While the method of sampling lattices promised by the Minkowski-Hlawka-Siegel Theorem (Theorem~\ref{theo:Mink-Hl-Si}) is indeed very convenient for computations, it does not tell us much about how to explicitly sample from the distribution. Furthermore it is not very relevant if one is interested in the random lattices that are used in cryptography. 
	
	For a more concrete sampling procedure that is relevant to cryptography, we can consider the randomized Construction A (or, more precisely, its dual), which gives a very popular random model of lattices which are easily constructed from random codes. Specifically, for a prime $q$ and a linear code $\CC \subseteq (\Z/q\Z)^n$ we obtain a lattice as follows. First, we ``lift'' the codewords $\vec c \in \CC$ to vectors in $\R^n$ in the natural way by identifying $\Z/q\Z$ with the set $\{0,1,\dots,q-1\}$; denote the lifted vector as $\widetilde{\vec{c}}$. Then, we can define the following lattice
	\[
		\Lambda_{\CC} \eqdef \{\widetilde{\vec{c}} : \vec c \in \CC\} + q\Z^n.
	\]
	In other words: $\Lambda_{\CC}$ consists of all vectors in the integer lattice $\Z^n$ whose reductions modulo $q$ give an element of $\CC$. 
	
	Fix integers $1 \leq k \leq n$, a prime $q$ and a desired covolume $M$. We sample a random lattice $\Lambda$ as follows
	\begin{itemize}
		\item First, sample a random linear code $\CC\subseteq (\Z/q\Z)^n$ of dimension $k$ (recall this means that we sample a random $k \times n$ matrix $\vec G$ and define $\CC=\{\vec m \vec G:\vec m \in (\Z/q\Z)^k\}$),
		\item Then, we scale $\Lambda_{\CC}$ by $\frac{1}{M^{1/n}}\;\frac{1}{q^{1-k/n}}$,
		\item Lastly, we output the dual of $\frac{1}{M^{1/n}}\;\frac{1}{q^{1-k/n}}\Lambda_{\CC}$.
	\end{itemize}
	Notice that the scaling is chosen so that, as long as $\vec G$ is of full rank, the lattice $\Lambda$ we output has the desired covolume $M$. We denote this procedure of sampling $\Lambda$ by $\nu_{\textup{A}}$ (the dependence on $q$, $k$ and $n$ is left implicit).
	
	The important fact is that, up to an error term (which decreases as $q$ increases), the expected number of lattice points from $\dual \Lambda$ in a Euclidean ball of radius $r$ is roughly $\frac{\vol r}{M}$, as one would hope. 
	
	\begin{proposition}[{\cite[Lemma 7.9.2]{Z14}}] \label{propo:unif-of-q-ary-lattice}
		For every $n \geq 2$, $1 \leq k < n$ and prime power $q$, for $\Lambda \sim \nu_{\textup{A}}$ the expected number of lattice points from $\dual \Lambda$ in a Euclidean ball of radius $w \eqdef t\sqrt{n}$ satisfies
		\[
			\sqrt[n]{\frac{M\; \mathbb E_{\Lambda}(\Nb{\leq w}{\dual\Lambda})}{\vol{w}}} = 1 \pm \delta/t \quad \mbox{where } \delta \eqdef \frac{1}{q^{1-k/n}}. 
		\]
\end{proposition}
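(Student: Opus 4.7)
The plan is to unfold the Construction-A scaling, reduce the question to counting integer points in a large ball weighted by the probability that they lie in the random code $\CC$, and then estimate everything using symmetry and elementary lattice-point counting.

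Since the definition of $\nu_{\textup{A}}$ gives $\dual{\Lambda} = \alpha \Lambda_{\CC}$ with $\alpha \eqdef 1/(M^{1/n} q^{1-k/n})$, a dilation yields $\Nb{\leq w}{\dual{\Lambda}} = |\Lambda_{\CC} \cap \Bc_{r}|$ with $r \eqdef w/\alpha$. Using that $\vec{x} \in \Lambda_{\CC} \iff \vec{x} \bmod q \in \CC$ and applying linearity of expectation over the random generator matrix $\vec{G}$ gives
\[
\esp_{\CC}\!\left(|\Lambda_{\CC} \cap \Bc_{r}|\right) \;=\; |q\Z^{n} \cap \Bc_{r}| \;+\; p \cdot |(\Z^{n} \setminus q\Z^{n}) \cap \Bc_{r}|,
\]
where $p \eqdef \Prob[\vec{y} \in \CC]$ for any fixed nonzero $\vec{y} \in \F_{q}^{n}$. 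This probability does not depend on the nonzero choice of $\vec{y}$ thanks to the $GL_{n}(\F_{q})$-symmetry of the uniform distribution on $\vec{G}$.

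Next I would estimate $p$ by a standard Gaussian-binomial computation: conditionally on $\vec{G}$ having rank $s$, its row-span is uniformly distributed among $s$-dimensional subspaces of $\F_{q}^{n}$, each of which contains a fixed nonzero vector with probability $(q^{s} - 1)/(q^{n} - 1)$. Since $\vec{G}$ has full rank $k$ with probability $1 - O(q^{k-n})$, summing over $s$ gives $p = q^{k-n}(1 + O(q^{k-n}))$, a multiplicative correction absorbed in the final $\delta/t$ error term. For the lattice-point counts themselves, I would sandwich $\Bc_{r}$ between unions of translated fundamental cubes of $q\Z^{n}$ (of diameter $q\sqrt{n}$) contained in $\Bc_{r+q\sqrt n/2}$ and covering $\Bc_{r-q\sqrt n/2}$, obtaining $|q\Z^{n} \cap \Bc_{r}| = (\vol{r}/q^{n})\left(1 \pm O(q\sqrt{n}/r)\right)^{n}$ and similarly $|\Z^{n} \cap \Bc_{r}| = \vol{r}\left(1 \pm O(\sqrt{n}/r)\right)^{n}$. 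Combining with the estimate for $p$ yields $\esp_{\CC}(|\Lambda_{\CC} \cap \Bc_{r}|) = (\vol{r}/q^{n-k})\cdot(1 \pm \delta/t)^{n}$, and substituting $r = t\sqrt{n}\cdot M^{1/n}q^{1-k/n}$ makes the dominant relative error $q\sqrt n/r$ of order $\delta/t$; taking $n$-th roots gives the statement.

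The main obstacle is calibrating the lattice-point counting error tightly enough that it telescopes into the claimed $(1 \pm \delta/t)^{n}$ after the $n$-th root. Standard Gauss-circle-type bounds carry extra factors polynomial in $n$ that would survive the $n$-th root and spoil the estimate; the elementary cube-sandwich for $q\Z^{n}$, which exploits that its fundamental domain is simply $[0,q)^{n}$, is what produces the clean multiplicative factor $(1 \pm O(q\sqrt{n}/r))^{n}$ needed to match the statement.
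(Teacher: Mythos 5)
The paper does not prove Proposition~\ref{propo:unif-of-q-ary-lattice} at all: it is quoted verbatim with the citation \cite[Lemma~7.9.2]{Z14}, so there is no in-paper proof to compare your argument against. I can therefore only assess your sketch on its own merits.

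Your general plan --- unfold the scaling so $N_{\leq w}(\dual{\Lambda}) = |\Lambda_{\CC}\cap\Bc_r|$ with $r = w M^{1/n} q^{1-k/n}$, split $\Z^n$ into $q\Z^n$ and its complement, apply linearity of expectation over $\vec{G}$, then use a cube sandwich --- is the natural textbook route and gives a quantity of the correct shape. However two of the closing claims do not hold as stated. First, there is an $M$-placement mismatch: your derivation yields $\mathbb{E}[|\Lambda_{\CC}\cap\Bc_r|] = \frac{\vol r}{q^{n-k}}(1\pm\delta/t)^n = M\vol{w}(1\pm\delta/t)^n$, i.e.\ $\sqrt[n]{\mathbb{E}/(M\vol w)} = 1\pm\delta/t$, whereas the proposition as printed asserts $\sqrt[n]{M\,\mathbb{E}/\vol w} = 1\pm\delta/t$; these differ by a factor $M^{2/n}$. (Since $\text{covol}(\dual\Lambda)=1/M$, one indeed expects $\mathbb{E}\approx M\vol w$, and the usage in the proof of Theorem~\ref{theo:smoothing-q-ary}, which reads $\mathbb{E}\leq M\vol w (1+\cdots)^n$, agrees with your formula and not with the displayed formula --- so you appear to have reproduced the intended statement, but ``taking $n$-th roots gives the statement'' is not literally true of the displayed proposition.) Second, the error calibration is mislocated: with $r = t\sqrt{n}M^{1/n}q^{1-k/n}$ one gets $q\sqrt{n}/(2r) = (q/M^{1/n})\cdot(\delta/2t)$, which is \emph{not} of order $\delta/t$ --- it is off by a factor of roughly $q$. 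What is of order $\delta/t$ is the error of the \emph{fine} count, $\sqrt{n}/(2r) = \delta/(2tM^{1/n})$, coming from the $p\cdot|\Z^n\cap\Bc_r|$ term; the $q\Z^n$ term carries the worse $q\sqrt{n}/r$ error but only a $\sim q^{-k}$ share of the total, so its contribution has to be argued negligible rather than subsumed by cube-sandwich accuracy. Your closing remark that ``the elementary cube-sandwich for $q\Z^n$ \ldots\ is what produces the clean multiplicative factor $(1\pm O(q\sqrt{n}/r))^n$ needed to match the statement'' therefore misidentifies the binding constraint: the factor that matches $\delta/t$ comes from the unit-cube sandwich applied to $\Z^n$, and the coarse $q\Z^n$ term needs a separate small-coefficient argument (and, in the small-$t$ regime, an explicit hypothesis) to be swept under the rug.
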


	We now turn to bounding the expected statistical distance between $u$ and $\gunif{w}^{\Lambda}$, where $\Lambda\sim\nu_{\textup{A}}$ and $w>0$ is the radius of the Euclidean ball from which the noise is uniformly sampled. First, we state an explicit formula for the Fourier transform of $1_{\Bc_{w}}$, the indicator function of a Euclidean ball of radius $w$, in terms of \emph{Bessel functions}. 
	
	\begin{notation}
		For a positive real number $\mu>0$, we denote by $J_\mu:\R\to\R$ the Bessel function of the first kind of order $\mu$.
	\end{notation} 
	
	The important fact concerning Bessel functions that we will use is the following.
	\begin{fact} We have
	\begin{align} \label{eq:fourier-transform-bessel}
		\widehat{1_{\Bc_w}}(\vec y) = \left(\frac{w}{|\vec y|_2}\right)^{n/2} J_{n/2}(2\pi w|\vec y|_2) .
	\end{align}
	\end{fact}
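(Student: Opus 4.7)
The plan is to exploit the radial symmetry of $1_{\Bc_w}$ and reduce the $n$-dimensional Fourier integral to a one-dimensional integral involving Bessel functions, then evaluate it in closed form using a standard identity. Since $1_{\Bc_w}(\vec x)$ depends only on $|\vec x|_2$, its Fourier transform is radial, so it suffices to evaluate $\widehat{1_{\Bc_w}}(\vec y)$ for $\vec y = r \vec e_1$ with $r = |\vec y|_2 > 0$ (and to separately check the removable singularity at $r=0$ via direct volume computation).

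First I would recall (or briefly derive) the general formula expressing the Fourier transform of a radial function $f(\vec x) = \varphi(|\vec x|_2)$ on $\R^n$ as a Hankel-type transform:
\[
\widehat{f}(\vec y) \;=\; 2\pi\, |\vec y|_2^{-(n/2-1)}\int_0^\infty \varphi(\rho)\, J_{n/2-1}(2\pi \rho |\vec y|_2)\, \rho^{n/2}\, d\rho.
\]
This can be obtained by writing the integral in spherical coordinates, integrating out the angular variables on $S^{n-1}$, and using the integral representation of $J_{n/2-1}$ that arises naturally from averaging $e^{2i\pi \rho |\vec y|_2 \cos\theta}$ over the sphere. I would just state this representation and refer to a standard reference for the computation, since it is a classical identity.

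Next, specializing to $\varphi = 1_{[0,w]}$, the integral reduces to $\int_0^w J_{n/2-1}(2\pi \rho |\vec y|_2)\, \rho^{n/2}\, d\rho$. The key step is then the Bessel identity
\[
\frac{d}{dt}\bigl[t^\mu J_\mu(t)\bigr] \;=\; t^\mu J_{\mu-1}(t),
\]
applied with $\mu = n/2$ and the change of variables $t = 2\pi \rho |\vec y|_2$. Integrating, we obtain
\[
\int_0^w J_{n/2-1}(2\pi \rho |\vec y|_2)\, \rho^{n/2}\, d\rho \;=\; \frac{w^{n/2}}{(2\pi|\vec y|_2)^{n/2}} \cdot w \cdot J_{n/2}(2\pi w |\vec y|_2) \cdot (2\pi|\vec y|_2)^{-1} \cdot (\text{powers to be tracked}),
\]
after which regrouping the powers of $w$ and $|\vec y|_2$ yields precisely $(w/|\vec y|_2)^{n/2}\, J_{n/2}(2\pi w|\vec y|_2)$.

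The main obstacle is purely bookkeeping: keeping track of the normalization conventions (the paper uses $\widehat{f}(\vec x) = \int f(\vec y)\, e^{2i\pi \vec x\cdot\vec y}\, d\vec y$) and the exponents arising from the Hankel transform so that the factors of $2\pi$, $w$, and $|\vec y|_2$ combine as claimed. There are no conceptual difficulties; the result is classical, and the only substantive choice is whether to derive the radial-Fourier formula from scratch or cite it. I would cite it, so that the proof reduces to the one-line application of the Bessel derivative identity and simplification.
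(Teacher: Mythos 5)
The paper states this as a Fact without proof, citing it implicitly as a classical identity, so there is no paper-proof to compare against. Your approach is the standard one: reduce the radial Fourier transform to a Hankel-type integral over $J_{n/2-1}$, then evaluate via the recurrence $\frac{d}{dt}[t^\mu J_\mu(t)] = t^\mu J_{\mu-1}(t)$. This is correct. One caveat: your intermediate displayed line
\[
\int_0^w J_{n/2-1}(2\pi \rho |\vec y|_2)\, \rho^{n/2}\, d\rho = \frac{w^{n/2}}{(2\pi|\vec y|_2)^{n/2}} \cdot w \cdot J_{n/2}(2\pi w |\vec y|_2) \cdot (2\pi|\vec y|_2)^{-1} \cdot (\cdots)
\]
has the powers off; the clean computation (substituting $t = 2\pi\rho|\vec y|_2$ and applying the Bessel derivative identity with $\mu = n/2$) gives
\[
\int_0^w J_{n/2-1}(2\pi \rho |\vec y|_2)\, \rho^{n/2}\, d\rho = \frac{w^{n/2}\, J_{n/2}(2\pi w |\vec y|_2)}{2\pi |\vec y|_2},
\]
and multiplying by the prefactor $2\pi |\vec y|_2^{-(n/2-1)}$ from the Hankel representation yields exactly $\left(w/|\vec y|_2\right)^{n/2} J_{n/2}(2\pi w|\vec y|_2)$. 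So you correctly flag the bookkeeping as the only issue and the final answer is right; just replace the placeholder with the explicit substitution and the proof is complete.
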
 
	We will refrain from providing an explicit formula for Bessel functions, and instead use the following upper-bound as a black-box. 
	
	\begin{proposition}[\cite{K06}]\label{propo:bound-on-bessel}
		For any $x \in \R$ we have 
		\[
			|J_{n/2}(x)| \leq |x|^{-1/3} .
		\]
	\end{proposition}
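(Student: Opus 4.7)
The plan is to establish this uniform bound on $J_\nu$ (with $\nu = n/2$) via the standard three-regime strategy, splitting the real line into the small-argument region $|x| \ll \nu$, the transitional region $|x| \approx \nu$, and the large-argument region $|x| \gg \nu$. Since $J_\nu$ is even or odd depending on the parity of $\nu$, in each case $|J_\nu(-x)| = |J_\nu(x)|$, so it suffices to consider $x > 0$.

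In the large-argument regime, I would use the classical Hankel asymptotic expansion $J_\nu(x) = \sqrt{2/(\pi x)} \cos(x - \nu \pi/2 - \pi/4) + O(x^{-3/2})$, which immediately gives $|J_\nu(x)| = O(x^{-1/2})$, a strictly stronger bound than $x^{-1/3}$. The error term must be controlled uniformly in $\nu$, which is handled by standard steepest-descent estimates applied to the Schl\"afli integral representation of $J_\nu$.

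In the small-argument regime, I would use the power series $J_\nu(x) = \sum_{k \geq 0} \frac{(-1)^k (x/2)^{2k+\nu}}{k! \, \Gamma(k+\nu+1)}$. When $x \leq c\nu$ for a small absolute constant $c$, this series is dominated by its first term $(x/2)^\nu/\Gamma(\nu+1)$, which is exponentially small in $\nu$ by Stirling's formula, and therefore trivially satisfies $|J_\nu(x)| \leq x^{-1/3}$.

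The main obstacle, and the heart of the argument, is the transitional region $x \approx \nu$, where $J_\nu$ attains its global maximum and neither of the above expansions is valid. Here one uses the Airy-function approximation: near the turning point, the Bessel equation $x^2 y'' + x y' + (x^2 - \nu^2) y = 0$ is well approximated by the Airy equation, yielding $J_\nu(\nu + z \nu^{1/3}) \approx (2/\nu)^{1/3}\, \mathrm{Ai}(-2^{1/3} z)$. Since $\mathrm{Ai}$ is bounded by a universal constant, this already gives $|J_\nu(x)| = O(\nu^{-1/3}) = O(x^{-1/3})$ throughout the transitional regime. The hard quantitative step is sharpening the implicit constant all the way down to $1$. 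For this I would follow Krasikov's monotone-functional argument: construct an auxiliary quantity $E(x)$ built from $J_\nu(x)^2$, $J_\nu'(x)^2$ and a carefully tuned weight function $\omega(x)$ arising from a WKB-type ansatz, verify via direct differentiation that $E(x) \cdot x^{2/3}$ is monotone on the oscillatory region $x > \sqrt{\nu^2 - 1/4}$, and initialize the monotonicity using the large-argument asymptotic computed above. Tracking the value of $E$ backward from $x = \infty$ to the turning point and then gluing with the small-argument estimate gives the bound with constant exactly~$1$.
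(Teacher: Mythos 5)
The paper does not actually prove this proposition; it invokes Krasikov~\cite{K06} as a black box, so there is no internal argument for you to match. Your sketch lays out the standard uniform-asymptotics strategy for Bessel bounds — power series for $x\ll\nu$, Hankel asymptotics for $x\gg\nu$, Airy/turning-point analysis near $x\approx\nu$ — and then explicitly defers the quantitative closing step to Krasikov's monotone Sonin-type functional, which amounts to the same reliance on~\cite{K06} as the paper itself.

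Two caveats are worth flagging. First, the constant $1$ is not tight here: the optimal constant for $|J_\nu(x)|\le c\,|x|^{-1/3}$ over all $\nu\ge 0$ and $x>0$ is $c\approx 0.7857$ (Landau's constant, attained at the Airy peak near the turning point), so ``sharpening the implicit constant all the way down to~$1$'' is not the delicate step you present it as; there is genuine slack, and a crude Airy-plus-Hankel estimate with explicit error control already closes the gap. Second, as written your three regimes are controlled only by unquantified $O(\cdot)$ bounds, and the regimes themselves are described asymptotically rather than with concrete cutoffs that jointly cover all of $x>0$; to conclude the explicit inequality $|J_{n/2}(x)|\le |x|^{-1/3}$ one needs uniform-in-$\nu$ error constants and a matching argument between regimes, which is precisely the content of Krasikov's analysis that your sketch outsources. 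As a high-level outline the proposal is sound, but it is a roadmap to the reference rather than a self-contained proof.
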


	Using this proposition, we first prove a technical lemma that will be reused when we discuss smoothing arbitrary lattices. In order to state the lemma, we introduce the following auxiliary function.
	
	\begin{notation} \label{not:g_w}
		For a real $w>0$, we define $g_w:\R \to \R$ via 
		\[
			g_w(t) \eqdef \frac{1}{\vol w}\widehat{{1}_{\Bc_w}}(\vec x)^2
		\]
		where $\vec x$ is any vector in $\R^n$ of norm $t$. Note that as $\widehat{{1}_{\Bc_w}}(\vec x)$ depends only on $|\vec x|_2$, this is indeed well-defined. 
	\end{notation}
	
	The following lemma leverages Proposition~\ref{propo:bound-on-bessel} to upper-bound $g_w$ on a closed interval.  
	
	\begin{lemma} \label{lemma:varphi-bound}
		For any $w>0$ and any $0 \leq a$ and $b = \left(1+\frac{1}{n}\right)a$ we have, for some constant $C>0$
		\[
			\max_{a \leq t \leq b}g_w(t) \leq \frac{C}{\vol b w^{2/3}}\;\frac{1}{a^{2/3}} .
		\]
	\end{lemma}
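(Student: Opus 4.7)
\textbf{Proof plan for Lemma \ref{lemma:varphi-bound}.}

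The plan is to unfold the definition of $g_w$ using the explicit Bessel-form expression for $\widehat{{1}_{\Bc_w}}$ in Equation~\eqref{eq:fourier-transform-bessel}, apply Proposition~\ref{propo:bound-on-bessel} to get a clean power-law upper bound, and then observe that the resulting bound is monotone in $t$ so that the maximum over $[a,b]$ is attained at the endpoint $t=a$. Finally, the ratio $(b/a)^n = (1+1/n)^n \leq e$ will let us trade $a^n$ for $b^n$ at the cost of an absolute constant.

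Concretely, I would first expand
\[
g_w(t) \;=\; \frac{1}{\vol{w}}\left(\frac{w}{t}\right)^{n} J_{n/2}(2\pi w t)^{2},
\]
and then apply the Bessel bound $|J_{n/2}(2\pi w t)| \leq (2\pi w t)^{-1/3}$ from Proposition~\ref{propo:bound-on-bessel} to deduce
\[
g_w(t) \;\leq\; \frac{1}{(2\pi)^{2/3}}\;\frac{w^{\,n-2/3}}{\vol{w}\, t^{\,n+2/3}}.
\]
Using $\vol{w} = \pi^{n/2} w^{n}/\Gamma(n/2+1)$, the $w^{n}$ factors cancel, leaving a bound of the form $\mathrm{const}_n \cdot \frac{1}{w^{2/3}\,t^{\,n+2/3}}$. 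Since this upper bound is strictly decreasing in $t$, its maximum over $[a,b]$ is attained at $t=a$, giving a bound proportional to $\frac{1}{w^{2/3}\,a^{n+2/3}}$.

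To reach the target form $\frac{C}{\vol{b}\,w^{2/3}\,a^{2/3}}$, I rewrite $\frac{1}{a^{n+2/3}} = \frac{1}{a^{2/3}}\cdot\frac{1}{a^{n}}$ and use $\vol{b} = \pi^{n/2} b^{n}/\Gamma(n/2+1)$ to absorb the $\Gamma(n/2+1)/\pi^{n/2}$ factor back into $\vol{b}^{-1}$. The price is a multiplicative factor $(b/a)^{n}$, which by the hypothesis $b = (1+1/n)a$ satisfies
\[
\left(\frac{b}{a}\right)^{n} \;=\; \left(1+\tfrac{1}{n}\right)^{n} \;\leq\; e.
\]
Collecting constants yields the claimed bound with $C = e/(2\pi)^{2/3}$ (or any larger absolute constant).

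I do not foresee a serious obstacle: the whole argument is a direct computation, and the only mild subtlety is making sure that the $w$-dependence and the $n$-dependent constants line up correctly after substituting the formula for $\vol{w}$ and $\vol{b}$. The key insight to double-check is simply that the exponent $-2/3$ coming from the Bessel bound is exactly what is needed to produce both the $w^{-2/3}$ and the extra $a^{-2/3}$ (after pulling off $a^{-n}$) in the statement.
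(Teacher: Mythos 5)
Your proof is correct and takes essentially the same approach as the paper: both exploit the identity $\frac{1}{\vol w}(w/t)^n = \frac{1}{\vol t}$ (you realize it by explicit cancellation of $w^n$ via the volume formula, the paper by the scaling relation $\vol t = (t/b)^n \vol b$), both apply the Bessel bound of Proposition~\ref{propo:bound-on-bessel}, both observe that the resulting bound is decreasing so the maximum sits at $t=a$, and both trade $\vol a$ for $\vol b$ at the cost of the factor $(1+1/n)^n \leq e$. The only difference is the order in which these steps are applied, which does not change the argument.
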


	\begin{proof}
		 First, we notice that for all $t \in [a,b]$ 
		\begin{align*}
			\vol t = \left(\frac{t}{b}\right)^n\vol b \geq \left(\frac{a}{b}\right)^n\vol b = \left(1+\frac{1}{n}\right)^{-n}\vol b \geq \frac{1}{C'}\vol b
		\end{align*}
		for some constant $C'>0$. We now use Proposition~\ref{propo:bound-on-bessel} to derive
		\begin{align*}
			\max_{a \leq t \leq b}\;g_w(t) \leq \frac{C'}{\vol b} \;\max_{a \leq t \leq b} J_{n/2}(2\pi wt)^2 
\leq \frac{C}{\vol bw^{2/3}}\;\frac{1}{a^{2/3}}
		\end{align*}
	for an appropriate constant $C>0$ which concludes the proof.
	\end{proof}

	We now provide the main theorem of this section. It demonstrates that to smooth our ensemble of random $q$-ary codes (in expectation) with the uniform distribution over the ball of radius $w$, it still suffices to choose $w > w_0 \eqdef \sqrt{n2\pi/e}\;M^{1/n}$, assuming $q$ is not too small. 
	
	\begin{theorem} \label{theo:smoothing-q-ary}
		Let $n>2$ and $1 \leq k < n$. Let $q$ be a prime and set $\gamma \eqdef \frac{n^{3/2}}{q^{1-k/n}}$. Let $\Lambda \sim \nu_{\textup{A}}$. For some constant $C>0$, we have 
		\[
			\mathbb E_{\Lambda}\left(\Delta(u,\gunif{w}^{\Lambda})\right) \leq C \left(\frac{n}{w}\right)^{1/3} e^{\gamma/2} \sqrt{\frac{M}{\vol w}} .
		\]
		In particular, if $w>w_0 \eqdef \sqrt{n2\pi/e}M^{1/n}$, we have 
		\[
			\mathbb E_{\Lambda}\left(\Delta(u,\gunif{w}^{\Lambda})\right) \leq O\left(\left(\frac{n}{w}\right)^{1/3} e^{\gamma/2}\right) \left(\frac{w_0}{w}\right)^{n/2} .
		\]
	\end{theorem}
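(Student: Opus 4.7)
The plan is to apply the Parseval+Cauchy-Schwarz bound of Proposition~\ref{propo:FBSDLat}, then control the resulting expected sum over dual lattice points via a dyadic annulus argument. First, I would invoke Proposition~\ref{propo:FBSDLat} with $g = \gunif{w}$, observe that $|\widehat{\gunif{w}}(\vec x)|^2 = g_w(|\vec x|_2)/\vol{w}$ by Notation~\ref{not:g_w}, take expectations, and apply Jensen's inequality to pull $\mathbb{E}_\Lambda$ inside the square root, yielding
\[
\mathbb{E}_\Lambda\!\left[\Delta(u, \gunif{w}^\Lambda)\right] \leq \frac{1}{2\sqrt{\vol{w}}} \sqrt{\mathbb{E}_\Lambda\!\left[\sum_{\vec x \in \dual{\Lambda} \setminus \{\vec 0\}} g_w(|\vec x|_2)\right]}.
\]

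Next, I would partition $\dual{\Lambda} \setminus \{\vec 0\}$ into the geometric annuli $A_i = \{\vec x : a_i < |\vec x|_2 \leq a_{i+1}\}$ for $i \geq 0$, with $a_{i+1} = (1+1/n)a_i$ and $a_0 > 0$ a parameter to be tuned. On each annulus, Lemma~\ref{lemma:varphi-bound} bounds $\max_{A_i} g_w$ by $C/(\vol{a_{i+1}}\, w^{2/3}\, a_i^{2/3})$, while Proposition~\ref{propo:unif-of-q-ary-lattice}, applied with $a_{i+1} = t_{i+1}\sqrt{n}$ so that $n\delta/t_{i+1} = \gamma/a_{i+1}$, controls the expected number of $\dual{\Lambda}$-points in $\mathcal{B}_{a_{i+1}}$ by $\frac{\vol{a_{i+1}}}{M}\exp(\gamma/a_{i+1})$. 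The product of these two bounds cancels $\vol{a_{i+1}}$ and gives a per-annulus contribution of at most $\frac{C}{M\, w^{2/3}\, a_i^{2/3}}\exp(\gamma/a_{i+1})$. I would then sum the resulting series in $i$, a geometric series of ratio $(1+1/n)^{-2/3}$ whose effective length is $\Theta(n)$, and handle dual vectors of norm $\leq a_0$ separately via the crude bound $g_w \leq g_w(0) \leq \vol{w}$ combined with Proposition~\ref{propo:unif-of-q-ary-lattice} applied at radius $a_0$.

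Finally, plugging this estimate into the bound above and taking the square root, tuning $a_0$ so that both the exponential factor $\exp(\gamma/a_0)$ and the algebraic factor $1/a_0^{2/3}$ match the orders appearing in the theorem, I expect to recover the stated bound, with the $(n/w)^{1/3}$ term emerging from the interaction between the $w^{2/3}$ factor of the Bessel estimate and the length of the geometric sum, and the $e^{\gamma/2}$ factor emerging from the Proposition~\ref{propo:unif-of-q-ary-lattice} error near the bottom of the annulus decomposition. The ``in particular'' corollary then follows by substituting $w_0$ into the main bound and simplifying $\vol{w}$ via Stirling's approximation, exactly as in the proof of Proposition~\ref{propo:Mink-Hl-Si-unif}. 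The main obstacle is the balancing act at small radii: both Lemma~\ref{lemma:varphi-bound} and the multiplicative error $(1+\delta/t)^n$ of Proposition~\ref{propo:unif-of-q-ary-lattice} deteriorate as $a \to 0$, so $a_0$ must be chosen carefully and the first few annuli (as well as any exceptionally short dual vectors) require the most delicate treatment---this is the technical heart of the argument.
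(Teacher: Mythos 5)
Your chain of steps matches the paper's: Proposition~\ref{propo:FBSDLat}, Jensen, geometric annuli of ratio $1+1/n$, Lemma~\ref{lemma:varphi-bound} for the per-annulus maximum of $g_w$, Proposition~\ref{propo:unif-of-q-ary-lattice} for the expected counts, the geometric-series summation, and Stirling for the corollary. The paper simply pins the bottom of the decomposition at $t_0 = 1$ rather than introducing a tunable $a_0$, and the $e^\gamma$ factor falls out at once from the observation that the multiplicative error $(1+\delta/t)^n$ of Proposition~\ref{propo:unif-of-q-ary-lattice} is monotonically decreasing in the radius, hence dominated uniformly by its value at $t_0$; there is no balancing act, and the first few annuli need no special treatment beyond this. (Whether $\dual{\Lambda}$ can actually have nonzero vectors of norm below $t_0=1$ is a point the paper glosses over; your instinct that this deserves a look is reasonable, but your proposed crude bound $g_w \leq g_w(0) = \vol{w}$ multiplied by the count at radius $a_0$ produces a term that is not obviously negligible, so that fallback would itself need more care.)

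One concrete point would derail your sketch as written. Reading Proposition~\ref{propo:unif-of-q-ary-lattice} at face value you get $\mathbb{E}\!\left[\Nb{\leq r}{\dual{\Lambda}}\right] \leq \frac{1}{M}\vol{r}\,(1+\delta/t)^n$, as you wrote, and after summing, dividing by $\vol{w}$ and taking the square root this yields a bound proportional to $1/\sqrt{M\vol{w}}$ --- off from the claimed $\sqrt{M/\vol{w}}$ by a factor of $M$. This discrepancy is a normalization typo in the proposition: in the construction $\dual{\Lambda}$ has covolume $1/M$, not $M$, so heuristically $\mathbb{E}\!\left[\Nb{\leq r}{\dual{\Lambda}}\right] \approx M\vol{r}$, and the displayed equality should read $\sqrt[n]{\mathbb{E}\!\left[\Nb{\leq r}{\dual{\Lambda}}\right]/(M\vol{r})} = 1 \pm \delta/t$. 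The paper's own proof already uses the corrected form $\mathbb{E}\!\left[\Nb{\leq t_{j+1}}{\dual{\Lambda}}\right] \leq M\,\vol{t_{j+1}}(1+\cdots)^n$, and you need that form to land on the stated $\sqrt{M/\vol{w}}$.
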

	
	\begin{proof}
		Let $t_j\eqdef\left(1+\frac{1}{n}\right)^j$ for $j\in\mathbb{N}$ and  
		\[
			N_j \eqdef \sharp\{\dual{\vec x} \in \dual \Lambda : t_j \leq |\dual{\vec x}|_2 < t_{j+1}\} \quad ; \quad \varphi_j \eqdef \max_{t_j \leq t \leq t_{j+1}}g_w(t).
		\]
Now, we apply Proposition~\ref{propo:FBSDLat} and the above definitions to obtain
		\begin{align*}
			\mathbb E_{\Lambda}\left(2\Delta(u,\gunif{w}^{\Lambda})\right) &\leq \mathbb E_{\Lambda}\left(\sqrt{\sum_{\vec x \in \dual{\Lambda}\setminus \{\vec 0\}}|\widehat{\gunif{w}}(\vec x)|^2}\right) \\
			&\leq \sqrt{\frac{1}{\vol w}\mathbb E_{\Lambda} \left(\sum_{\vec x \in \dual{\Lambda}\setminus \{\vec 0\}}g_w(\vec x)\right)} \quad (\mbox{Jensen's inequality}) \\
			&\leq \sqrt{\frac{1}{\vol w}\mathbb E_{\Lambda} \left(\sum_{j=0}^{\infty}N_j\varphi_j\right)} \\
			&\leq \sqrt{\frac{1}{\vol w} \sum_{j=0}^{\infty} \mathbb E\left(\Nb{\leq t_{j+1}}{\dual\Lambda}\right)\varphi_j} \ .
		\end{align*}
		By Proposition~\ref{propo:unif-of-q-ary-lattice}, we may upper-bound
		\begin{align}
			\mathbb E_{\Lambda}\left(\Nb{\leq t_{j+1}}{\dual\Lambda}\right) \leq M \; \vol{t_{j+1}}\left(1 + \left(\frac{\sqrt n}{\left(1+\frac{1}{n}\right)^{jn}q^{1-k/n}}\right)\right)^n \ .
		\end{align}
		Now, recalling $\gamma = \frac{n^{3/2}}{q^{1-k/n}}$ we have for any $j \geq 0$
		\[
			\left(1 + \left(\frac{\sqrt n}{\left(1+\frac{1}{n}\right)^{jn}q^{1-k/n}}\right)\right)^n \leq \left(1 + \left(\frac{\sqrt n}{q^{1-k/n}}\right)\right)^n \leq e^{n\frac{\sqrt{n}}{q^{1-k/n}}} = e^\gamma.
		\]
		Thus, we conclude 
		\begin{align*}
			\mathbb E_{\Lambda}\left(2\Delta(u,\gunif{w}^{\Lambda})\right) \leq \sqrt{\frac{e^\gamma M}{\vol w} \sum_{j=0}^{\infty}\vol{t_{j+1}})\varphi_j} \ .
		\end{align*}
		Now, by Lemma~\ref{lemma:varphi-bound} we have $\varphi_j \leq \frac{C_1}{\vol{t_{j+1}}w^{2/3}}\frac{1}{t_j^{2/3}}$ for all $j \geq 0$. Hence,
		\begin{align*}
			\sum_{j=0}^{\infty}\vol{t_{j+1}}\varphi_j &\leq \frac{C_1}{w^{2/3}}\sum_{j=0}^{\infty}\frac{\vol{t_{j+1}}}{\vol{t_{j+1}}}\frac{1}{t_j^{2/3}}\\
			&= \frac{C_1}{w^{2/3}}\sum_{j=0}^{\infty}\frac{1}{(1+1/n)^{2j/3}}\\
			&= \frac{C_1}{w^{2/3}}\frac{1}{1-(1+1/n)^{-2/3}}\\
			&\leq \frac{C_2 \; n^{2/3}}{w^{2/3}} \ ,
		\end{align*}
		for an appropriate constant $C_2>0$. Thus, putting everything together we derive
		\begin{align*}
			\mathbb E_{\Lambda}\left(\Delta(u,\gunif{w}^{\Lambda})\right) \leq \sqrt{\frac{e^\gamma M}{2\vol w}\; \frac{C_2 n^{2/3}}{w^{2/3}}}
			\leq C \left(\frac{n}{w}\right)^{1/3} e^{\gamma/2} \sqrt{\frac{M}{\vol w}} 
		\end{align*}
		for some constant $C>0$. The ``in particular'' part of the Theorem follows analogously to the corresponding argumentation (Stirling's estimate) used in the proof of Proposition \ref{propo:Mink-Hl-Si-unif}.
\end{proof}
	
	Next, turning to Gaussian noise, we could again prove a smoothing bound ``directly,'' but this will lose the same factor of $\sqrt{e/2}$ as we had earlier. Instead, we apply Proposition~\ref{propo:gauss-to-unif} with the function $f(n) = O\left(\left(\frac{n}{w}\right)^{1/3} e^{\gamma/2}\right)$ to conclude the following.
	
	\begin{theorem} \label{theorem:random-q-ary-gauss-to-unif}
		Let $n>2$ and $1 \leq k < n$. Let $q$ be a prime and set $\gamma \eqdef \frac{n^{3/2}}{q^{1-k/n}}$. Let $\Lambda$ be a random $q$-ary lattice sampled according to $\nu_A$, let $u=u_{\R^n/\Lambda}$ be the uniform distribution over its cosets, and let 
		$$
		s_0 \eqdef M^{1/n}/\sqrt{e}.
		$$ 
		Then, for any $s>s_0$, setting $\eta \eqdef 1-\frac{s_0}{s} \in (0,1)$, we have 
		\[ 						
			\mathbb{E}_{\Lambda}\left(\Delta\left(u,D_s^\Lambda\right)\right) \leq \exp\left(-\frac{\eta^2}{8} \; n\right) + O(1)\;(s/s_0)^{n/4} \; e^{\gamma/2}. 
		\]
	\end{theorem}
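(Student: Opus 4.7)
The strategy is a direct combination of Theorem~\ref{theo:smoothing-q-ary}, which controls smoothing of a random $q$-ary lattice under uniform ball noise, with Proposition~\ref{propo:gauss-to-unif}, the generic translation that converts such a uniform-ball smoothing bound into one for Gaussian noise. In other words, the entire work has already been done; this theorem is just the $q$-ary analogue of Theorem~\ref{theorem:random-lattice-gauss-to-unif} obtained by feeding a different starting bound into the same translation lemma.

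First I would recall that Theorem~\ref{theo:smoothing-q-ary} guarantees, for every $w > w_0 \eqdef \sqrt{n2\pi/e}\,M^{1/n}$,
\[
    \mathbb{E}_{\Lambda}\!\left(\Delta(u, \gunif{w}^{\Lambda})\right) \leq O\!\left(\left(\frac{n}{w}\right)^{1/3} e^{\gamma/2}\right)\left(\frac{w_0}{w}\right)^{n/2}.
\]
To invoke Proposition~\ref{propo:gauss-to-unif}, I need to realize the left hand side as $f(n)(w_0/w)^{n/2}$ for some function $f(n)$ \emph{independent} of $w$. Since $w > w_0$, the offending prefactor satisfies $(n/w)^{1/3} \leq (n/w_0)^{1/3}$, a quantity that depends only on $n$ (and the fixed covolume $M$). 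Hence I can take $f(n) \eqdef C\,(n/w_0)^{1/3}\,e^{\gamma/2}$ for an absolute constant $C>0$; this $f(n)$ is polynomial in $n$ times $e^{\gamma/2}$, and will be absorbed into the $O(1)\,e^{\gamma/2}$ in the final statement.

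Next I would apply Proposition~\ref{propo:gauss-to-unif} with this $f(n)$ and with $s_0 = w_0 \sqrt{2\pi/n} = M^{1/n}/\sqrt{e}$. Its conclusion, specialized to the present context, reads, for every $s > s_0$ with $\eta \eqdef 1 - s_0/s$,
\[
    \mathbb{E}_{\Lambda}\!\left(\Delta(u, D_s^{\Lambda})\right) \leq \exp\!\left(-\frac{\eta^2}{8}\,n\right) + f(n)\left(\frac{s_0}{s}\right)^{n/4},
\]
which upon absorbing the subexponential factors in $f(n)$ into $O(1)$ yields exactly the claimed inequality.

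No substantive obstacle arises: the only point that requires any care is verifying that the hypothesis of Proposition~\ref{propo:gauss-to-unif} holds uniformly in $w$, since the uniform-ball bound in Theorem~\ref{theo:smoothing-q-ary} carries a mildly $w$-dependent prefactor $(n/w)^{1/3}$. Monotonicity of this prefactor on $\{w > w_0\}$ allows it to be replaced by $(n/w_0)^{1/3}$, which is then a legitimate $f(n)$ in the sense required by the proposition. Everything else is mechanical and parallels the passage from Proposition~\ref{propo:Mink-Hl-Si-unif} to Theorem~\ref{theorem:random-lattice-gauss-to-unif}.
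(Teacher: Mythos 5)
Your proposal is correct and matches the paper's own (very terse) argument, which simply applies Proposition~\ref{propo:gauss-to-unif} with $f(n)=O\bigl((n/w)^{1/3}e^{\gamma/2}\bigr)$; you are more careful in noting that this prefactor nominally depends on $w$ and in replacing it by the uniform bound $C(n/w_0)^{1/3}e^{\gamma/2}$ valid for all $w>w_0$ before invoking the proposition. Note also that the exponent in the displayed conclusion should read $(s_0/s)^{n/4}$, as you write and as Proposition~\ref{propo:gauss-to-unif} gives, rather than the $(s/s_0)^{n/4}$ appearing in the theorem statement, which is evidently a typo.
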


	\subsection{Smoothing Arbitrary Lattices}
	
	We now turn our attention to the task of smoothing arbitrary lattices.

	Analogously to how we used the minimum distance of the dual code to give our smoothing bound for worst-case codes, we will use the shortest vector of the dual lattice in order to provide our smoothing bound for worst-case lattices. The lemma that we will apply is the following where
	$$
	\CKL \eqdef 2^{0.401}.
	$$ 
	\begin{lemma}[{\cite[Lemma 3]{PS09}}]\label{lemma:KLt}
		For any $n$-dimensional lattice $\Lambda$,
		$$
		\forall t \geq \lambda_{1}(\Lambda), \quad N_{\leq t}(\Lambda) \leq \frac{\vol{t}}{\vol{\lambda_{1}(\Lambda)}} \; \CKL^{n(1+o(1))}.
		$$
	\end{lemma}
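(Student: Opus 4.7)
My plan is to reduce the lattice-point count to the Kabatiansky--Levenshtein (KL) spherical-code bound~\cite{KL78} via a standard annular decomposition of $\Bc_t$. By rescaling $\Lambda$ by a factor $1/\lambda_1(\Lambda)$ I may assume $\lambda_1(\Lambda) = 1$, so that the statement reduces to $\Nlt{t}{\Lambda} \leq t^{n(1+o(1))}\, \CKL^{n(1+o(1))}$ for every $t \geq 1$, using $\vol{t}/\vol{1} = t^n$.

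I would then slice $\Bc_t \setminus \{\vec 0\}$ into $K = O(n \log t) = 2^{o(n)}$ geometric annular shells $A_k = \{\vec x : r_k \leq |\vec x|_2 < r_{k+1}\}$ with $r_k \eqdef (1+1/n)^k$, so that it suffices to bound the lattice-point count in each shell by $\CKL^{n(1+o(1))} r_k^{n(1+o(1))}$. Within a single shell $A_k$ of outer radius $r \eqdef r_{k+1}$, distinct lattice points $\vec x, \vec y \in A_k$ satisfy $|\vec x - \vec y|_2 \geq 1$ while $|\vec x|_2, |\vec y|_2 \in [r/(1+1/n), r]$, so the law of cosines forces the angular separation $\theta$ between $\vec x$ and $\vec y$ to obey $\sin(\theta/2) \geq 1/(2r) \cdot (1-o(1))$. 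Projecting the lattice points in $A_k$ radially onto the unit sphere therefore yields a spherical code with minimum angular distance $\theta$; applying the KL bound in the uniform form $A(n,\theta) \leq (\sin(\theta/2))^{-(n-1)} \cdot (\CKL/2)^{n(1+o(1))}$ gives $|A_k \cap \Lambda| \leq r^{n(1+o(1))} \CKL^{n(1+o(1))}$. Summing over the $K$ shells is dominated by the outermost one, producing the target bound after unwinding the rescaling.

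The main obstacle is to invoke KL precisely in the uniform form above---i.e., that for every $\theta \in (0, \pi/2]$, the maximum spherical-code size is no more than the trivial cap-packing count $(\sin(\theta/2))^{-(n-1)}$ times the exponential improvement factor $(\CKL/2)^{n(1+o(1))}$, rather than just at the distinguished angle $\theta = \pi/3$ where the KL constant was originally derived. Once this statement is extracted from~\cite{KL78}, the rest of the proof is bookkeeping: absorbing the polynomial number of shells, the $1 \pm o(1)$ angular slack within each shell, and the constant factor of $2$ in the lower bound on $\sin(\theta/2)$ all into the $o(1)$ in the exponent.
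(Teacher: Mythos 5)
The paper cites this lemma directly from~\cite{PS09} and gives no internal proof, so there is nothing to compare line-by-line; your annulus-then-spherical-code strategy is in any case the classical route, and is in spirit what~\cite{PS09} (and~\cite{KL78} themselves, for sphere packings) do.

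However, the annular step has a genuine gap for large radii. After rescaling so that $\lambda_1(\Lambda)=1$, your shell $A_k$ has absolute width $r_{k+1}-r_k = r_k/n$, which exceeds $1=\lambda_1(\Lambda)$ as soon as $r_k\ge n$. In that regime two distinct lattice points in the same shell can satisfy $|\vec x-\vec y|_2\ge 1$ with angle $\theta=0$ (take two collinear points on a ray through the origin), and more generally the law-of-cosines identity $|\vec x-\vec y|_2^2 = (|\vec x|_2-|\vec y|_2)^2 + 4|\vec x|_2|\vec y|_2\sin^2(\theta/2)$ yields no lower bound on $\sin(\theta/2)$ once the radial gap within the shell can already reach $1$. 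So the claimed $\sin(\theta/2)\ge \frac{1-o(1)}{2r}$ fails for $r\ge n$, and there is no spherical code to which to apply~\cite{KL78} on precisely those shells. (Your ``$K = O(n\log t) = 2^{o(n)}$'' also silently assumes $t=2^{o(n)}$, whereas the paper invokes the lemma in Theorem~\ref{theo:bSDEuc} for $t_{j+1}$ of every size.) The repair is to use shells of \emph{absolute} width $o(\lambda_1)$, say $1/n$; there are then $\Theta(nt)$ of them, so the ``dominated by the outermost shell'' claim can no longer rest on a subexponential number of shells, but it still holds because the per-shell bound $(2r_k)^{n-1}(\CKL/2)^{n(1+o(1))}$ grows geometrically in $k$, so the sum telescopes to $O\bigl((2t)^n(\CKL/2)^{n(1+o(1))}\bigr)$. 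Finally, as you yourself flag, the uniform-in-$\theta$ form $A(n,\theta)\le(\sin(\theta/2))^{-(n-1)}(\CKL/2)^{n(1+o(1))}$ is a real load-bearing ingredient: with absolute-width shells you need it for $\theta$ exponentially small, where the $o(1)$ must remain uniform, so this step deserves a precise statement and citation rather than a remark.
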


	\begin{remark}
		This lemma is a consequence of the Kabatiansky and Levenshtein' bound \cite{KL78} on the size of spherical codes, historically known as the ``second linear programming bound''. It is why we may refer to the aforementioned bound of Lemma \ref{lemma:KLt} as the second linear programming bound.  
	\end{remark}

We begin by considering the effectiveness of smoothing with noise uniformly sampled from the ball.
The following theorem is proved using similar techniques to those we used for Theorem~\ref{theo:smoothing-q-ary}, although instead of using Proposition~\ref{propo:unif-of-q-ary-lattice} to bound the $\Nb{\leq t}{\dual \Lambda}$'s, we use Lemma~\ref{lemma:KLt}.
	
	\begin{theorem}\label{theo:bSDEuc} 	Let $\Lambda$ be an $n$-dimensional lattice and $u \eqdef u_{\mathbb R^n/\Lambda}$ be the uniform distribution over its cosets. Then, it holds that
		$$
	\Delta\left(u,\gunif{w}^{\Lambda} \right) \leq  
\sqrt{\frac{\CKL^{n(1+o(1))}}{\vol{\lambda_{1}(\dual{\Lambda})} \; \vol{w}} }.  
		$$
		In particular, setting
		\[w_0 \eqdef n \; \frac {\CKL^{1+o(1/n)}} {2 \pi \; e \; \lambda_1(\Lambda^*)}\]
		for all $w>w_0$, it holds that 
	\[	
		\Delta\left(u,\gunif{w}^{\Lambda} \right) \leq O(1) (w_0/w)^{n/2}.
	\]
	\end{theorem}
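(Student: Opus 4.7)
The plan is to adapt the proof template of Theorem~\ref{theo:smoothing-q-ary} for random $q$-ary lattices to the worst-case setting, with the Kabatiansky--Levenshtein bound (Lemma~\ref{lemma:KLt}) replacing Proposition~\ref{propo:unif-of-q-ary-lattice} as the tool for controlling how many dual lattice points lie within a Euclidean ball. The two ingredients Lemma~\ref{lemma:KLt} and Lemma~\ref{lemma:varphi-bound} are exactly designed to combine via an ``annulus summation'' trick of the same flavor as in Section~\ref{sec:SBLat}.

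First, I would apply Proposition~\ref{propo:FBSDLat} and note, using Notation~\ref{not:g_w}, that $|\widehat{\gunif{w}}(\vec x)|^2 = g_w(|\vec x|_2)/\vol{w}$, giving
$$
4\Delta(u,\gunif{w}^{\Lambda})^2 \leq \frac{1}{\vol{w}} \sum_{\vec x \in \dual{\Lambda}\setminus\{\vec 0\}} g_w(|\vec x|_2).
$$
Next, I would partition $\dual{\Lambda}\setminus\{\vec 0\}$ into the annuli $t_j \leq |\vec x|_2 < t_{j+1}$, where $t_j \eqdef \lambda_1(\dual{\Lambda})(1+1/n)^j$ for $j\geq 0$; crucially these annuli cover all nonzero dual lattice points since $t_0 = \lambda_1(\dual{\Lambda})$ is the length of the shortest nonzero dual vector. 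Writing $N_j$ for the number of dual lattice points in the $j$-th annulus and $\varphi_j \eqdef \max_{t_j \leq t \leq t_{j+1}} g_w(t)$, we get $\sum_{\vec x \in \dual{\Lambda}\setminus\{\vec 0\}} g_w(|\vec x|_2) \leq \sum_{j\geq 0} N_j \varphi_j$.

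Then, I would plug in the two lemmas: Lemma~\ref{lemma:KLt} yields $N_j \leq \Nb{\leq t_{j+1}}{\dual{\Lambda}} \leq \frac{\vol{t_{j+1}}}{\vol{\lambda_1(\dual{\Lambda})}} \CKL^{n(1+o(1))}$, while Lemma~\ref{lemma:varphi-bound} gives $\varphi_j \leq \frac{C_1}{\vol{t_{j+1}}w^{2/3}} \cdot t_j^{-2/3}$. Multiplying, the $\vol{t_{j+1}}$ factors cancel and one is left with
$$
\sum_j N_j \varphi_j \leq \frac{C_1 \CKL^{n(1+o(1))}}{\vol{\lambda_1(\dual{\Lambda))}} w^{2/3}} \sum_{j \geq 0} \frac{1}{t_j^{2/3}},
$$
where the geometric sum equals $\lambda_1(\dual{\Lambda})^{-2/3}/(1 - (1+1/n)^{-2/3}) = O(n)\,\lambda_1(\dual{\Lambda})^{-2/3}$. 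All polynomial-in-$n$ factors (the $O(n)$, as well as the discrepancy between $\vol{w}^{-1}$ and $w^{-2/3}\lambda_1(\dual{\Lambda})^{-2/3}$ in the regimes of interest) are absorbed into $\CKL^{n(1+o(1))} = \CKL^n \cdot 2^{o(n)}$, yielding the main inequality.

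For the ``in particular'' part, I would apply Stirling's estimate $\vol{w} = (\pi w^2/(n/2e))^{n/2}(1+o(1))^n / \sqrt{\pi n}$, mirroring the end of the proof of Proposition~\ref{propo:Mink-Hl-Si-unif}, to read off the threshold $w_0 = n\,\CKL^{1+o(1/n)}/(2\pi e\, \lambda_1(\dual{\Lambda}))$. The main obstacle I anticipate is the careful bookkeeping needed to absorb sub-exponential factors (the $O(n)$ from the geometric sum, the $w^{2/3}\lambda_1^{2/3}$ mismatch with $\vol{w}$, and the $(1+o(1))$ from Kabatiansky--Levenshtein) into the single $\CKL^{n(1+o(1))}$ term without inflating the constant $\CKL$ itself; this requires noting that $\CKL^{o(n)} = 2^{o(n)}$ dominates any polynomial correction in $n$.
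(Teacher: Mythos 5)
Your proposal follows exactly the paper's proof: apply Proposition~\ref{propo:FBSDLat}, partition $\dual\Lambda \setminus \{\vec 0\}$ into annuli starting at radius $t_0 = \lambda_1(\dual\Lambda)$ with multiplicative spacing $(1+1/n)$, bound each $N_j$ via Lemma~\ref{lemma:KLt} and each $\varphi_j$ via Lemma~\ref{lemma:varphi-bound} so that the $\vol{t_{j+1}}$ factors cancel, sum the geometric series, absorb all polynomial-in-$n$ factors into $\CKL^{o(n)}$, and finish with Stirling's estimate for the ``in particular'' part. This is the same approach and the same bookkeeping the paper carries out, so the argument is correct as written.
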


	\begin{proof}
		Define 
		$$
			t_0 \eqdef \lambda_1(\dual \Lambda), \quad t_{j+1} \eqdef \left(1+\tfrac{1}{n}\right)t_j \quad \mbox{and} \quad \varphi_j\eqdef \max_{t_j \leq t \leq t_{j+1}}\{g_w(t)\} ~~\text{ for } j\geq 0,
		$$
		where we recall the definition of $g_w(t) = \frac{1}{\vol w}\widehat{{1}_{\Bc_w}}(\vec x)^2$ with $|\vec x|_2=t$ (see Notation~\ref{not:g_w}). We also define
		\[
			N_j \eqdef \sharp\{\dual{\vec x} \in \dual \Lambda: t_j \leq |\dual{\vec x}|_2 \leq t_{j+1}\} \ .
		\]
		With this notation and Proposition~\ref{propo:FBSDLat} we have 
		\begin{align}
			2\Delta\left(u,\gunif{w}^{\Lambda} \right) &\leq \sqrt{\sum_{\vec x \in \dual{\Lambda}\setminus \{\vec 0\}}|\widehat{\gunif{w}}(\vec x)|^2} \nonumber \\
			&\leq \sqrt{\frac{1}{\vol w}\sum_{\vec x \in \dual{\Lambda}\setminus \{\vec 0\}}g_w(\vec x)} \nonumber \\
			&\leq \sqrt{\frac{1}{\vol w}\sum_{j=0}^{\infty}N_j\varphi_j} \nonumber \\
			&\leq \sqrt{\frac{1}{\vol w}\sum_{j=0}^\infty \Nb{\leq t_{j+1}}{\dual \Lambda} \varphi_j} \label{eq:worst-case-smooth-start} \ .
		\end{align}  
		By Lemma~\ref{lemma:varphi-bound}, for some constant $C_1>0$, we obtain
		\[
			\varphi_j \leq \frac{C_1}{V_n(t_{j+1})w^{2/3}}\frac{1}{t_j^{2/3}} \ .
		\]
		Combining this with the upper-bound on $\Nb{\leq t_{j+1}}{\dual \Lambda}$ provided by Lemma~\ref{lemma:KLt} (note that $t_{j+1} \geq \lambda_1(\dual \Lambda)$ for all $j \geq 0$), we find 
		\begin{align*}
			\sum_{j=0}^\infty \Nb{\leq t_{j+1}}{\dual \Lambda} \varphi_j &\leq \sum_{j=0}^\infty \frac{V_n(t_{j+1})}{V_n(\lambda_1(\dual \Lambda))}\CKL^{n(1+o(1))}\; \frac{C_1}{V_n(t_{j+1})w^{2/3}}\; \frac{1}{t_j^{2/3}} \\
			&= \frac{\CKL^{n(1+o(1))}}{V_n(\lambda_1(\dual \Lambda))w^{2/3}}\;\sum_{j=0}^{\infty}\frac{1}{t_{j}^{2/3}} \\
			&= \frac{\CKL^{n(1+o(1))}}{V_n(\lambda_1(\dual \Lambda))w^{2/3}}\; \sum_{j=0}^{\infty}\frac{1}{\lambda_1(\dual \Lambda)^{2/3}\left(1+\frac{1}{n}\right)^{2j/3}} \\
			&\leq \frac{\CKL^{n(1+o(1))}}{V_n(\lambda_1(\dual \Lambda))w^{2/3}}\; \left(\frac{n}{w \lambda_{1}(\dual{\Lambda})}\right)^{2/3}.
 		\end{align*}
		In the above, all necessary constants were absorbed into the $\CKL^{o(n)}$ term. Combining this with \eqref{eq:worst-case-smooth-start}, we obtain the first part of the theorem. The ``in particular'' part again follows using Stirling's approximation. 
	\end{proof}

	Next, we can consider the effectiveness of smoothing with the Gaussian distribution. As usual, we could follow the steps of the proof of Theorem~\ref{theo:bSDEuc} and obtain the same result, but with an additional multiplicative factor of $\sqrt{\frac{e}{2}}$. That is, we obtain

	\begin{theorem}\label{theo:bSDEgauss}
	Let $\Lambda$ be an $n$-dimensional lattice and $u \eqdef u_{\mathbb R^n/\Lambda}$ be the uniform distribution over its cosets. Then, it holds.
		$$
		\Delta\left(u, D_{s}^{\Lambda} \right) \leq  \sqrt{\frac{\CKL^{n(1+o(1))}}{\vol{\lambda_{1}(\dual{\Lambda})} \; \vol{s\sqrt{n/(2\pi)}} } \; \left(\frac{e}{2}\right)^{n/2}}.
		$$

	In particular, setting \[s_0 \eqdef \sqrt n \; \frac {\CKL^{1+o(1/n)}} {2\sqrt {\pi e} \; \lambda_1(\Lambda^*)}, \]	
 	it holds for any $s> s_0$ that $\Delta\left(u, D_{s}^{\Lambda} \right) \leq O(1) \; (s_0 / s)^{n/2}$.
	\end{theorem}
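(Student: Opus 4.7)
The plan is to follow the template of the proof of Theorem~\ref{theo:bSDEuc}, replacing the uniform ball density $\gunif{w}$ by the Gaussian density $D_s$. The starting point is Proposition~\ref{propo:FBSDLat}, which gives
$$2\Delta(u, D_s^\Lambda) \leq \sqrt{\sum_{\vec x \in \dual{\Lambda}\setminus\{\vec 0\}} |\widehat{D_s}(\vec x)|^2}.$$
A standard computation yields $\widehat{D_s}(\vec y) = e^{-\pi s^2 |\vec y|_2^2}$, so $|\widehat{D_s}(\vec y)|^2 = e^{-2\pi s^2 |\vec y|_2^2}$ is radial and strictly decreasing in $|\vec y|_2$. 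A pleasant consequence is that the ``max over an annulus'' step becomes essentially free here, bypassing the Bessel-function estimates (Proposition~\ref{propo:bound-on-bessel}) that were needed in the uniform-ball case.

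Next I would re-use the annular decomposition from the proof of Theorem~\ref{theo:bSDEuc}: set $t_0 \eqdef \lambda_1(\dual{\Lambda})$, $t_{j+1} \eqdef (1+1/n)\,t_j$, and let $N_j$ be the number of dual vectors of norm in $[t_j, t_{j+1})$. Monotonicity of the Gaussian bounds the $j$-th annular contribution by $N_j\,e^{-2\pi s^2 t_j^2}$, and Lemma~\ref{lemma:KLt} applied at radius $t_{j+1} \geq \lambda_1(\dual{\Lambda})$ gives $N_j \leq \Nlt{t_{j+1}}{\dual{\Lambda}} \leq \CKL^{n(1+o(1))}\,(1+1/n)^{(j+1)n}$. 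The task then reduces to bounding the scalar geometric-Gaussian sum
$$S \eqdef \sum_{j \geq 0} (1+1/n)^{(j+1)n}\,e^{-2\pi s^2 t_0^2 (1+1/n)^{2j}}.$$

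The hard part will be the comparison of $S$ with its continuous analogue via the change of variables $u = (1+1/n)^{2j}$. Since the step-size in $\log u$ is $2\log(1+1/n) \sim 2/n$ and $(1+1/n)^n \leq e$, one gets $S \leq 2e\,\Gamma(n/2+1)/(2\pi s^2 t_0^2)^{n/2}$ up to lower-order factors. Combining this with Stirling applied to $\Gamma(n/2+1)$ and with the factorisation $\vol{\lambda_1(\dual{\Lambda})}\,\vol{s\sqrt{n/(2\pi)}} = \pi^{n/2}\,s^n\,t_0^n\,(n/2)^{n/2}/\Gamma(n/2+1)^2$ will show that $\CKL^{n(1+o(1))}\,S$ matches the advertised upper bound, with all polynomial-in-$n$ slack (from Stirling, from the sum-to-integral approximation, and from the constants in Lemma~\ref{lemma:KLt}) harmlessly absorbed into $\CKL^{n\cdot o(1)}$. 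The extra factor $(e/2)^{n/2}$ reflects the mismatch between the Gaussian's intrinsic smoothing scale $\sim 1/(s\sqrt{2\pi})$ and its expected-norm parametrisation $s\sqrt{n/(2\pi)}$; the ``in particular'' clause then follows immediately by substituting $s_0$ into the main inequality and applying Stirling once more.
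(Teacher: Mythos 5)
Your proposal is correct and follows the approach the paper itself intends: the text preceding Theorem~\ref{theo:bSDEgauss} says only that one should ``follow the steps of the proof of Theorem~\ref{theo:bSDEuc},'' and that is precisely what you do — Proposition~\ref{propo:FBSDLat}, the geometric annular decomposition $t_{j+1}=(1+1/n)t_j$, and Lemma~\ref{lemma:KLt} for the counts $N_{\leq t_{j+1}}(\dual{\Lambda})$.

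Your observation that the monotonicity of $|\widehat{D_s}|^2 = e^{-2\pi s^2|\cdot|^2}$ makes the ``max over annulus'' step trivial and dispenses with the Bessel-function bound (Proposition~\ref{propo:bound-on-bessel} and Lemma~\ref{lemma:varphi-bound}) is correct and is a genuine, if minor, simplification relative to the uniform-ball case. Two small remarks. First, your claim that the step-to-integral comparison gives $S \lesssim 2e\,\Gamma(n/2+1)/(2\pi s^2 t_0^2)^{n/2}$ is stated loosely (the integrand $u^{n/2-1}e^{-cu}$ is not monotone, so a naive Riemann-sum comparison needs the usual ``peak-plus-tails'' argument); but since the advertised bound tolerates a $\CKL^{o(n)}$ multiplicative slack, any polynomial-in-$n$ loss in that comparison is harmless, and the paper is equally informal at this point. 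Second, when you check that substituting $s = s_0$ makes the bound $O(1)$, you rely — as the paper does — on the little-$o$ in $s_0 = \sqrt{n}\,\CKL^{1+o(1/n)}/(2\sqrt{\pi e}\,\lambda_1(\dual{\Lambda}))$ exactly absorbing the $\CKL^{n\cdot o(1)}$ slack from Lemma~\ref{lemma:KLt} together with the Stirling corrections; this is a matter of consistent bookkeeping rather than a mathematical gap, and it is no less precise than the paper's own statement.
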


	However, as usual it is more effective to combine the bound for the uniform ball distribution and decompose the Gaussian as a convex combination of uniform ball distributions, {\em i.e.} to apply Proposition~\ref{propo:gauss-to-unif}. In this way, we can obtain the following theorem, improving the smoothing bound $s_0$ by another $\sqrt {e/2}$ factor. In the following theorem, we are setting the $f(n)$ function of Proposition~\ref{propo:gauss-to-unif} with the $O(1)$ term in the bound of Theorem~\ref{theo:bSDEuc}.

\begin{theorem} \label{theo:bSDEgauss_better}
	Let $\Lambda$ be an $n$-dimensional lattice, $u \eqdef  u_{\mathbb R^n/\Lambda}$ the uniform distribution over its cosets, and
	\[
		s_0 \eqdef \sqrt n \; \frac{\CKL^{1+o(1/n)}}{\sqrt{2\pi} \; e \; \lambda_1(\Lambda^*)} \ . 
	\]
	Then, for any $s>s_0$ and letting $\eta\eqdef= 1 - \frac{s_0}{s} \in (0, 1)$, it holds that 
	\[\Delta\left(u, D_{s}^{\Lambda} \right) \leq \exp\left(- \frac{\eta^{2}}{8} \; n\right) + O(1) \; \left(\frac {s_0} {s}\right)^{n/4}.\] 
\end{theorem}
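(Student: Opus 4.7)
The plan is to derive Theorem~\ref{theo:bSDEgauss_better} as a direct application of Proposition~\ref{propo:gauss-to-unif} (the Gaussian/uniform-ball translation trick) combined with the already-proved uniform ball smoothing bound for worst-case lattices, Theorem~\ref{theo:bSDEuc}. Concretely, Theorem~\ref{theo:bSDEuc} gives, for the threshold
\[
 w_0 \eqdef n \; \frac{\CKL^{1+o(1/n)}}{2\pi e \lambda_1(\Lambda^*)}
\]
and every $w > w_0$, the inequality $\Delta(u,\gunif{w}^\Lambda) \leq O(1) (w_0/w)^{n/2}$. This matches exactly the hypothesis of Proposition~\ref{propo:gauss-to-unif} with the auxiliary function $f(n) = O(1)$, so invoking it would yield
\[
 \Delta(u,D_s^\Lambda) \leq \exp\!\left(-\frac{\eta^2}{8} n\right) + O(1) \left(\frac{s_0}{s}\right)^{n/4}
\]
for $s > s_0$, with $s_0 \eqdef w_0 \sqrt{2\pi/n}$.

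The next step is simply to check that this $s_0$ coincides with the one in the theorem statement. A direct calculation gives
\[
 w_0 \sqrt{\frac{2\pi}{n}} = n \cdot \frac{\CKL^{1+o(1/n)}}{2\pi e \lambda_1(\Lambda^*)} \cdot \sqrt{\frac{2\pi}{n}} = \sqrt{n} \; \frac{\CKL^{1+o(1/n)}}{\sqrt{2\pi}\; e\; \lambda_1(\Lambda^*)} \ ,
\]
which is exactly the $s_0$ appearing in the theorem. Consequently, the conclusion of Proposition~\ref{propo:gauss-to-unif}, transcribed with the constants above, is precisely the desired bound.

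The one subtlety worth flagging explicitly is that Proposition~\ref{propo:gauss-to-unif} is stated with an expectation over a random lattice $\Lambda$, whereas here $\Lambda$ is fixed. However, its proof only uses a decomposition of the Gaussian density $D_s$ as a convex combination of uniform ball densities together with a standard Gaussian tail bound on $|\vec e|_2$, neither of which depends on the lattice. The resulting inequality is therefore pointwise in $\Lambda$: applied to a fixed lattice, the expectation signs drop out and one obtains the claimed bound on $\Delta(u,D_s^\Lambda)$ directly. I expect no real technical obstacle in the proof; the only care needed is to keep the $\CKL^{1+o(1/n)}$ factor intact through the substitution and to verify the $s_0/w_0$ conversion, both of which are routine.
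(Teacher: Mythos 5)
Your proposal is correct and follows exactly the route the paper itself outlines in the paragraph preceding Theorem~\ref{theo:bSDEgauss_better}: apply Proposition~\ref{propo:gauss-to-unif} with $f(n)=O(1)$, using Theorem~\ref{theo:bSDEuc} as the uniform-ball input, and check that $w_0\sqrt{2\pi/n}$ equals the stated $s_0$. You also correctly spot and resolve the one subtlety the paper glosses over, namely that Proposition~\ref{propo:gauss-to-unif} is phrased with an expectation over a random lattice while Theorem~\ref{theo:bSDEgauss_better} is for a fixed lattice; since the proof of Proposition~\ref{propo:gauss-to-unif} (convex decomposition of $D_s$ plus the Gaussian tail bound) is oblivious to $\Lambda$, the expectations can be dropped throughout, which is exactly why the paper's plug-in works.
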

 	
	\section{Acknowledgement}
	We would like to thank Iosif Pinelis for help with the proof of Proposition~\ref{propo:gauss-to-unif}. 
	
\bibliographystyle{alpha}

	\appendix
	
	\section{Proof of Proposition \ref{propo:BerVSUnif}}\label{app:BerVSUnif}

Our aim in this section is to prove the following proposition 

\BerVSUnif*

	Roughly speaking, this proposition is a consequence of the fact that a Bernoulli distribution concentrates Hamming weights over a small number of slices close to the expected weight (here $np$) and, on each slice the Bernoulli distribution is uniform. Let us introduce the truncated Bernoulli distribution over words of Hamming weight $[(1-\varepsilon)pn,(1+\varepsilon)pn]$ for some $\varepsilon > 0$, namely
 \begin{equation}\label{eq:fTrunc}
 \fberTrunc{p}(\vec{x}) \eqdef 	\left\{
\begin{array}{ll}
	\frac{1}{Z} \; \fber{p}(\vec{x}) & \mbox{if } |\vec{x}| \in \left[ (1-\varepsilon)pn,(1+\varepsilon)pn \right]  \\
	0 & \mbox{otherwise.}
\end{array}
\right. 
\end{equation}
where 
\begin{equation}
\label{eq:Z}
Z \eqdef \mathop{\sum}\limits_{|\vec{y}| = (1-\varepsilon)np}^{(1+\varepsilon)np} \fber{p}(\vec{y})
\end{equation} is the probability normalizing constant.

	Proposition \ref{propo:BerVSUnif} is a consequence of the following lemmas.

	\begin{lemma}\label{lemma:gepsFber}
		Let $\varepsilon>0$. We have
		$$
		\Delta\left(\fber{p},	\fberTrunc{p} \right) = 2^{-\Omega(n)} . 
		$$
	\end{lemma}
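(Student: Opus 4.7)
The plan is to observe that $\fberTrunc{p}$ is precisely the normalized restriction of $\fber{p}$ to the ``weight window''
\[
A \eqdef \{\vec{x} \in \F_2^n : |\vec{x}| \in [(1-\varepsilon)pn, (1+\varepsilon)pn]\},
\]
and then invoke Chernoff concentration to control the mass outside this window.

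First I would use the standard identity that for any probability distribution $f$ and any event $A$ with $f(A) = Z > 0$, the distribution $g$ obtained by restricting $f$ to $A$ and renormalizing satisfies $\Delta(f,g) = 1 - Z$. Indeed,
\begin{align*}
2\Delta(f,g) &= \sum_{\vec{x} \notin A} f(\vec{x}) + \sum_{\vec{x} \in A}\left|f(\vec{x}) - \frac{f(\vec{x})}{Z}\right| \\
&= (1-Z) + \frac{1-Z}{Z}\sum_{\vec{x} \in A} f(\vec{x}) \;=\; 2(1-Z).
\end{align*}
Applied to $f = \fber{p}$, $g = \fberTrunc{p}$ and $Z$ as in Equation \eqref{eq:Z}, this yields
\[
\Delta(\fber{p}, \fberTrunc{p}) = 1 - Z = \Prob_{\vec{X} \sim \fber{p}}\left( |\vec{X}| \notin [(1-\varepsilon)pn,(1+\varepsilon)pn] \right).
\]

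Next, if $\vec{X}$ is drawn according to $\fber{p}$, then $|\vec{X}|$ is a sum of $n$ independent Bernoulli($p$) random variables, so its expectation is $pn$. The multiplicative Chernoff bound then gives
\[
\Prob\bigl(||\vec{X}| - pn| > \varepsilon pn\bigr) \leq 2\exp\!\left(-\frac{\varepsilon^{2} p n}{3}\right) = 2^{-\Omega(n)},
\]
where the implicit constant depends only on $\varepsilon$ and $p$ (both fixed, with $p > 0$). Combining this with the displayed formula above concludes the proof.

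No real obstacle is expected here; the only mild subtlety is to make sure the proposition is stated and used for a fixed $\varepsilon > 0$ (and fixed $p > 0$) so that the Chernoff exponent is genuinely $\Omega(n)$, which is consistent with how the lemma is invoked in the subsequent application to Proposition~\ref{propo:BerVSUnif}.
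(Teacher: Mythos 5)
Your proof is correct and follows essentially the same strategy as the paper: split the statistical distance according to the Hamming-weight window and control the tail mass with a Chernoff bound. Your exact identity $\Delta(f,g) = 1 - Z$ between a distribution and its normalized restriction to an event of probability $Z$ is a slightly tidier packaging of the paper's computation, but the underlying argument is the same.
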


	\begin{proof}By Chernoff's bound
		\begin{equation}\label{eq:chernoff} 
			1-Z =\sum_{\substack{\vec{y} : \\ |\vec{y}| \notin \left[ (1-\varepsilon)np,(1+\varepsilon)np \right]}} \fber{p}(\vec{y}) \leq 2e^{-\varepsilon^{2}n} = 2^{-\Omega(n)} . 
		\end{equation}
	Therefore for any $|\vec{x}| \in \left[ (1-\varepsilon)np,(1+\varepsilon)np \right]$,
	\begin{align}\label{eq:geps}
		\fberTrunc{p}(\vec{x}) &= \frac{1}{1-2^{-\Omega(n)}} \; \fber{p}(\vec{x}) \nonumber \\
		&= \left(1+2^{-\Omega(n)} \right) \; \fber{p}(\vec{x}) . 
	\end{align}
	We have now the following computation:
	\begin{align*}
		2\Delta\left(\fber{p},\fberTrunc{p}  \right) &= \sum_{\vec{x}\in\F_{2}^{n}} \left| \fber{p}(\vec{x}) - \fberTrunc{p} (\vec{x}) \right| \\ 
		&=  \sum_{|\vec{x}| \in \left[ (1-\varepsilon)np,(1+\varepsilon)np \right]} \left| \fber{p}(\vec{x}) - \fberTrunc{p} (\vec{x}) \right|   + \sum_{|\vec{x}| \notin \left[ (1-\varepsilon)np,(1+\varepsilon)np \right]} \left| \fber{p}(\vec{x}) \right|   \\
		&= 2^{-\Omega(n)}\left(  \sum_{|\vec{x}| \in \left[ (1-\varepsilon)np,(1+\varepsilon)np \right]} \left| \fber{p}(\vec{x}) \right| \right)  + 2^{-\Omega(n)} \quad \mbox{(Equations \eqref{eq:chernoff} and \eqref{eq:geps})}  \\
		&= 2^{-\Omega(n)}
	\end{align*}
	where in the last line we used that $\fber{p}$ is a probability distribution.	
	\end{proof}

	\begin{lemma}\label{lemma:gepsCodeFber}We have
		$$
		\Delta\left(u,\fber{p}^{\CC}\right) \leq \Delta\left(u,\fberTrunc{p} ^{\CC}\right) + 2^{-\Omega(n)}.
		$$
	\end{lemma}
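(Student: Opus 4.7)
The plan is to use a two-step argument: first apply the triangle inequality for statistical distance, and then observe that the periodization map $f \mapsto f^{\CC}$ is a contraction for statistical distance (data processing), so that closeness between $\fber{p}$ and $\fberTrunc{p}$ lifts to closeness between their $\CC$-periodizations. The bound then follows from Lemma~\ref{lemma:gepsFber}.

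Concretely, I would first write
\[
\Delta\left(u,\fber{p}^{\CC}\right) \leq \Delta\left(u,\fberTrunc{p}^{\CC}\right) + \Delta\left(\fberTrunc{p}^{\CC},\fber{p}^{\CC}\right)
\]
by the triangle inequality applied inside the quotient space $\F_{2}^{n}/\CC$. Then I would establish the key auxiliary fact that for any two distributions $f,g$ on $\F_{2}^{n}$,
\[
\Delta\left(f^{\CC},g^{\CC}\right) \leq \Delta(f,g),
\]
which follows from a short direct computation: expanding $f^{\CC}(\xv)-g^{\CC}(\xv) = \sum_{\cv \in \CC}(f(\xv-\cv)-g(\xv-\cv))$, applying the triangle inequality inside the absolute value, and then reindexing the double sum over $\xv \in \F_{2}^{n}/\CC$ and $\cv \in \CC$ as a single sum over $\yv \in \F_{2}^{n}$ (since $(\xv,\cv)\mapsto \xv-\cv$ is a bijection onto $\F_{2}^{n}$ when $\xv$ ranges over a set of coset representatives). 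This is just the standard data processing inequality for the quotient map $\F_{2}^{n} \to \F_{2}^{n}/\CC$.

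Applying this inequality with $f = \fberTrunc{p}$ and $g = \fber{p}$, and invoking Lemma~\ref{lemma:gepsFber}, we obtain $\Delta\left(\fberTrunc{p}^{\CC},\fber{p}^{\CC}\right) \leq \Delta\left(\fberTrunc{p},\fber{p}\right) = 2^{-\Omega(n)}$, which plugs into the triangle inequality to give the claimed bound. There is no real obstacle here: the only point requiring care is making sure the reindexing in the data processing step is done cleanly, which is purely bookkeeping.
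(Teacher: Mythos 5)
Your proposal is correct and matches the paper's argument essentially line for line: triangle inequality on statistical distance, followed by the contraction (data processing) property of the periodization map $f \mapsto f^{\CC}$, proved by pushing the absolute value inside the sum over $\CC$ and reindexing, then invoking Lemma~\ref{lemma:gepsFber}. The only cosmetic difference is that you state the contraction as a standalone auxiliary fact, while the paper carries out the same computation inline.
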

	\begin{proof}
		By the triangle inequality, 
		\[
			\Delta\left(u,\fber{p}^{\CC}\right) \leq \Delta\left(u,\fberTrunc{p}^{\CC}\right) + \Delta\left(\fber{p}^{\CC},\fberTrunc{p}^{\CC}\right) . 
		\]
		Focusing on the second term now
		\begin{align*}
			\Delta\left(\fber{p}^{\CC},\fberTrunc{p}^{\CC}\right) &= \frac{1}{2} \sum_{\vec{y} \in \F_2^n/\CC}\left|\fber{p}^{\CC}(\vec{y}) - \fberTrunc{p}^{\CC}(\vec{y})\right| \\
&= \frac{1}{2} \sum_{\vec{y} \in \F_2^n/\CC}\left|\sum_{\vec c \in \CC}\fber{p}(\vec{c}+\vec y) - \sum_{\vec c \in \CC}\fberTrunc{p}(\vec{c}+\vec y)\right| \\
&\leq \frac{1}{2} \sum_{\vec{y} \in \F_2^n/\CC}\sum_{\vec c \in \CC}\left|\fber{p}(\vec c + \vec y) - \fberTrunc{p}(\vec c + \vec y) \right| \\
&= \Delta\left(\fber{p},\fberTrunc{p}\right).
		\end{align*}
	which concludes the proof by Lemma \ref{lemma:gepsFber}. 
	\end{proof}

	The following lemma is a basic property of the statistical distance. 

	\begin{lemma}\label{lemma:statIneqCvx} For any distribution  $f$ and $(g_i)_{1 \leq i \leq m}$ we have
		$$
		\Delta\left(f, \sum_{i=1}^{m}\lambda_{i} g_{i} \right) \leq \sum_{i=1}^{m} \lambda_{i} \; \Delta(f,g_{i})
		$$
		where the $\lambda_{i}$'s are positive and sum to one. 
	\end{lemma}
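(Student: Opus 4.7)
The plan is to reduce this to the ordinary triangle inequality for absolute values. Since $\sum_{i=1}^m \lambda_i = 1$, I would first rewrite $f$ as the trivial convex combination $f = \sum_{i=1}^m \lambda_i f$, so that
\[
f - \sum_{i=1}^m \lambda_i g_i = \sum_{i=1}^m \lambda_i (f - g_i).
\]
This is the key manipulation: it lets the outer absolute value inside the definition of $\Delta$ be distributed across the $\lambda_i (f - g_i)$ terms.

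Then I would apply the definition of statistical distance (using $\frac{1}{2}\sum_x|\cdot|$ in the discrete case, or $\frac{1}{2}\int|\cdot|$ in the continuous case), pull the sum over $i$ inside the absolute value, bound $\bigl|\sum_i \lambda_i (f(x) - g_i(x))\bigr| \leq \sum_i \lambda_i |f(x) - g_i(x)|$ by the triangle inequality (using that the $\lambda_i$ are nonnegative), and finally swap the order of summation between $i$ and $x$ to recognize each inner sum as $2\Delta(f, g_i)$. The factor $\tfrac{1}{2}$ cancels out uniformly, yielding
\[
\Delta\!\left(f, \sum_{i=1}^m \lambda_i g_i\right) \leq \sum_{i=1}^m \lambda_i \Delta(f, g_i).
\]

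There is no real obstacle here: the only things to be mindful of are the nonnegativity of the $\lambda_i$ (which is given), measurability/integrability when stating this for continuous densities (justifying the swap of sum and integral by Tonelli), and the fact that the identity $f = \sum_i \lambda_i f$ crucially uses $\sum_i \lambda_i = 1$. The statement and its proof are entirely standard and do not depend on any structure specific to codes or lattices, which is why the authors simply call it a basic property of the statistical distance.
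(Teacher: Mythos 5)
Your proof is correct and is the standard argument; the paper does not include a proof of this lemma (it simply states it as ``a basic property of the statistical distance''), so there is nothing to compare against, and your argument via the convex-combination decomposition $f = \sum_i \lambda_i f$, the pointwise triangle inequality, and Tonelli/Fubini to swap the order of summation is exactly what one would write.
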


	We are now ready to prove Proposition \ref{propo:BerVSUnif}.

	\begin{proof}[Proof of Proposition \ref{propo:BerVSUnif}] First, by Lemma \ref{lemma:gepsCodeFber} we have
		\begin{equation}\label{eq:uFber}
			\Delta\left(u, \fber{p}^{\CC}\right) \leq \Delta\left(u, \fberTrunc{p}^{\CC}\right) + 2^{-\Omega(n)}.
		\end{equation}

	To upper-bound $\Delta\left(u, \fberTrunc{p}^{\CC}\right)$ we are going to use Lemma \ref{lemma:statIneqCvx}. 
	Notice that
	$$
	\fber{p} = \sum_{r=0}^{n} \binom{n}{r}p^{r}(1-p)^{n-r} \unifs{r}.
	$$
	Therefore it is readily seen that 
	\begin{equation*} 
	\fberTrunc{p} = \sum_{r = (1-\varepsilon)np}^{(1+\varepsilon)np } \lambda_{r} \; \unifs{r} \quad \mbox{where} \quad \lambda_{r} \eqdef \frac{1}{ Z} \; \binom{n}{r}p^{r}(1-p)^{n-r}. 
	\end{equation*} 
	By using Lemma~\ref{lemma:statIneqCvx} we obtain:
	\begin{align}
		\Delta\left(u, \fberTrunc{p}^{\CC}\right)  &\leq \sum_{r = (1-\varepsilon)np}^{(1+\varepsilon)np} \lambda_{r} \; \Delta\left(u,\unifs{r}^{\CC}\right) \nonumber\\
		&\leq \sum_{r = (1-\varepsilon)np}^{(1+\varepsilon)np} \Delta\left(u, \unifs{r}^{\CC}\right)\label{eq:gCodeu}
	\end{align}
	where in the last line we used that the $\lambda_{r}$'s are smaller than one. To conclude the proof we plug Equation \eqref{eq:gCodeu} in \eqref{eq:uFber}. 
	\end{proof}

 	\section{Proof of Proposition \ref{propo:ABL}}\label{app:proofPropoABL}

Our aim in this section is to prove the following proposition which is an extension of \cite[Theorem 3]{ABL01} for $\tau \in [\delta,1]$ (\cite[Theorem 3]{ABL01} only applied for $\tau \in [\delta,1/2]$.)

\PropoABL*

Our proof is mainly a rewriting of the proof of  \cite[Theorem 3]{ABL01} which relies on the following proposition.

	\begin{proposition}[{\cite[Proposition $2$ with $d' = 0$]{ABL01}}]\label{propo:BoundBarg}
	Let $\CC$ be a binary code of length $n$ such that $\dmin(\CC) = \Omega(n)$. Let $t \eqdef  \frac{n}{2} - \sqrt{\dmin(\CC)(n-\dmin(\CC))}$ and $a$ be such that
	$$
	x_{1}^{(t+1)} < a < x_{1}^{(t)} \quad \mbox{;} \quad \frac{K_{t}(a)}{K_{t+1}(a)} = -1
	$$
	where $x_{1}^{(\mu)}$ denotes the first root of the Krawtchouk polynomial of order $\mu$, namely $K_{\mu}$. 
	
	When $0 \leq w < t \leq n/2$, we have
	\begin{equation}
		\sum_{\vec{c} \in \CC \backslash \{\mathbf{0}\}} K_{w}(|{\vec{c}}|)^{2} \leq \frac{t+1}{2a} \; \frac{\binom{n}{w}}{\binom{n}{t}} \left( \binom{n}{t+1} + \binom{n}{t} \right)^{2}
	\end{equation}
\end{proposition}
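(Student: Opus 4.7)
The plan is to obtain the bound via a linear programming (LP) argument coupled with Levenshtein's polynomial method. For any polynomial $\gamma(x) = \sum_k \gamma_k K_k(x)$ and any linear code $\CC$, the MacWilliams identity gives
\[
\sum_{\vec c \in \CC} \gamma(|\vec c|) = |\CC| \sum_k \gamma_k \, A^\perp_k(\CC),
\]
with $A^\perp_k(\CC) \geq 0$. The goal is thus to exhibit a polynomial $\gamma$ such that (i) $\gamma(x) \geq K_w(x)^2$ for every integer $x \in [\dmin(\CC), n]$, so that the sum over non-zero codewords of $\gamma$ dominates the sum of $K_w^2$, and (ii) the Krawtchouk coefficients $\gamma_k$ are non-positive for $k \geq 0$, so that the LP argument yields
$\sum_{\vec c \neq \mathbf 0} K_w(|\vec c|)^2 \leq -\gamma(0)$.

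The key construction is the rescaled Levenshtein polynomial
\[
\gamma(x) \eqdef \frac{t+1}{2}\cdot\frac{\binom{n}{w}}{\binom{n}{t}}\cdot \frac{(K_t(x)+K_{t+1}(x))^2}{x-a}.
\]
The condition $K_t(a)/K_{t+1}(a)=-1$ guarantees that $x=a$ is a root of the numerator, making $\gamma$ a genuine polynomial of degree $2t+1$. The Christoffel-Darboux identity for Krawtchouk polynomials (with orthogonality weight $\binom{n}{x}/2^n$ and norms $\binom{n}{j}$) yields
\[
\frac{K_t(x)+K_{t+1}(x)}{x-a} = \frac{1}{A_t\,K_{t+1}(a)}\sum_{j=0}^t \frac{K_j(a)\,K_j(x)}{\binom{n}{j}}
\]
for an explicit constant $A_t>0$, so $\gamma$ is, up to a positive scalar, the product of $(x-a)$ with the square of a Krawtchouk kernel. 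This structural identity is what pins down the sign pattern of the Krawtchouk coefficients of $\gamma$ in (ii), by exploiting the non-negativity of the product coefficients $p_{j,j'}^k \geq 0$ in $K_j K_{j'} = \sum_k p_{j,j'}^k K_k$ together with the three-term recurrence describing multiplication by $(x-a)$.

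To verify (i), the asymptotic formula $x_1^{(t)} \approx n/2-\sqrt{t(n-t)}$ for the first Krawtchouk root, combined with the specific choice $t=n/2-\sqrt{\dmin(\CC)(n-\dmin(\CC))}$, ensures that $x_1^{(t)}$ sits near $\dmin(\CC)$; the prefactor in the definition of $\gamma$ is precisely tailored so that the pointwise domination $\gamma(x) \geq K_w(x)^2$ holds at all integer weights in $[\dmin(\CC), n]$. The evaluation at $x=0$, using $K_t(0)=\binom{n}{t}$ and $K_{t+1}(0)=\binom{n}{t+1}$, yields
\[
-\gamma(0)=\frac{t+1}{2a}\cdot\frac{\binom{n}{w}}{\binom{n}{t}}\bigl(\binom{n}{t+1}+\binom{n}{t}\bigr)^{2},
\]
which matches the claimed right-hand side.

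The hard part of the argument is verifying the sign pattern of the Krawtchouk coefficients of $\gamma$ in (ii): this is the classical heart of Levenshtein's second linear programming bound, and the extremal choice of $a$ via $K_t(a)/K_{t+1}(a)=-1$ is exactly what is needed to make the LP inequality work. Once this is established, the pointwise domination in (i) and the final evaluation of $-\gamma(0)$ are straightforward but computation-heavy.
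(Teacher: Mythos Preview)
The paper does not prove this proposition; it is quoted verbatim from \cite{ABL01}. Your overall LP template is the right one, and both the sign claim (ii) for the Levenshtein-type polynomial and the evaluation $-\gamma(0)$ are correct. The gap is in claim (i).

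The polynomial $K_t+K_{t+1}$ has degree $t+1$. One of its roots is $a\in(x_1^{(t+1)},x_1^{(t)})$, but it has $t$ further real roots, which by interlacing of the roots of $K_t$ and $K_{t+1}$ lie strictly inside the oscillatory region $(x_1^{(t)},x_t^{(t)})\subset(\dmin(\CC),\,n-\dmin(\CC))$. At each such root your $\gamma$ vanishes, whereas $K_w(x)^2$ has no reason to vanish there (the roots of $K_w$ do not coincide with those of $K_t+K_{t+1}$). Hence the pointwise domination $\gamma(x)\ge K_w(x)^2$ on $[\dmin(\CC),n]$ fails, and no positive prefactor can repair it. The sentence ``the prefactor is precisely tailored so that the pointwise domination holds'' is therefore not just unproved but incorrect.

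The argument in \cite{ABL01} avoids pointwise comparison altogether and works on the Krawtchouk-coefficient side. One expands $K_w^2=\sum_{j\le 2w} q_j K_j$ with $q_j\ge 0$ and $q_0=\binom{n}{w}$, so that $\sum_{\cv\in\CC}K_w(|\cv|)^2=|\CC|\sum_j q_j A'_j$ with $A'_j\ge 0$. The Levenshtein polynomial $f(x)=(K_t(x)+K_{t+1}(x))^2/(a-x)$ satisfies $f_j\ge 0$ and $f\le 0$ on $[\dmin(\CC),n]$, giving $|\CC|\sum_j f_j A'_j\le f(0)$. The point where the hypothesis $w<t$ is actually used is a comparison of the coefficients $q_j$ against $f_j$ for $j\le 2w$ (via the Christoffel--Darboux representation of $f$), which yields $\sum_{\cv}K_w(|\cv|)^2\le \binom{n}{w}\,f(0)/f_0$; with $f(0)=\tfrac{1}{a}(\binom{n}{t}+\binom{n}{t+1})^2$ and the explicit value of $f_0$ this is exactly the stated right-hand side. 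If you want to salvage your write-up, replace step (i) by this coefficient comparison rather than a pointwise bound.
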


The approach is to optimize on the choice of $w$ in Proposition \ref{propo:BoundBarg} to give an 
upper-bound on $N_{\ell}(\CC)$. More precisely we  observe that
\begin{equation}\label{eq:toBoundL1} 
	\Neq{\ell}{\CC} \leq \frac{1}{K_{w}(\ell)^{2}} \sum_{\vec{c}\in\CC \backslash \{\mathbf{0}\}} K_{w}(|\vec{c}|)^{2} \leq \frac{1}{K_{w}(\ell)^{2}}\; \frac{t+1}{2}\frac{\binom{n}{w}}{\binom{n}{t}} \left( \binom{n}{t+1} + \binom{n}{t} \right)^{2}
\end{equation}
and then choose $w$ to minimize $\frac{\binom{n}{w}}{K_{w}(\ell)^{2}}$.

	\begin{proof}[Proof of Proposition \ref{propo:ABL}]
	It will be helpful to bring in  the following map:
		$$
		x\in[0,1] \mapsto x^{\perp} \eqdef \frac{1}{2} - \sqrt{x(1-x)}.
		$$
		It can be verified that this application is  an involution, is symmetric $(1-x)^\perp = x^\perp$ and decreasing on $[0,\frac{1}{2}]$.

		Let $\CC$ be a binary code of length $n$ such that $\dmin(\CC) = \delta n$ where $\delta\in(0,1/2]$ and $t$ be defined as in Proposition \ref{propo:BoundBarg}. 	
	Let 
$\omega \eqdef \frac{w}{n}, \lambda \eqdef\frac{\ell}{n}$ and  $\delta^{\perp} \eqdef 1/2 - \sqrt{\delta(1-\delta)}$.
	Then by Proposition \ref{propo:BoundBarg} we have (see Equation \eqref{eq:toBoundL1})
	\begin{equation}\label{eq:KrawUPB} 
	\frac{\log_2 \Neq{\ell}{\CC} }{n}  \leq  h(\omega) + h(\delta^{\perp}) - \frac{2 \log_{2} |K_{w}(\ell)|}{n} + o(1).
\end{equation}
\noindent
{\bf Case 1: $\lambda \in [\delta,1-\delta]$.}	 \\
It is optimal to choose in this case $w$ such that $\omega= \lambda^\perp - \varepsilon$ where
$\varepsilon > 0$ and $\varepsilon = o(1)$ as $n$ tends to infinity.  Let us first notice that  $ \lambda \in [\delta,1-\delta]$ implies that $\lambda^\perp \leq \delta^\perp$ which together with $\omega < \lambda^\perp$ implies that $\omega < \delta^\perp$ which in turn is equivalent to the condition $w <t$ for being able to apply Proposition \ref{propo:BoundBarg}. Moreover $\omega < \lambda^\perp$ also implies $\lambda < \omega^\perp$ and by using Proposition \ref{prop:expansion} we obtain
$$
\frac{2\log_{2} |K_{w}(\ell)|}{n} \leq   h(\omega) + 1 - h(\lambda) +o(1).$$
Therefore
$$\frac{\log_2 \Neq{\ell}{\CC} }{n}  \leq  h(\omega) + h(\delta^{\perp})  - h(\omega) -1  + h(\lambda) +o(1)=
 h(\delta^{\perp})   + h(\lambda) -1 +o(1).
 $$
 
 {\bf Case 2: $\lambda \in (1-\delta,1]$.}\\
 In that case, let $\omega= \delta^{\perp} - \varepsilon$ with $\varepsilon > 0$ and $\varepsilon = o(1)$ as $n$ tends to infinity.
 Here we can write
 $$\frac{2\log_{2} |K_{w}(\ell)|}{n} = \frac{\log_{2} (K_{w}(\ell)^2)}{n}= \frac{\log_{2} (K_{w}(n-\ell)^2)}{n}.$$
 Since $\lambda > 1 - \delta$, we have $1-\lambda < \delta$. On the other hand, $\omega< \delta^{\perp}$ implies $\delta < \omega^\perp$. We deduce from these two inequalities that $1 - \lambda < \omega^\perp$. By using Proposition \ref{prop:expansion} again, we get
 $$
 \frac{\log_{2} (K_{w}(n-\ell)^2)}{n} = 2a(1-\lambda,\delta^\perp)+o(1)=2a(\lambda,\delta^\perp)+o(1).
 $$
 By plugging this estimate in \eqref{eq:KrawUPB} we get
 $$
 \frac{\log_2 \Neq{\ell}{\CC} }{n} \leq  2 h(\delta^{\perp}) - 2 a(\lambda,\delta^{\perp}).
 $$
This concludes the proof. 
\end{proof}

 	\section{Proof of Theorem \ref{theo:finalUBSD} }\label{app:proofThFinalCode}

Our aim in this appendix is to prove the following theorem. 

\thFinalCode*

{\bf Sketch of proof.}
We will use the following proof strategy
	\begin{itemize}
        \item[1.] By Lemma \ref{lemma:gepsCodeFber} we know that on one hand
	\begin{equation}
          \label{eq:difference}
		\Delta\left(u,\fber{p}^{\CC}\right) = \Delta\left(u,\fberTrunc{p}^{\CC}\right) + 2^{-\Omega(n)}.
              \end{equation}
This is actually a consequence of Chernoff's bound. This argument can also be used to show that the Fourier transforms are also close to each other pointwise
	\begin{equation}\label{eq:TFBer}
	\forall \vec{x}\in \F_{2}^{n}, \quad 2^{n}\;  \left|\widehat{\fberTrunc{p}}(\vec{x}) - \widehat{\fber{p}}(\vec{x})\right| = 2^{-\Omega(n)}.
	\end{equation} 

	\item[2.] Equation \eqref{eq:TFBer} together with Lemma \ref{lemma:LCodewords} are then used to show that:
	\begin{equation}\label{eq:step1} 
	\Delta\left(u,\fberTrunc{p}^{\CC}\right) \leq 2^n\sqrt{\sum_{t = \dmin(\dual{\CC})}^{n-\dmin(\dual{\CC})/2} \Neq{t}{\dual{\CC}} \widehat{,\fberTrunc{p}}(t)^{2}} + 2^{-\Omega(n)}.
	\end{equation} 	
	\item[3.] We use the two previous points to upper-bound $\Delta\left(u,\fber{p}^{\CC}\right)$ as in the  equation above 
          and conclude by using bounds of Propositions \ref{propo:ABL} and \ref{propo:2LPB}.
\end{itemize}

\noindent
{\bf Proof of Step 1.} 
As we explained above \eqref{eq:difference} is just Lemma \ref{lemma:gepsCodeFber}.
	Let us now prove that

\begin{lemma}\label{lemma:lemmfBerVSTrunc}
	We have
	$$
		\forall \vec{x}\in \F_{2}^{n}, \quad 2^{n}\;  \left|\widehat{\fberTrunc{p}}(\vec{x}) - \widehat{\fber{p}}(\vec{x})\right| = 2^{-\Omega(n)}.
	$$
\end{lemma}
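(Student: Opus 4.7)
The plan is to notice that the claim is essentially a direct consequence of Lemma \ref{lemma:gepsFber}, obtained by exchanging the $\ell^\infty$ bound on Fourier coefficients for the $\ell^1$ bound on the signal, via the trivial inequality $\|\widehat{h}\|_\infty \leq \frac{1}{2^n}\|h\|_1$ valid for any $h : \F_2^n \to \C$ (with our normalization of the Fourier transform on $\F_2^n$, as recalled in Figure~\ref{table:FT}).

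First, I would set $h \eqdef \fberTrunc{p} - \fber{p}$ and, applying the triangle inequality inside the defining sum
\[
\widehat{h}(\vec{x}) = \frac{1}{2^n}\sum_{\vec{y}\in\F_2^n} h(\vec{y})(-1)^{\vec{x}\cdot\vec{y}},
\]
observe that the bound
\[
2^n |\widehat{h}(\vec{x})| \leq \sum_{\vec{y}\in\F_2^n} |\fberTrunc{p}(\vec{y}) - \fber{p}(\vec{y})| = 2\,\Delta(\fberTrunc{p},\fber{p})
\]
holds uniformly in $\vec{x}$, since the character values all have modulus one. The right-hand side no longer depends on $\vec{x}$, which is exactly the form required for the statement of the lemma.

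Next, I would invoke Lemma \ref{lemma:gepsFber} (whose proof is itself a direct application of Chernoff's bound to quantify the mass that $\fber{p}$ places outside the interval $[(1-\varepsilon)pn,(1+\varepsilon)pn]$) to conclude that $2\,\Delta(\fberTrunc{p},\fber{p}) = 2^{-\Omega(n)}$, finishing the proof.

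There is no real obstacle here; the only thing to be careful about is the normalization convention for $\widehat{\cdot}$ on $\F_2^n$, which is why the factor $2^n$ appears on the left-hand side of the statement and cancels with the $\frac{1}{2^n}$ in the definition of $\widehat{h}$. The broader point, which matters for the subsequent steps 2--3 of the sketch, is that the pointwise closeness of the Fourier transforms is at least as strong as the closeness in statistical distance (scaled by $2^n$), so that when one plugs $\widehat{\fberTrunc{p}}$ in place of $\widehat{\fber{p}}$ in the bound of Proposition~\ref{propo:FBSDCod} one incurs only an additive $2^{-\Omega(n)}$ error, as needed to derive \eqref{eq:step1}.
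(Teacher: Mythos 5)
Your proof is correct, and it takes a genuinely different and cleaner route than the paper's. The paper expands both $\fber{p}$ and $\fberTrunc{p}$ as convex combinations of uniform sphere distributions $\unifs{r}$, applies linearity of the Fourier transform, bounds each $|\widehat{\unifs{r}}(\vec{x})|$ by $2^{-n}$, and then essentially re-derives the Chernoff tail estimate to control the coefficients; that is, it re-does inline much of the work already done for Lemma~\ref{lemma:gepsFber}. Your observation is that none of that structure is needed: the normalized triangle inequality $2^n\lvert\widehat{h}(\vec{x})\rvert \le \sum_{\vec{y}}\lvert h(\vec{y})\rvert$ applies to the difference $h = \fberTrunc{p}-\fber{p}$ directly, turning the right-hand side into $2\,\Delta(\fberTrunc{p},\fber{p})$, which is exactly the quantity that Lemma~\ref{lemma:gepsFber} already controls. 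What your approach buys is modularity (Lemma~\ref{lemma:gepsFber} is invoked as a black box rather than re-proved) and a one-line derivation that makes transparent why the $2^n$ normalization factor in the statement is the natural one; what the paper's approach buys is nothing extra in this instance, though its decomposition into sphere distributions is reused in Step~3 of the overall argument, so keeping those expansions visible is perhaps a stylistic choice to set up that later computation. In short: your argument is a valid, shorter proof of the same fact.
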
 

	\begin{proof} 
		Recall that $Z = \mathop{\sum}\limits_{|\vec{y}| = (1-\varepsilon)np}^{(1+\varepsilon)np} \fber{p}(\vec{y})$ where by Chernoff's bound, we have 
		\begin{equation}\label{eq:M}
			Z = 1 - 2^{-\Omega(n)}.
		\end{equation}
	Notice now that,
	$$
	\fber{p} = \sum_{r=0}^{n} \binom{n}{r}p^{r}(1-p)^{n-r} \unifs{r} \quad \mbox{and} \quad \fberTrunc{p} = \frac{1}{Z}\;\sum_{r=(1-\varepsilon)pn}^{(1+\varepsilon)pn} \binom{n}{r}p^{r}(1-p)^{n-r} \unifs{r}/
	$$
	Let $\mathcal{I} \eqdef \llbracket (1-\varepsilon)pn, (1+\varepsilon)pn \rrbracket$. Notice that $Z = \sum_{r\in \mathcal{I}} \binom{n}{r}p^{r}(1-p)^{n-r}$. By linearity of the Fourier transform we obtain the following computation:
	\begin{align}
		\left|\widehat{\fberTrunc{p}}(\vec{x}) - \widehat{\fber{p}}(\vec{x})\right| &= \left( \frac{1}{Z} - 1 \right) \sum_{r\in \mathcal{I}} \binom{n}{r}p^{r}(1-p)^{n-r} \left| \widehat{\unifs{r}}(\vec{x}) \right| \nonumber \\
		& \qquad\qquad\qquad\qquad+ \sum_{r\notin \mathcal{I}} \binom{n}{r}p^{r}(1-p)^{n-r} \left| \widehat{\unifs{r}}(\vec{x}) \right| \nonumber\\
		&= 2^{-\Omega(n)} \sum_{r \in \mathcal{I}}\binom{n}{r}p^{r}(1-p)^{n-r} \left| \widehat{\unifs{r}}(\vec{x}) \right|
+ 2^{-\Omega(n)} \max_{r} \left| \widehat{\unifs{r}}(\vec{x}) \right|
\label{ineq:truncBer}
	\end{align}
where in the last line we used Equation \eqref{eq:M}. 
	Recall now that by definition of the Fourier transform for functions over $\F_{2}^{n}$ we have:
	$$
	\left| \unifs{r}(\vec{x}) \right| = \left| \frac{1}{2^{n}} \sum_{\vec{y} : |\vec{y}|=r} \frac{(-1)^{\vec{x}\cdot\vec{y}}}{\binom{n}{r}} \right| \leq \frac{1}{2^{n}}.
	$$
	By plugging this in Equation \eqref{ineq:truncBer} we get:
		\begin{align*}
		\left|\widehat{\fberTrunc{p}}(\vec{x}) - \widehat{\fber{p}}(\vec{x})\right| &\leq  \frac{2^{-\Omega(n)}}{2^{n}} \underbrace{\sum_{r\in \mathcal{I}} \binom{n}{r}p^{r}(1-p)^{n-r}}_{\leq 1} + \frac{2^{-\Omega(n)}}{2^{n}} \\
&= \frac{2^{-\Omega(n)}}{2^{n}}
		\end{align*}
	which concludes the proof. 
	\end{proof} 

{\bf Proof of Step 2.}
This corresponds to proving the following lemma.
\begin{lemma}\label{lem:step1}
\begin{equation*}
		\Delta\left(u,\fberTrunc{p}^{\CC}\right) \leq 2^n\sqrt{\sum_{t = \dmin(\dual{\CC})}^{n-\dmin(\dual{\CC})/2} \Neq{t}{\dual{\CC}} \widehat{,\fberTrunc{p}}(t)^{2}} + 2^{-\Omega(n)}.
\end{equation*}
\end{lemma}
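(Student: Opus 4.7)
The starting point is to apply Proposition~\ref{propo:FBSDCod} directly to the distribution $\fberTrunc{p}$. This is valid because $\fber{p}$ is radial and the truncation in \eqref{eq:fTrunc} depends only on Hamming weight, so $\fberTrunc{p}$ is radial as well. Proposition~\ref{propo:FBSDCod} then gives
\[
\Delta\left(u,\fberTrunc{p}^{\CC}\right) \;\leq\; 2^{n}\sqrt{\sum_{t = \dmin(\dual{\CC})}^{n} \Neq{t}{\dual{\CC}}\,|\widehat{\fberTrunc{p}}(t)|^{2}}.
\]

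Next, I would split the sum at the threshold $t^{\star} \eqdef n-\dmin(\dual{\CC})/2$ suggested by Lemma~\ref{lemma:LCodewords}, writing the radicand as $S_{\textup{main}}+S_{\textup{tail}}$. Using the elementary subadditivity $\sqrt{a+b}\leq \sqrt a+\sqrt b$, the $S_{\textup{main}}$ term produces exactly the quantity in the statement of the lemma, so it only remains to show $2^{n}\sqrt{S_{\textup{tail}}}=2^{-\Omega(n)}$.

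For the tail, two facts combine. First, Lemma~\ref{lemma:LCodewords} implies $\Neq{t}{\dual{\CC}}\leq 1$ for every $t>t^{\star}$. Second, by Lemma~\ref{lemma:lemmfBerVSTrunc} and the identity $\widehat{\fber{p}}(t) = \frac{1}{2^n}(1-2p)^t$, the triangle inequality yields
\[
|\widehat{\fberTrunc{p}}(t)| \;\leq\; \frac{1}{2^{n}}\Bigl((1-2p)^{t} + 2^{-\Omega(n)}\Bigr).
\]
Since $p \in (0,1/2]$ is a constant, we have $1-2p<1$, and for $t>t^{\star}\geq n/2$ this gives $(1-2p)^{t}=2^{-\Omega(n)}$. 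Squaring and summing over the at most $n$ relevant weights produces $S_{\textup{tail}} \leq 2^{-2n}\cdot 2^{-\Omega(n)}$, whence $2^{n}\sqrt{S_{\textup{tail}}} = 2^{-\Omega(n)}$, as required.

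\textbf{Main obstacle.} The only non-routine point is to confirm that the pointwise Fourier-transform comparison provided by Lemma~\ref{lemma:lemmfBerVSTrunc}, when combined with the exponential decay of $(1-2p)^{t}$ for $t=\Theta(n)$, is strong enough to absorb both the outer factor $2^{n}$ and the polynomial number of tail terms. This works because the constant $p$ is bounded away from $0$ by hypothesis and because $\dmin(\dual{\CC})=\Theta(n)$ ensures $t\geq n/2$ throughout the tail, so the $(1-2p)^{t}$ factor is genuinely exponentially small. Everything else is essentially a bookkeeping exercise on top of Proposition~\ref{propo:FBSDCod}, Lemma~\ref{lemma:LCodewords}, and Lemma~\ref{lemma:lemmfBerVSTrunc}.
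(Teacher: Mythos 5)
Your proof is correct and follows essentially the same route as the paper: apply Proposition~\ref{propo:FBSDCod} to the radial distribution $\fberTrunc{p}$, cut the sum at $n-\dmin(\dual{\CC})/2$, bound the tail by combining Lemma~\ref{lemma:LCodewords} (at most one codeword per large weight) with Lemma~\ref{lemma:lemmfBerVSTrunc} and the closed form of $\widehat{\fber{p}}$. You are, if anything, slightly more explicit than the paper in naming the subadditivity $\sqrt{a+b}\le\sqrt a+\sqrt b$ and writing out the pointwise bound $\lvert\widehat{\fberTrunc{p}}(t)\rvert\le 2^{-n}\left((1-2p)^t+2^{-\Omega(n)}\right)$, which the paper leaves implicit; note also that the inequality $t^\star = n-\dmin(\dual{\CC})/2\ge n/2$ holds unconditionally (since $\dmin\le n$), so you do not actually need the hypothesis $\dmin(\dual{\CC})=\Theta(n)$ for the tail estimate, only that $p$ is a fixed positive constant.
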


	\begin{proof}
By applying Proposition \ref{propo:FBSDCod} to $\fberTrunc{p}$ we obtain
	\begin{equation}\label{eq:boundfBerTrunc1}
	\Delta\left(u,\fberTrunc{p} ^{\CC}\right) \leq 2^{n} \sqrt{\sum_{t = \dmin(\dual{\CC})}^{n} \Neq{t}{\dual{\CC}}|\widehat{\fberTrunc{p}}(t)|^{2}}
	\end{equation}
	where $\widehat{\fberTrunc{p}}(t)$ denotes the common value of the radial function $\widehat{\fberTrunc{p}}$ on vectors of Hamming weight $t$.
	Recall now that $\widehat{\fber{p}}(\vec{x}) = \frac{1}{2^{n}}\; (1-2p)^{|\vec{x}|}$ and by Lemma \ref{lemma:lemmfBerVSTrunc} that $2^{n}\;  \left|\widehat{\fberTrunc{p}}(\vec{x}) - \widehat{\fber{p}}(\vec{x})\right| = 2^{-\Omega(n)}$. Therefore, 
	$$
	\forall \vec{x}\in\F_{2}^{n}, \mbox{ } |\vec{x}|\geq n-\frac{\dmin(\dual{\CC})}{2} \quad \mbox{:} \quad  2^{n}\; \left|\widehat{\fberTrunc{p}}(\vec{x})\right| = 2^{-\Omega(n)}.
	$$
	By plugging this in Equation \eqref{eq:boundfBerTrunc1} we obtain (as there is at most one dual codeword of weight $\ell$ for each $\ell >n-\dmin(\dual{\CC})/2$, see Lemma \ref{lemma:LCodewords})
		\begin{equation}\label{eq:boundfBerTrunc2}
		\Delta\left(u,\fberTrunc{p} ^{\CC}\right) \leq 2^{n} \sqrt{\sum_{t = \dmin(\dual{\CC})}^{n - \dmin(\dual{\CC})/2} \Neq{t}{\dual{\CC}}|\widehat{\fberTrunc{p}}(t)|^{2}} + 2^{-\Omega(n)}
	\end{equation}
which concludes the proof. 
\end{proof}

{\bf Proof of Step 3.}
We finish the proof of Theorem \ref{theo:finalUBSD} by noticing that
	$$
	\fberTrunc{p} = \frac{1}{Z}\;\sum_{\ell=(1-\varepsilon)pn}^{(1+\varepsilon)pn} \binom{n}{\ell}p^{\ell}(1-p)^{n-\ell} \unifs{\ell}
	$$
	where $Z \eqdef  \mathop{\sum}\limits_{|\vec{y}| = (1-\varepsilon)np}^{(1+\varepsilon)np} \fber{p}(\vec{y}) = 1 - 2^{-\Omega(n)}$ by Chernoff's bound. Therefore, 
	$$
	\widehat{\fberTrunc{p}} = \left( 1+ 2^{-\Omega(n)}\right)\;\sum_{\ell=(1-\varepsilon)pn}^{(1+\varepsilon)pn} \binom{n}{\ell}p^{\ell}(1-p)^{n-\ell} \;\widehat{\unifs{\ell}}.
	$$
	By plugging this in Equation \eqref{eq:boundfBerTrunc2} and using  $\widehat{\unifs{\ell}} = \frac{1}{2^{n}} \; \frac{K_{\ell}}{\binom{n}{\ell}}$ we obtain
	$$
		\Delta\left(u,\fberTrunc{p}^{\CC}\right) \leq  \left( 1+2^{-\Omega(n)}\right) \; \sqrt{\sum_{t = \dmin(\dual{\CC})}^{n-\dmin(\dual{\CC})/2} \Neq{t}{\dual{\CC}}\left( \sum_{\ell = (1-\varepsilon)pn}^{(1+\varepsilon)pn} p^{\ell}(1-p)^{n-\ell} K_{\ell}(t) \right)^{2}} + 2^{-\Omega(n)}.
	$$

We then use in the righthand term, Propositions \ref{propo:ABL}, \ref{propo:2LPB} which give bounds on the $\frac{1}{n} \; \log_{2} N_{\ell}(\dual{\CC})$'s (where $\dmin(\dual{\CC}) \geq \dual{\delta}n$) and Proposition \ref{prop:expansion} which gives an asymptotic expansion of Krawtchouk polynomials to upper-bound $\Delta\left(u,\fberTrunc{p}^{\CC}\right)$. We finish the proof of the theorem by using 
	this upper-bound in the righthand term of \eqref{eq:difference}.

 	\section{Proof of Proposition~\ref{propo:gauss-to-unif}}\label{app:GaussianUnif}

Our aim in this section is to prove the following proposition.

\GaussianVSUnif*

It will be a consequence of the following lemmas. We begin with the following result decomposing the Gaussian as a convex combination of balls. 
\begin{lemma}
	\label{lemma:gaussian_convex_combi_ball}
	The Gaussian distribution in dimension $n$ of parameter $s$ is the following convex combination of uniform distributions over balls:
	\[ D_s = \frac {1}{s} \int_{0}^\infty G_n(w/s) \; \gunif{w}\, dw \]
	where $G_{n}(x) = x^{n+1} \; \vol{1} \; 2\pi\; \exp\left(-\pi x^2\right) \geq 0$. Furthermore, we have $\frac{1}{s}\int_{0}^{\infty}G_n(w/s) \, dw = 1$.
\end{lemma}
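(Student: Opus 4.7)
The plan is to verify the integral identity pointwise by an elementary computation, and then deduce the normalization as a free consequence of the fact that both $D_s$ and every $\gunif{w}$ are probability densities.

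First I would fix $\vec{x} \in \R^n$ and set $r = |\vec{x}|_2$. Since $\gunif{w}(\vec{x}) = 1_{\Bc_w}(\vec{x})/\vol{w}$ and $\vol{w} = \vol{1}\, w^n$, this contribution is $1/(\vol{1}\, w^n)$ when $w \geq r$ and zero otherwise. Substituting the definition of $G_n$ into the claimed convex combination, the factor of $\vol{1}$ cancels and powers of $w$ simplify, reducing the right-hand side to
\[
\frac{2\pi}{s^{n+2}} \int_{r}^{\infty} w \exp\!\left(-\pi w^2/s^2\right) dw.
\]
The substitution $u = \pi w^2/s^2$ turns this into $\frac{1}{s^n}\int_{\pi r^2/s^2}^{\infty} e^{-u}\, du = \frac{1}{s^n}\exp(-\pi r^2/s^2)$, which is exactly $D_s(\vec{x})$. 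This proves the identity.

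For the second claim, I would integrate the identity over $\vec{x} \in \R^n$. Non-negativity of $G_n$ and $\gunif{w}$ justifies applying Fubini to exchange the $w$- and $\vec{x}$-integrals. Since $\int_{\R^n}\gunif{w}\,d\vec{x} = 1$ for every $w>0$ and $\int_{\R^n} D_s\, d\vec{x} = 1$, one immediately obtains $\frac{1}{s}\int_0^\infty G_n(w/s)\,dw = 1$.

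There is no real obstacle here: the whole argument is an elementary one-dimensional integral combined with an application of Fubini. The only point worth noting is that the correctness of the prefactor $x^{n+1}\vol{1}\,2\pi$ in the definition of $G_n$ can be read off as exactly what is needed for the cancellation against $\vol{w} = \vol{1}\, w^n$ and for producing the derivative $2\pi w/s^2$ that matches the chain rule in the exponential; no tail estimates or asymptotic analysis are required.
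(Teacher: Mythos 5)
Your proof is correct. For the main identity, you verify the right-hand side pointwise by a direct substitution, whereas the paper goes in the opposite direction: it writes $D_s(\vec x) = g_s(|\vec x|_2)$ and uses the fundamental theorem of calculus, $g_s(u) = \int_u^\infty (-g_s')(w)\,dw$, then recognizes $-g_s'(w)\,\vol{w} = \tfrac1s G_n(w/s)$. These are essentially the same computation, just read in different directions, so there is little to choose between them. Where your argument genuinely differs is the normalization: you integrate the identity over $\R^n$, invoke Fubini (justified by non-negativity), and use that $\gunif{w}$ and $D_s$ are both probability densities to conclude $\tfrac1s\int_0^\infty G_n(w/s)\,dw = 1$ for free, while the paper instead computes this integral directly via the substitution $t = \pi(w/s)^2$ and the Gamma-function identity $\int_0^\infty t^{n/2}e^{-t}\,dt = \Gamma(n/2+1)$. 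Your route is slicker, avoiding the explicit $\Gamma$-computation once the identity is in hand; the paper's route has the minor virtue of being self-contained (it never needs to appeal to Fubini or to the identity having been established first), and of making the constant $\vol{1}\,2\pi$ in $G_n$ visibly emerge from the $\Gamma$-function normalization. Both are sound.
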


\begin{proof}
	First, let $g_s(w) \eqdef \frac{1}{s^n}\; \exp\left(-\pi\tfrac{w^2}{s^2}\right)$ ({\em i.e.} the value the probability density function $D_s$ takes on vectors of weight $w$) and denote $h_s(w) = -g_s'(w) = \frac{2\pi w}{s^{n+2}} \; \exp\left(-\pi\tfrac{w^2}{s^2}\right)$. For any $\vec x \in \R^n$, setting $u = |\vec x|_2$, as $\lim_{w \to \infty} g_s(w)=0$ we have
	\begin{align*}
		D_s(\vec x) &= g_s(u) = \int_{u}^\infty h_s(w) \, dw = \int_{0}^{\infty}h_s(w)\; 1\{u \leq w\} \ dw = \int_{0}^{\infty}h_s(w) \; 1_{\mathcal{B}_{w}}(\vec x)\ dw \ .
	\end{align*}
	Above, we denoted by $1\{u\leq w\}$ the function which takes value $1$ on input $w$ if $u \leq w$, and $0$ otherwise. To conclude, note that $\frac{1}{s}\; G_n(w/s) = h_s(w) \; \vol{w}$ and recall $\gunif{w} = \frac{1_{\mathcal B_w}}{\vol{w}}$. 
	
	For the ``furthermore'' part of the lemma, we compute  
	\begin{align} \label{eq:int-pre-sub}
		\frac{1}{s}\int_{0}^{\infty}G_n(w/s) \, dw = \frac{1}{s}\int_{0}^{\infty} (w/s)^{n+1} \; \vol{1} \; 2\pi\; \exp\left(-\pi (w/s)^2\right)\, dw \ .
	\end{align}
	We make the substitution $t = \pi \left(\frac{w}{s}\right)^2$, which means $dw = \frac{s^2\, dt}{2\pi w} = \frac{s}{2\sqrt{t\pi}}\, dt$. Also, we recall $\vol{1} = \frac{\pi^{n/2}}{\Gamma(n/2+1)}$. Thus, 
	\begin{align*}
		\frac{1}{s}\int_{0}^{\infty}G_n(w/s) \, dw &= \frac{1}{s}\; \frac{\pi^{n/2}}{\Gamma(n/2+1)}\int_{0}^{\infty} \left(\frac{t}{\pi}\right)^{(n+1)/2} \; 2\pi \; e^{-t} \; \frac{s}{2\sqrt{t\pi}}\, dt \\
		&= \frac{1}{\Gamma(n/2+1)}\int_{0}^{\infty}t^{n/2} \; e^{-t} \; dt = \frac{\Gamma(n/2+1)}{\Gamma(n/2+1)} = 1
\end{align*}
	which concludes the proof. 
\end{proof}

	We now quote the following bound, which makes precise the intuition that it is exponentially unlikely that a random Gaussian vector has norm $(1-\eta)$ factor smaller than its expected norm. This result provides the analogy for the Chernoff bound that we used for the code-case. 

\begin{lemma} [{\cite[Example 2.5]{W19}}]\label{propo:gaussian-tail-bound}
	Let $\vec X$ be a random Gaussian vector of dimension $n$ and parameter $1$. Let $0 < \eta < 1$. Then
	\[
	\mathbb{P}\left(|\vec X|_{2}^2 \leq (1-\eta)\;\frac{n}{2\pi}\right) \leq \exp(-\frac{\eta^2}{8} \; n).
	\]
\end{lemma}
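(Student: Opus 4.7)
The statement is a standard lower-tail bound for the squared norm of a Gaussian vector, and the natural approach is the Chernoff/Cram\'er method applied to a chi-squared random variable. My plan is to first normalize away the $2\pi$, then apply Chernoff with the chi-squared moment generating function, optimize the Chernoff parameter, and finally Taylor-expand the resulting rate function.

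Concretely, since $\vec X$ has density proportional to $\exp(-\pi |\vec x|_2^2)$, each coordinate $X_i$ is centered Gaussian with variance $1/(2\pi)$. Setting $Z_i \eqdef \sqrt{2\pi}\,X_i$, the $Z_i$ are i.i.d.\ standard normals, and the event $\{|\vec X|_2^2 \leq (1-\eta)n/(2\pi)\}$ rewrites as $\{\sum_{i=1}^n Z_i^2 \leq (1-\eta)n\}$. For any parameter $\lambda > 0$, Markov's inequality applied to $\exp(-\lambda \sum_i Z_i^2)$ yields
\[
\mathbb{P}\!\left(\sum_{i=1}^n Z_i^2 \leq (1-\eta)n\right) \;\leq\; e^{\lambda (1-\eta) n} \; \prod_{i=1}^n \mathbb{E}\bigl[e^{-\lambda Z_i^2}\bigr] \;=\; e^{\lambda(1-\eta)n}\,(1+2\lambda)^{-n/2},
\]
using the standard identity $\mathbb{E}[e^{-\lambda Z_i^2}] = (1+2\lambda)^{-1/2}$ valid for $\lambda > -1/2$.

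Optimizing the exponent $\lambda(1-\eta) - \tfrac12 \log(1+2\lambda)$ over $\lambda > 0$ gives $\lambda^\star = \eta/(2(1-\eta))$, which is positive because $\eta \in (0,1)$. Substituting back yields
\[
\mathbb{P}\!\left(\sum_{i=1}^n Z_i^2 \leq (1-\eta)n\right) \;\leq\; \exp\!\Bigl(\tfrac{n}{2}\bigl(\eta + \log(1-\eta)\bigr)\Bigr).
\]
To finish, I would Taylor-expand: for $\eta \in (0,1)$,
\[
\eta + \log(1-\eta) \;=\; -\sum_{k \geq 2} \frac{\eta^k}{k} \;\leq\; -\frac{\eta^2}{2} \;\leq\; -\frac{\eta^2}{4},
\]
which in particular gives the desired bound $\exp(-\eta^2 n/8)$ (and in fact something stronger by a factor of 2 in the exponent).

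There is essentially no ``hard part'': the argument is routine Chernoff for a chi-squared variable, and the only thing to verify carefully is the elementary inequality $\eta + \log(1-\eta) \leq -\eta^2/4$ on $(0,1)$, which follows from the sign-definite Taylor series above. The slack between $-\eta^2/2$ and the claimed $-\eta^2/4$ explains why the statement in the paper is comfortably non-tight, so no optimization subtleties arise.
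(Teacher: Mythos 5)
Your proof is correct. The paper does not prove this lemma at all --- it simply cites \cite[Example 2.5]{W19} --- and your Chernoff/Cram\'er computation for the chi-squared lower tail is exactly the standard derivation underlying that reference: the normalization $Z_i = \sqrt{2\pi}\,X_i$ is right for the density $e^{-\pi|\vec x|_2^2}$, the moment generating function identity, the optimizer $\lambda^\star = \eta/(2(1-\eta))$, and the resulting rate $\tfrac{1}{2}(\eta+\log(1-\eta)) \leq -\eta^2/4$ all check out, so you in fact obtain the stronger bound $\exp(-\eta^2 n/4)$; the constant $1/8$ in the statement merely reflects the looser two-sided sub-exponential bound quoted from \cite{W19}.
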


This lemma allows us to prove the following lemma bounding $\frac{1}{s}\int_{0}^{\overline w}G_n(w/s)dw$ when $\overline w < s \; \sqrt{n/(2\pi)}$. 

\begin{lemma} \label{lem:bound_G_n}
	Let $\eta \in (0,1)$ and $\overline w = \sqrt{1-\eta} \; s \; \sqrt{n/(2\pi)}$. Then
	\[
	\frac{1}{s}\int_{0}^{\overline w}G_n(w/s)dw \leq 
\exp(-\frac{\eta^2}{8} \; n) \ .
	\]
\end{lemma}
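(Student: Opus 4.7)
The plan is to interpret $\frac{1}{s}\int_0^{\overline{w}} G_n(w/s)\, dw$ probabilistically by using the decomposition from Lemma~\ref{lemma:gaussian_convex_combi_ball}, and then to reduce the desired inequality to an application of the Gaussian tail bound of Lemma~\ref{propo:gaussian-tail-bound}. Concretely, I will exhibit the Gaussian $D_s$ as the law of the output of a two-stage sampler: first draw a random radius $W \in [0,\infty)$ with density $w \mapsto \frac{1}{s}G_n(w/s)$ (which is indeed a density by the ``furthermore'' part of Lemma~\ref{lemma:gaussian_convex_combi_ball}); then, conditionally on $W$, draw $\vec X$ uniformly from the Euclidean ball $\mathcal{B}_W$. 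By Lemma~\ref{lemma:gaussian_convex_combi_ball} this $\vec X$ is distributed according to $D_s$.

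The crucial geometric observation is that, by construction, $|\vec X|_2 \le W$ almost surely, so the event $\{W \le \overline{w}\}$ is contained in the event $\{|\vec X|_2 \le \overline{w}\}$. Therefore
\[
\frac{1}{s}\int_0^{\overline{w}} G_n(w/s)\, dw \;=\; \mathbb{P}(W \le \overline{w}) \;\le\; \mathbb{P}\bigl(|\vec X|_2 \le \overline{w}\bigr).
\]
It then remains to control the right-hand side. Since $\vec X \sim D_s$, the rescaled vector $\vec X/s$ is a Gaussian of parameter $1$, so Lemma~\ref{propo:gaussian-tail-bound} applied to $\vec X/s$ with the given $\eta$ yields
\[
\mathbb{P}\!\left(|\vec X|_2^{\,2} \le (1-\eta)\,\frac{s^2 n}{2\pi}\right) \;\le\; \exp\!\left(-\frac{\eta^2}{8}\,n\right),
\]
which, by the definition $\overline{w} = \sqrt{1-\eta}\,s\sqrt{n/(2\pi)}$, is precisely $\mathbb{P}(|\vec X|_2 \le \overline{w}) \le \exp(-\eta^2 n/8)$.

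The only subtle point is making sure the inclusion of events runs in the correct direction: conditioning on the radius $W$, the sample $\vec X$ lies in $\mathcal{B}_W$, so small $W$ forces small $|\vec X|_2$ (not the reverse), which is exactly what is needed to \emph{upper} bound $\mathbb{P}(W \le \overline{w})$ by a Gaussian tail probability. There is no real obstacle here beyond recognizing this decomposition; everything else is bookkeeping, and one could alternatively carry out the same argument by the explicit substitution $x=w/s$ in the integral and recognizing the resulting integrand (up to a $\Gamma$-function normalization) as the density of $|\vec X|_2$ for $\vec X \sim D_1$, but the probabilistic route above is cleaner and reuses Lemma~\ref{lemma:gaussian_convex_combi_ball} and Lemma~\ref{propo:gaussian-tail-bound} as black boxes.
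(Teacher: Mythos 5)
Your proof is correct, and it is a genuinely different (and cleaner) route than the one in the paper. The paper proceeds computationally: after the substitution $u=w/s$ it writes $\tfrac{1}{s}\int_0^{\overline w}G_n(w/s)\,dw = V_n(1)\,2\pi\int_0^{\overline u}u^{n+1}e^{-\pi u^2}\,du$, bounds $u^{n+1}\le \overline u^2 u^{n-1}$ on the interval, rewrites the resulting $u^{n-1}$ integral as a ball integral of the standard Gaussian via $A_{n-1}(1)$, and only then invokes the tail bound (Lemma~\ref{propo:gaussian-tail-bound}). This yields an extra factor $1-\eta\le 1$ that it then drops. Your argument avoids all this bookkeeping: you recognize $\tfrac{1}{s}G_n(\cdot/s)$ as the density of the mixing radius $W$ in the decomposition of Lemma~\ref{lemma:gaussian_convex_combi_ball}, note that the sampled point $\vec X$ (which has law $D_s$) satisfies $|\vec X|_2\le W$ almost surely, and hence $\{W\le \overline w\}\subseteq\{|\vec X|_2\le\overline w\}$, so the quantity to bound is at most $\mathbb P(|\vec X|_2\le\overline w)$, which is exactly the left-hand side in Lemma~\ref{propo:gaussian-tail-bound} after rescaling. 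The event-inclusion step is verified correctly (small $W$ forces small $|\vec X|_2$, which is the direction needed). What you lose relative to the paper's explicit computation is the harmless extra factor $1-\eta$; what you gain is that the proof needs no surface-area/volume identities and uses the two preceding lemmas entirely as black boxes, making the dependence structure of the argument transparent.
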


\begin{proof}
	Let $\overline u \eqdef  \sqrt{1-\eta} \; \sqrt{n/(2\pi)}$. By Lemma~\ref{propo:gaussian-tail-bound}, if $\vec X$ denotes a random Gaussian vector of dimension $n$ and parameter $1$, we have 
	\begin{equation} \label{ineq:ini}
		\int_{0 \leq |\vec x|_{2} \leq \overline u}\exp(-\pi |\vec x|_{2}^2) \ d\vec x =  \mathbb{P}\left(|\vec X|_{2}^2 \leq (1-\eta)\;\frac{n}{2\pi}\right) \leq \exp(-\frac{\eta^2}{8} \; n).
	\end{equation} 
	To compute this last integral, note that 
	\begin{align}\label{eq:1} 
		\int_{0 \leq |\vec x|_{2} \leq \overline u}\exp(-\pi |\vec x|_{2}^2) \ d\vec x &= \int_0^{\overline u} \int_{u\mathcal{S}^{n-1}} e^{-\pi u^2}dA du \ ,
	\end{align}
	where $u\mathcal{S}^{n-1}$ denotes the Euclidean sphere of radius $u$ and $dA$ is the area element. If $A_{n-1}(u)$ denotes the surface area of $u\mathcal{S}^{n-1}$, then $A_{n-1}(u) = u^{n-1} A_{n-1}(1)$ and thus
	\begin{align}\label{eq:2} 
		\int_0^{\overline u} \int_{u\mathcal{S}^{n-1}} e^{-\pi u^2}dA du = A_{n-1}(1)\int_{0}^{\overline u}u^{n-1}\exp(-\pi u^2) \, du \ .
	\end{align}
	Further, it is known that $A_{n-1}(1) = \frac{2\pi^{n/2}}{\Gamma(n/2)}$. Therefore, plugging Equations \eqref{eq:1} and \eqref{eq:2} into \eqref{ineq:ini} leads to
	\begin{equation}\label{ineq:R} 
	\int_{0}^{\overline u}u^{n-1}\exp(-\pi u^2) \, du \leq \frac{1}{A_{n-1}(1)} \; \exp(-\frac{\eta^2}{8} \; n) \ .
	\end{equation}
	
	Now, we look at the left-hand side of the inequality we wish to prove. We begin by making the substitution $u = w/s$. So then $dw = s\;du$. Moreover, let $\overline u \eqdef \sqrt{1-\eta} \; \sqrt{n/(2\pi)}$ and note that when $w = \overline w$ we have $u = \overline w/s = \sqrt{1-\eta} \; \sqrt{n/(2\pi)} = \overline u$. 
	\begin{align*}
		\frac{1}{s}\int_{0}^{\overline w} G_n(w/s) \ dw &= \int_{0}^{\overline u}G_n(u) \ du \\
		&= \vol{1} \; 2\pi \int_{0}^{\bar u}  u^{n+1} \; \exp(-\pi u^2) \ du \\
		&\leq \vol{1} \; 2\pi\bar{u}^2 \int_{0}^{\bar u} u^{n-1} \exp(-\pi u^2) \ du\ .
	\end{align*}
	Plugging this last inequality with \eqref{ineq:R} yields
	\begin{align*}
		\frac{1}{s}\int_{0}^{\overline w} G_n(w/s) \ dw \leq \frac{\vol{1} 2\pi \; \overline{u}^2}{A_{n-1}(1)} \exp(-\frac{\eta^2}{8} \; n) \ .
	\end{align*}
	To conclude the proof, note that $V_n(1)=\int_0^1 A_{n-1}(u) du= \int_0^1 u^{n-1} A_{n-1}(1) du=\frac{A_{n-1}(1)}{n}$ and therefore 
	\[
	\frac{\vol{1} 2\pi \; \overline{u}^2}{A_{n-1}(1)} = \frac{2 \pi (1-\eta)n}{2 \pi n}=1-\eta \leq 1.
\]
It concludes the proof. 
\end{proof}

We are now ready to prove Proposition \ref{propo:gauss-to-unif}.

\begin{proof}[Proof of Proposition \ref{propo:gauss-to-unif}.]
	
		By Lemma~\ref{lemma:gaussian_convex_combi_ball}, $D_s$ is a convex combination of uniform distribution over balls, namely
$D_s = \frac{1}{s}\int_0^{\infty}G_n(w/s) \; \gunif{w} \;dw$. 
Therefore (we use here the analogue of Lemma \ref{lemma:statIneqCvx} in the context of the statistical distance between two probability density functions)
		\[
		\mathbb{E}_{\Lambda}\left(\Delta(u,D_s^{\Lambda})\right) \leq \frac{1}{s}\int_0^{\infty} G_n(w/s)\; \mathbb{E}_{\Lambda}\left(\Delta(u,\gunif{w}^\Lambda)\right) dw.
		\]
		We split the integral in two parts at radius $\overline w = \sqrt{1-\eta} \; s \; \sqrt{n/(2\pi)}$. For the first part $w \leq \overline w$, we use the trivial bound $\mathbb{E}_{\Lambda}\left(\Delta(u,\gunif{w}^\Lambda)\right) \leq 1$ which gives:
		\[
		\frac{1}{s}\int_0^{\overline w} G_n(w/s)\; \mathbb{E}_{\Lambda}\left(\Delta(u,\gunif{w}^\Lambda)\right) dw \leq \frac{1}{s} \int_0^{\overline w} G_n(w/s)dw .
		\]
		We then apply Lemma~\ref{lem:bound_G_n}, which bounds this part by $\exp(-\frac{\eta^2}{8} \; n)$. 
		
		For the second part $w \geq \overline w$, we use the trivial bound $\frac 1s \int_{\overline w}^{\infty}G_n(w/s)dw \leq 1$ and, noting 
		\[
		w \geq \overline w = \sqrt{1-\eta} \; s \; \sqrt{n/(2\pi)} = \frac{1}{\sqrt{1-\eta}} \; s_0 \; \sqrt{n/(2\pi)} > s_0 \; \sqrt{n/(2\pi)} = w_0,
		\]
		we may apply the assumption of the proposition, yielding
		\begin{align*}
			\mathbb{E}_{\Lambda}\left(\Delta(u,\gunif{w}^{\Lambda})\right) &\leq f(n)\left(\frac{w_0}{w}\right)^{n/2} \leq f(n) \left(\frac{w_0}{\overline w}\right)^{1/2} = f(n) \left(\sqrt{1-\eta}\right)^{n/2} = f(n) \left(\frac{s_0}{s}\right)^{n/4}.
		\end{align*}	
		Adding these bounds yields the proposition. 
\end{proof}

\end{document}